\documentclass{article}

\usepackage{microtype}
\usepackage{graphicx}
\usepackage{subcaption}
\usepackage{booktabs} % for professional tables

\usepackage{hyperref}

\usepackage[accepted]{icml2026}

\usepackage{amsmath}
\usepackage{amssymb}
\usepackage{mathtools}
\usepackage{amsthm}

\usepackage[capitalize,noabbrev]{cleveref}

\theoremstyle{plain}
\newtheorem{theorem}{Theorem}[section]

\newtheorem{lemma}[theorem]{Lemma}

\theoremstyle{definition}
\newtheorem{definition}[theorem]{Definition}

\theoremstyle{remark}

\newcommand\A{\mathcal{A}}
\newcommand\B{\mathcal{B}}

\newcommand\M{\mathcal{M}}

\newcommand\I{\mathbb{I}}
\newcommand\N{\mathbb{N}}
\newcommand\R{\mathbb{R}}

\newcommand\X{\mathcal{X}}

\newcommand\lap{\mathrm{Lap}}
\newcommand\adv{\mathtt{Adv}}

\newcommand\dis{\mathrm{dis}}

\newcommand\maxsum{\mathsf{MaxSum}}
\newcommand\minsum{\mathsf{MinSum}}
\newcommand\argmaxsum{\mathsf{MaxSelect}}
\newcommand\argminsum{\mathsf{MinSelect}}
\newcommand\abovethreshold{\mathsf{AboveThreshold}}

\usepackage{multirow}

\usepackage[textsize=tiny]{todonotes}

\begin{document}

\twocolumn[
  \icmltitle{Differentially Private Continual Release with Relative Error}

  % It is OKAY to include author information, even for blind submissions: the
  % style file will automatically remove it for you unless you've provided
  % the [accepted] option to the icml2026 package.

  % List of affiliations: The first argument should be a (short) identifier you
  % will use later to specify author affiliations Academic affiliations
  % should list Department, University, City, Region, Country Industry
  % affiliations should list Company, City, Region, Country

  % You can specify symbols, otherwise they are numbered in order. Ideally, you
  % should not use this facility. Affiliations will be numbered in order of
  % appearance and this is the preferred way.
  \icmlsetsymbol{equal}{*}

  \begin{icmlauthorlist}
    \icmlauthor{Bo Li}{yyy,comp}
    \icmlauthor{Wei Wang}{yyy}
    \icmlauthor{Peng Ye}{yyy}
  \end{icmlauthorlist}

  \icmlaffiliation{yyy}{Department of Computer Science and Engineering, The Hong Kong University of Science and Technology, Hong Kong SAR, China}
  \icmlaffiliation{comp}{Guangzhou HKUST Fok Ying Tung Research Institute, Guangzhou, China}
  %\icmlaffiliation{sch}{School of ZZZ, Institute of WWW, Location, Country}

  \icmlcorrespondingauthor{Peng Ye}{pyeac@connect.ust.hk}

  % You may provide any keywords that you find helpful for describing your
  % paper; these are used to populate the "keywords" metadata in the PDF but
  % will not be shown in the document
  \icmlkeywords{Machine Learning, ICML}

  \vskip 0.3in
]

% this must go after the closing bracket ] following \twocolumn[ ...

% This command actually creates the footnote in the first column listing the
% affiliations and the copyright notice. The command takes one argument, which
% is text to display at the start of the footnote. The \icmlEqualContribution
% command is standard text for equal contribution. Remove it (just {}) if you
% do not need this facility.

% Use ONE of the following lines. DO NOT remove the command.
% If you have no special notice, KEEP empty braces:
\printAffiliationsAndNotice{}  % no special notice (required even if empty)
% Or, if applicable, use the standard equal contribution text:
% \printAffiliationsAndNotice{\icmlEqualContribution}

\begin{abstract}
  This work investigates several fundamental tasks, including $\mathsf{MaxSum}$, $\mathsf{MinSum}$, $\mathsf{MaxSelect}$, and $\mathsf{MinSelect}$, in the continual release model under differential privacy. Previous research has demonstrated that any algorithm for these tasks must admit a large purely \emph{additive error}. We show that the error can be substantially reduced if a \emph{relative error} term is allowed, provided that the input stream is generated \emph{non-adaptively}. However, when input data records can be selected \emph{adaptively}, we prove that a large error is inevitable for the task of selecting an attribute with a small cumulative sum, whereas small error bounds remain achievable for other tasks. This reveals a significant separation between non-adaptive and adaptive streams. We also complement our algorithms with nearly matching lower bounds.
\end{abstract}

\section{Introduction}

In today's data-driven world, the demand for privacy-preserving algorithms has become increasingly critical. As organizations gather and analyze vast amounts of personal data, ensuring individual privacy while still gaining valuable information poses a significant challenge. Differential privacy (DP)~\citep{dwork2006calibrating,dwork2006our} has emerged as a leading framework for addressing this challenge, providing a rigorous mathematical definition of privacy that enables the release of useful information without compromising individual privacy.

Most research on differential privacy has focused on the \emph{batch model}, in which a mechanism aggregates all input data and produces a single output. While this model effectively ensures privacy for static datasets, it fails to capture dynamic environments with continuously generated data. The \emph{continual release} model~\citep{dwork2010differential,chan2011private} was proposed to overcome this limitation. In this model, data records arrive as a stream, and the mechanism is required to periodically update its output as new records are received, with privacy constraints being imposed on outputs over all time-steps.

The central challenge in the continual release model is that a single input record can influence outputs across multiple time-steps, enlarging the chance of privacy leakage. A substantial gap between the batch and continual release models was confirmed by~\citet{jain2023price}, who demonstrated that for certain fundamental tasks, transitioning from the batch model to the continual release model results in an unavoidable exponential increase in error.

The above hardness results hold for purely \emph{additive error}. Motivated by the recent trend of studying differentially private algorithms with \emph{relative error} guarantees, this work attempts to circumvent the barrier by incorporating an additional relative error term. In practice, relative error could be more meaningful than additive error since it scales with the magnitude of the true answer. When the true answer is extremely small, it could be completely overwhelmed by an additive error, leading to noninformative results. When the true answer is large, an additive error can be much smaller than it, rendering an overkill. In contrast, a relative error consistently offers an accurate estimate of the magnitude.

In many scenarios, input data records may depend on the algorithm's previous answers. For example, consider a platform that publishes real-time traffic conditions in a city. Drivers may decide their routes based on this information, thus affecting future traffic. This \emph{adaptive} setting can be formalized by assuming an adversary who selects input records based on the algorithm's previous outputs. In this context, achieving accuracy is more challenging than in the \emph{non-adaptive} setting, where the input stream is predetermined. However,~\citet{jain2023price} established their upper bounds in the adaptive setting and their lower bounds in the non-adaptive setting. Therefore, their results do not reveal any difference between these two settings. They raised the question if it is possible to find a problem that distinguishes between non-adaptive and adaptive streams.

\subsection{Results}

In our setup, the input is a stream consisting of $T$ vectors in $[0,1]^d$, which can be seen as $T$ records with $d$ attributes. The algorithm is required to estimate some statistics about the sum of the first $t$ vectors after receiving the $t$-th vector for every $t\in[T]$. In particular, we study the problems of approximately releasing and selecting the maximum across all attributes, denoted by $\maxsum$ and $\argmaxsum$. We also consider analogous problems for the minimum, referred to as $\minsum$ and $\argminsum$. The formal definitions of these problems are provided in Section~\ref{sec:problem}.

\paragraph{Improved algorithms with relative error} We propose algorithms for these tasks (except $\argminsum$ with adaptive inputs) that achieve $\mathrm{polylog}(d,T)$ additive error when a constant relative error is allowed, significantly improving over the optimal purely additive error established by~\citet{jain2023price}.\footnote{Although~\citet{jain2023price} only considered $\maxsum$ and $\argmaxsum$, these two problems are equivalent to $\minsum$ and $\argminsum$ for purely additive error, as replacing each record $x_i$ with $1-x_i$ converts the minimum to the maximum.} As a direct implication, our algorithms exhibit much better performance in the sparse setting where the true answer is $\mathrm{polylog}(d,T)$.

Our algorithms and analysis assume the input range is $[0,1]$. This can be generalized to $[0,M]$ for any $M> 0$, in which case the final error bound scales by a factor of $M$. However, if negative inputs are allowed, our upper bound no longer apply. In fact, by adapting the proof strategy of~\citet{jain2023price}, one can show that the additive error cannot be improved even when relative error is considered. A more detailed explanation is provided in Appendix~\ref{app:negative}.

\paragraph{Nearly tight lower bounds} Alongside our algorithms, we provide nearly matching lower bounds. As a consequence, our results characterize the error for all these tasks up to poly-logarithmic factors. The error bounds for all tasks are summarized in Table~\ref{sample-table}.

\paragraph{Separation between non-adaptive and adaptive streams} While a $\mathrm{polylog}(d,T)$ error is attainable for $\argminsum$ with non-adaptive inputs, we prove that including a relative error does not yield improvements for adaptively selected input records---the best achievable error remains nearly the same as the purely additive error proved in~\citep{jain2023price}. This highlights a strong separation between non-adaptive and adaptive streams in private continual release.

\begin{table*}[t]
  \caption{Nearly tight error bounds for tasks considered in this work. For readability, we omit $\mathrm{polylog}(d, T, 1/\varepsilon, 1/\delta, 1/\gamma)$ terms. For $\maxsum$, $\minsum$, and $\argmaxsum$, the bounds hold for both non-adaptive and adaptive streams. Also note that in order to cover all ranges of parameters, our results are combined with the additive-only upper bounds~\citep{jain2023price}.}
  \label{sample-table}
  \begin{center}
    \begin{small}
      \begin{sc}
        \begin{tabular}{cccc}
          \toprule
          Task  & Pure DP         & Approximate DP      & Source  \\
          \midrule
          $\maxsum$ and $\minsum$ & $\min\left\{\frac{1}{\gamma\varepsilon},\frac{d}{\varepsilon}, \sqrt{\frac{T}{\varepsilon}},T\right\}$ & $\min\left\{\frac{1}{\sqrt{\gamma}\varepsilon}, \frac{\sqrt{d}}{\varepsilon},\frac{T^{1/3}}{\varepsilon^{2/3}}, T\right\}$ & Thm.~\ref{thm:general_f},~\ref{thm:lower_maxminsum} \\
          \midrule
          $\argmaxsum$ & $\min\left\{\frac{1}{\gamma\varepsilon},\frac{d}{\varepsilon}, \sqrt{\frac{T}{\varepsilon}},T\right\}$ & $\min\left\{\frac{1}{\sqrt{\gamma}\varepsilon}, \frac{\sqrt{d}}{\varepsilon},\frac{T^{1/3}}{\varepsilon^{2/3}}, T\right\}$ & Thm.~\ref{thm:upper_argmax},~\ref{thm:lower_maxminarg_nonadap} \\
          \midrule
          $\argminsum$ (non-adaptive) & $\min\left\{\frac{1}{\gamma\varepsilon},\frac{d}{\varepsilon}, \sqrt{\frac{T}{\varepsilon}},T\right\}$ & $\min\left\{\frac{1}{\sqrt{\gamma}\varepsilon}, \frac{\sqrt{d}}{\varepsilon},\frac{T^{1/3}}{\varepsilon^{2/3}}, T\right\}$ & Thm.~\ref{thm:upper_argmin},~\ref{thm:lower_maxminarg_nonadap} \\
          \midrule
          $\argminsum$ (adaptive) & $\min\left\{\frac{d}{\varepsilon}, \sqrt{\frac{T}{\varepsilon}},T\right\}$ & $\min\left\{\frac{\sqrt{d}}{\varepsilon},\frac{T^{1/3}}{\varepsilon^{2/3}}, T\right\}$ & Thm.~\ref{thm:lower_argminsum_adaptive} \\
          \bottomrule
        \end{tabular}
      \end{sc}
    \end{small}
  \end{center}
  \vskip -0.1in
\end{table*}

\subsection{Technical Overview}

Our algorithms for $\maxsum$, $\minsum$, and $\argmaxsum$ are based on a simple framework that updates the current answer once its error exceeds a certain threshold. We utilize the $\abovethreshold$ mechanism to monitor if the error of the current output is within an acceptable range. If so, we keep the output unchanged, which incurs no privacy cost. Otherwise, we launch a batch algorithm on the received data to update the answer. Since the actual value increases by roughly $1+\gamma$ after each update, we only need to perform around $\log_{1+\gamma} T\approx\log T/\gamma$ updates, leading to a $\mathrm{polylog} (d,T)$ error due to composition theorems.

Our algorithm for $\argminsum$ follows a similar approach but requires a completely different utility analysis. After each update, we may switch to a new attribute with a smaller cumulative sum. Hence, we cannot guarantee that the minimum sum will increase by a multiplicative factor of $1+\gamma$ after each update. This property makes it essentially different from other problems. To bound the number of updates, we exploit the fact that the stream is non-adaptive. Intuitively, an adversary attempting to attack the algorithm should increase the values for many coordinates because which coordinate will be chosen by the algorithm is unknown in advance. Thus, the minimum sum must increase by $1+\gamma$ after a certain number of updates. This can be formally proved by a potential analysis inspired by an argument in~\citep{asi2023near}. We show that the number of updates is at most $\mathrm{polylog} (d,T)$, which yields a $\mathrm{polylog} (d,T)$ error.

Our lower bounds for the non-adaptive setting leverage the sequential embedding construction proposed by~\citet{jain2023price}. The key observation is that we can construct an input stream so that the sum along every coordinate is small. This makes the relative error term negligible compared to the additive error. Consequently, a lower bound for purely additive error translates to a lower bound on the additive error in our setup.

For $\argminsum$ against an adaptive adversary, we establish a lower bound through a reduction from the problem of privately identifying the signs of consensus columns in a matrix. We transform the rows of the matrix into a stream of records so that the signs of consensus columns can be identified by locating attributes with zero sum, which can be solved by running an algorithm for $\argminsum$. Once such an attribute is found, we start to place $1$'s at this attribute (this operation makes the stream adaptively generated) and compel the algorithm to find a new attribute with zero sum. By repeating this procedure, we can determine the signs of most consensus columns. Therefore, existing hardness results for this problem imply lower bounds for $\argminsum$ in the adaptive setting.

\subsection{Related Work}

The exploration of differentially private mechanisms in the continual release model began with the work of~\citet{dwork2010differential} and~\citet{chan2011private}. They considered the fundamental problem of computing the sum of binary bits and proposed the celebrated binary tree mechanism whose error is only larger than that in the batch model by $\mathrm{polylog}(T)$. Furthermore, an error of $\Omega(\log T)$ is shown to be inevitable~\citep{dwork2010differential}, indicating that the continual release model is inherently more challenging than the batch model, in which an $O(1)$ error is attainable. A more pronounced separation was later discovered by~\citet{jain2023price}, who proved that for $\maxsum$ and $\argmaxsum$, the (purely additive) error in the continual release model can be exponentially larger than that in the batch model.

The aforementioned results focus solely on purely additive error. There has been a recent trend in the development of algorithms with relative accuracy guarantees. For various problems, such as frequency moments estimation and query release, it has been demonstrated that incorporating a relative error term can lead to substantial improvements~\citep{epasto2023differentially,ghazi2023differentially,ghazi2025prem,aamand2026skirting}. 

The concept of differential privacy with adaptive input streams was formally defined by~\citet{jain2023price}, although it was also implicitly referenced in earlier works (e.g.,~\citep{smith2013nearly}). It has been shown that non-adaptive and adaptive streams are equivalent under pure differential privacy, whereas a private algorithm in the non-adaptive setting may become non-private in the adaptive setting under approximate differential privacy~\citep{denisov2022improved}. In terms of utility, significant gaps between the two settings have been observed in private online learning~\citep{asi2023private,li2024limits}. To our knowledge, such gaps have not yet been identified in the context of private continual release of statistics.

\subsection{Organization}

In Section~\ref{sec:prelim}, we provide the necessary background on differential privacy and define the problems we study. In Section~\ref{sec:alg_argmin}, we present our algorithm for $\argminsum$ in the non-adaptive setting and briefly analyze its accuracy. We establish our lower bound for $\argminsum$ in the adaptive setting in Section~\ref{sec:lower_argmin}. Together, the results from these two sections emphasize our key finding---the separation between non-adaptive and adaptive streams. Algorithms and lower bounds for other problems, as well as proofs omitted from Sections~\ref{sec:alg_argmin} and~\ref{sec:lower_argmin}, are deferred to the appendices.
\section{Preliminaries}
\label{sec:prelim}

\subsection{Differential Privacy}
We start with the standard definition of differential privacy. Two datasets (or sequences) $S_0=(x_1,\dots,x_T)$ and $S_1 = (x_1',\dots,x_T')$ of size $T$ over domain $\X$ are considered neighboring if they differ in at most one entry, i.e., there exists $j\in[T]$ such that $x_i = x_i'$ for all $i\in[T]\setminus\{j\}$. Differential privacy requires an algorithm's output distributions to be similar when running on neighboring datasets.

\begin{definition}[Differential Privacy~\citep{dwork2006calibrating,dwork2006our}]
    A randomized algorithm $\M$ is $(\varepsilon,\delta)$-differentially private if for any pair of neighboring datasets $S_0,S_1\in\X^T$ and any set $O$ of outputs, we have
    \begin{equation*}
        \Pr[\M(S_0)\in O] \le e^\varepsilon \Pr[\M(S_1)\in O] + \delta. 
    \end{equation*}
    When $\delta = 0$, we often omit $\delta$ and say $\M$ is $\varepsilon$-differentially private. The case that $\delta > 0$ is referred to as approximate DP and the case that $\delta=0$ is referred to as pure DP.
\end{definition}

The above definition only applies to static datasets and does not capture the scenarios where data records can be adversarially generated. We next define differential privacy in the presence of adaptive inputs. Let $\M$ be a mechanism, $\adv$ be an adversary, and $b\in\{0, 1\}$ be a global parameter that is unknown to both $\M$ and $\adv$. The privacy game played between $\M$ and $\adv$ is defined as follows. At every round $t\in[T]$, the adversary $\adv$ generates two data points $x_t^{(0)}$ and $x_t^{(1)}$. The mechanism $\M$ receives $x_t^{(b)}$ and presents an output $a_t$ to $\adv$. The adversary $\adv$ has to ensure that $x_t^{(0)} = x_t^{(1)}$ for all $t\in[T]\setminus\{\tilde{t}\}$, where $\tilde{t}$ is a special round that is adaptively chosen by $\adv$ (i.e., the choice can depend on the interaction before round $\tilde{t}$). For privacy, we require that it is hard for $\adv$ to infer the value of $b$. Let $\mathsf{View}_{\M,\adv}(b)$ denote the view of $\adv$ in this privacy game, which contains the internal randomness of $\adv$ and $(a_1,\dots,a_T)$ sent by $\M$. Differential privacy in the adaptive setting is defined as below.

\begin{definition}[Differential Privacy with Adaptive Inputs~\citep{jain2023price}]
    A randomized algorithm $\M$ is $(\varepsilon,\delta)$-differentially private in the adaptive continual observation model if for any adversary $\adv$ and any set $O$ of views, we have
    \begin{equation*}
        \Pr[\mathsf{View}_{\M,\adv}(0)\in O]\le e^\varepsilon\Pr[\mathsf{View}_{\M,\adv}(1)\in O] + \delta.
    \end{equation*}
\end{definition}

%Throughout this paper, we typically assume $\varepsilon<0.1$ and $\delta < 1 / T$, following the common treatment of privacy parameters~\citep{dwork2014algorithmic}.

Differential privacy enjoys the composition property, which allows us to design a complex private mechanism by combining multiple simple building blocks. Let $\M_1,\dots,\M_k$ be $k$ mechanisms with independent randomness, denote by $\mathsf{Comp}(\M_1,\dots,\M_k)$ a mechanism that interacts with $\adv$ as follows. At round $t$, the adversary $\adv$ generates $x_t^{(0)}$ and $x_t^{(1)}$. Then all mechanisms receive $x_t^{(b)}$ and respond $(a_{1,t},\dots,a_{k,t})$ to $\adv$. During each round, the responses $a_{1,t},\dots,a_{k,t}$ can be produced in arbitrary order and each response may depend on the previous ones. The following composition theorems ensure that $\mathsf{Comp}(\M_1,\dots,\M_k)$ is private as long as every $\M_i$ is private~\citep{dwork2006calibrating,dwork2010boosting,steinke2022compositiondifferentialprivacy,vadhan2021concurrent,lyu2022composition,henzinger2026concurrent}.
\begin{theorem}[Basic Composition]
\label{thm:basic_comp}
    Let $\M_1,\dots,\M_k$ be $k$ randomized mechanisms with privacy parameters $(\varepsilon_1,\delta_1),\dots,(\varepsilon_k,\delta_k)$. Then $\mathsf{Comp}(\M_1,\dots,M_k)$ is $(\varepsilon,\delta)$-differentially private for $\varepsilon=\sum_{i=1}^k\varepsilon_i$ and $\delta=\sum_{i=1}^k\delta_i$.
\end{theorem}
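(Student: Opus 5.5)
We argue by induction on $k$, comparing the view of $\adv$ in the composed game under $b=0$ and $b=1$ through a hybrid argument. The case $k=1$ is just the definition of differential privacy with adaptive inputs. For the inductive step, we treat $\mathsf{Comp}(\M_1,\dots,\M_{k-1})$ as a single mechanism $\M'$, which by the inductive hypothesis is $(\sum_{i<k}\varepsilon_i,\sum_{i<k}\delta_i)$-private. It then suffices to prove the two-mechanism case: if $\M'$ is $(\varepsilon',\delta')$-private and $\M_k$ is $(\varepsilon_k,\delta_k)$-private, both in the adaptive model, then $\mathsf{Comp}(\M',\M_k)$ is $(\varepsilon'+\varepsilon_k,\delta'+\delta_k)$-private.

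For the two-mechanism case, we introduce a hybrid game in which $\M'$ receives $x_t^{(0)}$ and $\M_k$ receives $x_t^{(1)}$. This can be done because the mechanisms have independent randomness and each sees only its own input stream. We then compare:
\begin{itemize}
\item the real game with $b=0$ against the hybrid, where only $\M_k$'s bit changes;
\item the hybrid against the real game with $b=1$, where only $\M'$'s bit changes.
\end{itemize}

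For the first comparison, we build an adversary $\adv'$ against $\M_k$ alone. It internally simulates both $\adv$ and $\M'$, feeding $x_t^{(0)}$ to the simulated $\M'$. It forwards the pair $(x_t^{(0)},x_t^{(1)})$ to the external $\M_k$. Whenever $\adv$ expects responses, $\adv'$ interleaves the simulated $\M'$ answers with the external $\M_k$ answers in whatever order $\mathsf{Comp}$ prescribes. Since responses within a round may depend on earlier ones, $\adv'$ may need to pass intermediate outputs of $\M'$ into the choice of later inputs. This is legitimate because $\adv'$ is an arbitrary adaptive adversary, and $\M'$'s randomness becomes part of its internal randomness. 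The view of $\adv'$ determines the view of $\adv$ by post-processing, so the first comparison yields
\[
\Pr[V_0\in O]\le e^{\varepsilon_k}\Pr[V_{\mathrm{hyb}}\in O]+\delta_k.
\]
A symmetric simulation, now simulating $\M_k$ on the $x^{(1)}$ inputs, gives
\[
\Pr[V_{\mathrm{hyb}}\in O]\le e^{\varepsilon'}\Pr[V_1\in O]+\delta'.
\]
Chaining the two inequalities gives
\[
\Pr[V_0\in O]\le e^{\varepsilon'+\varepsilon_k}\Pr[V_1\in O]+e^{\varepsilon_k}\delta'+\delta_k.
\]

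The main obstacle is that chaining produces the $\delta$ term $e^{\varepsilon_k}\delta'+\delta_k$, not the claimed $\delta'+\delta_k$. A second, related difficulty is the interleaving of responses within a round, which is the concurrent-composition subtlety. To get exactly $\sum\delta_i$, I would first try to avoid naive chaining. The plan is to use the fact that an $(\varepsilon,\delta)$-private interactive mechanism's view distributions can be written as $\delta$-close, in total variation, to distributions with pure $\varepsilon$ likelihood-ratio bounds. Concretely:
\begin{itemize}
\item Decompose each mechanism's transcript as in the standard proof of basic composition for interactive mechanisms, conditioning on a "bad" event of probability at most $\delta_i$.
\item Bound the likelihood ratio on the good event by $e^{\varepsilon_i}$ round by round.
\item Take a union bound over the bad events.
\end{itemize}
If this decomposition turns out to be unavailable in the adaptive setting, I would instead invoke the concurrent composition results cited above, which establish that interactive mechanisms compose concurrently with the same parameters as non-interactive ones. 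I would then apply the known non-interactive basic composition bound $(\sum\varepsilon_i,\sum\delta_i)$.
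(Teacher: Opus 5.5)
Your proposal is correct only through its fallback, and that fallback is exactly the paper's route. The paper gives no proof of this theorem. It quotes it from the literature, and what makes it valid here, with several interactive mechanisms answering in interleaved order on an adaptively generated stream, is the concurrent-composition work cited just before it \citep{vadhan2021concurrent,lyu2022composition}, together with its extension to continual mechanisms on adaptive streams \citep{henzinger2026concurrent}. Reducing concurrent composition to non-interactive composition via these results and then applying the $(\sum_i\varepsilon_i,\sum_i\delta_i)$ bound is therefore the step your argument actually rests on.

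Your self-contained argument does not reach the stated $\delta$.

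\begin{itemize}
\item \textbf{The hybrid.} It is sound, but cleaner if run directly over hybrids $H_0,\dots,H_k$. In $H_j$, $\M_1,\dots,\M_j$ receive $x_t^{(1)}$ and the rest receive $x_t^{(0)}$, and the adversary against $\M_j$ simulates $\adv$ and every other mechanism. Inducting on $\M'=\mathsf{Comp}(\M_1,\dots,\M_{k-1})$ is awkward: $\M_k$ may answer between $\M_1$ and $\M_2$ within a round, so $\M'$ is not a single-response-per-round mechanism of the kind your inductive hypothesis covers.
\item \textbf{What the hybrid gives.} Either way you get $\varepsilon=\sum_j\varepsilon_j$ but only $\delta=\sum_j e^{\sum_{i<j}\varepsilon_i}\delta_j$. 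This fully settles the case $\delta_1=\cdots=\delta_k=0$. That is in fact the only case in which the paper applies Theorem~\ref{thm:basic_comp}; for $\delta>0$ it always uses Theorem~\ref{thm:advanced_comp}. It does not settle the general statement.
\end{itemize}

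Your decomposition plan does not repair the general case, for two reasons.

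\begin{itemize}
\item \textbf{No small bad event.} $(\varepsilon,\delta)$-indistinguishability bounds the hockey-stick divergence, not the probability that the privacy loss exceeds $\varepsilon$. So a ``bad event of probability at most $\delta_i$'' need not exist. The total-variation decomposition you quote is the right tool, but it applies to one fixed pair of view distributions.
\item \textbf{The decisive gap.} You would need a decomposition of each mechanism $\M_i$ itself that holds against every adversary at once. This is because the adversary $\M_i$ effectively faces in the concurrent game includes the other mechanisms, which also run on $x^{(b)}$ and react to $\M_i$'s answers. Approximate DP of an interactive mechanism only gives, adversary by adversary, a decomposition of one pair of whole-transcript distributions. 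It provides no round-by-round likelihood-ratio bound that could be multiplied across interleaved mechanisms.
\end{itemize}

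Building such a mechanism-level decomposition is exactly the nontrivial content of the concurrent-composition theorems. One example is writing any interactive $(\varepsilon,\delta)$-DP mechanism, on a fixed pair of neighboring inputs, as an interactive post-processing of non-interactive $(\varepsilon,\delta)$ randomized response. So either prove that simulation lemma, or state the result by citation, as the paper does.
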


\begin{theorem}[Advanced Composition]
\label{thm:advanced_comp}
    Let $\M_1,\dots,\M_k$ be $k$ randomized mechanisms with privacy parameters $(\varepsilon_1,\delta_1),\dots,(\varepsilon_k,\delta_k)$. Then $\mathsf{Comp}(\M_1,\dots,M_k)$ is $(\varepsilon,\delta)$-differentially private for any $\delta > \sum_{j=1}^k\delta_j$ and $\varepsilon=\frac{1}{2}\sum_{j=1}^k\varepsilon_j^2+\sqrt{2\ln(1/\delta')\sum_{j=1}^k\varepsilon_j^2}$, where $\delta'=\delta-\sum_{j=1}^k\delta_j$.
\end{theorem}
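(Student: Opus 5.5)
We prove the bound in three stages. First we reduce the interactive, concurrent composition $\mathsf{Comp}(\M_1,\dots,\M_k)$ to a composition of simple non-interactive mechanisms. Next we strip off the $\delta_j$'s by a coupling argument. Finally we bound the privacy loss of the remaining pure-DP part through zero-concentrated DP (zCDP).

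\textbf{Step 1 (reduction to non-interactive randomized response).} For $\varepsilon\ge 0$ and $\delta\in[0,1]$, let $RR_{\varepsilon,\delta}$ be the following pair of distributions $P_0,P_1$ on $\{0,1,\top_0,\top_1\}$. Under $P_b$, the output $\top_b$ has mass $\delta$. The outputs $b$ and $1-b$ have masses $(1-\delta)\frac{e^\varepsilon}{1+e^\varepsilon}$ and $(1-\delta)\frac{1}{1+e^\varepsilon}$. We invoke the concurrent composition theorems cited above (\citealp{vadhan2021concurrent,lyu2022composition}). These say that any interactive $(\varepsilon_j,\delta_j)$-DP mechanism $\M_j$, run against an adaptive adversary, can be simulated as an interactive post-processing of a single sample from $RR_{\varepsilon_j,\delta_j}$. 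Moreover, this simulation survives arbitrary interleaving of the $k$ mechanisms, with responses that may depend on each other. Consequently, the adversary's view in $\mathsf{Comp}(\M_1,\dots,\M_k)$ is a post-processing of independent samples from $RR_{\varepsilon_1,\delta_1},\dots,RR_{\varepsilon_k,\delta_k}$, all sharing the same bit $b$. Since DP is closed under post-processing, it suffices to prove the bound for this product of randomized responses.

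\textbf{Step 2 (removing $\delta$).} Each $RR_{\varepsilon_j,\delta_j}$ is a mixture. With probability $\delta_j$ it outputs $\top_b$, revealing $b$; with probability $1-\delta_j$ it runs the pure $RR_{\varepsilon_j,0}$. Let $E$ be the event that none of the $\top$ outcomes occurs. A union bound gives $\Pr[\neg E]\le\sum_j\delta_j$. On $E$, the joint output is distributed as a product of pure $\varepsilon_j$-DP randomized responses. Hence it suffices to show that the pure product is $(\varepsilon,\delta')$-DP; adding back $\sum_j\delta_j$ yields total failure probability $\delta$.

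\textbf{Step 3 (pure part via zCDP).} We use the standard fact that an $\varepsilon_j$-DP mechanism is $\frac12\varepsilon_j^2$-zCDP. This follows from Hoeffding's lemma applied to the privacy loss random variable $L_j$: it satisfies $|L_j|\le\varepsilon_j$ and $\mathbb{E}[L_j]\le\frac12\varepsilon_j^2$, so $\mathbb{E}[e^{\lambda L_j}]\le e^{\lambda(\lambda+1)\varepsilon_j^2/2}$. Independent zCDP mechanisms compose additively, so the product is $\rho$-zCDP with $\rho=\frac12\sum_j\varepsilon_j^2$. The conversion from $\rho$-zCDP to approximate DP gives $(\rho+2\sqrt{\rho\ln(1/\delta')},\delta')$-DP. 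This conversion is Markov's inequality applied to $e^{\lambda L}$, where $L=\sum_j L_j$, with $\lambda$ optimized. Substituting $\rho$ gives
\begin{equation*}
\varepsilon=\frac12\sum_j\varepsilon_j^2+\sqrt{2\ln(1/\delta')\sum_j\varepsilon_j^2},
\end{equation*}
which is exactly the claimed parameter.

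\textbf{Main obstacle.} The delicate part is Step 1. Naive martingale arguments on the privacy loss break down when the $(\varepsilon_j,\delta_j)$ mechanisms are interactive and run concurrently: the $\delta_j$ failure events can then be triggered adaptively. To avoid this, we rely on the cited concurrent composition results, which guarantee that interleaving does not degrade the guarantee relative to the non-interactive composition. Steps 2 and 3 are routine calculations.
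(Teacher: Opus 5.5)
The paper gives no proof of this theorem; it is quoted from the cited composition literature (Dwork--Rothblum--Vadhan, Steinke's composition notes, and the concurrent-composition papers). The only comparison is therefore with that literature, and your proposal is a correct reconstruction of its standard route.

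Step 1 is exactly where you, like the paper, must rely on citations. Two small points there:
\begin{itemize}
\item Vadhan--Wang give the simulation by interactive post-processing of randomized response only for pure DP. The $(\varepsilon,\delta)$ version you invoke is due to Vadhan and Zhang (STOC 2023), and Lyu proves the corresponding optimal concurrent composition theorem for approximate DP. Either way, the conclusion you need is established: the composed view is no less private than the product of the $RR_{\varepsilon_j,\delta_j}$.
\item You should say explicitly how the paper's adaptive-input game is an instance of interactive DP. Take $b$ as the dataset and the pairs $(x_t^{(0)},x_t^{(1)})$ as queries, with the one-differing-round constraint enforced by the mechanism.
\end{itemize}

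Steps 2 and 3 are correct and recover exactly the stated constant. The mixture argument gives $P_0(S)\le \Pr[E]\bigl(e^{\varepsilon}P_1^{\mathrm{pure}}(S)+\delta'\bigr)+\Pr[\neg E]\le e^{\varepsilon}P_1(S)+\delta'+\sum_j\delta_j$. With $\rho=\frac12\sum_j\varepsilon_j^2$, one has $\rho+2\sqrt{\rho\ln(1/\delta')}=\frac12\sum_j\varepsilon_j^2+\sqrt{2\ln(1/\delta')\sum_j\varepsilon_j^2}$.

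After Step 1 everything is an explicit product of randomized responses, so you don't even need the general fact that $\varepsilon$-DP implies $\frac12\varepsilon^2$-zCDP. The $L_j$ are independent, take values $\pm\varepsilon_j$, and satisfy $\mathbb{E}[L_j]=\varepsilon_j\tanh(\varepsilon_j/2)\le\frac12\varepsilon_j^2$. Hoeffding's lemma and Markov's inequality then finish the argument directly.
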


The following property of differential privacy, known as group privacy, quantifies the closeness between the output distributions obtained by running a private algorithm on two datasets that differ in multiple entries.

\begin{lemma}[Group Privacy]
    If $\M$ is an $(\varepsilon,\delta)$-differentially private mechanism, then for any two datasets $S_0$ and $S_1$ that differ in $k$ entries and any event $O$, we have
    \begin{equation*}
        \Pr[\M(S_0)\in O]\le e^{k\varepsilon}\Pr[\M(S_1)\in O]+\frac{e^{k\varepsilon}-1}{e^\varepsilon-1}\cdot\delta.
    \end{equation*}
\end{lemma}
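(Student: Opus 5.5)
The plan is a hybrid argument along a chain of intermediate datasets, applying the defining inequality of $(\varepsilon,\delta)$-differential privacy once per step and then unrolling the resulting recursion. Since $S_0$ and $S_1$ differ in $k$ entries, say at positions $j_1,\dots,j_k$, define $S^{(0)}=S_0$, $S^{(i)}$ obtained from $S^{(i-1)}$ by replacing entry $j_i$ with its value in $S_1$, so that $S^{(k)}=S_1$ and consecutive datasets $S^{(i-1)},S^{(i)}$ are neighboring in the sense defined above (they differ in at most one entry).

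Next I will prove by induction on $m\in\{0,\dots,k\}$ the claim
\begin{equation*}
\Pr[\M(S^{(0)})\in O]\le e^{m\varepsilon}\Pr[\M(S^{(m)})\in O]+\frac{e^{m\varepsilon}-1}{e^\varepsilon-1}\cdot\delta .
\end{equation*}
The base case $m=0$ is trivial, since the coefficient of $\delta$ vanishes. For the step, apply differential privacy to the neighboring pair $S^{(m)},S^{(m+1)}$ to get $\Pr[\M(S^{(m)})\in O]\le e^\varepsilon\Pr[\M(S^{(m+1)})\in O]+\delta$. Substituting into the inductive hypothesis yields the coefficient $e^{(m+1)\varepsilon}$ on the probability term and the additive term $\bigl(e^{m\varepsilon}+\frac{e^{m\varepsilon}-1}{e^\varepsilon-1}\bigr)\delta$. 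This equals $\frac{e^{(m+1)\varepsilon}-1}{e^\varepsilon-1}\delta$ by the geometric-series identity $\sum_{i=0}^{m}e^{i\varepsilon}=\frac{e^{(m+1)\varepsilon}-1}{e^\varepsilon-1}$. Taking $m=k$ gives the lemma.

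There is no real obstacle here. The only points needing care are: keeping the multiplicative factor applied to the accumulated $\delta$-terms, which is exactly what produces the geometric sum rather than $k\delta$; and noting that the case $\varepsilon=0$ should be read as the limit $k\delta$ (or excluded). If one wants the statement for the adaptive model, the same chain argument applies to adversaries that change one round at a time. However, the lemma as stated concerns fixed datasets, so the batch argument suffices.
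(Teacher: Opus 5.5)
Your proof is correct and is the standard one. The paper gives no proof of this lemma; it quotes group privacy as a known preliminary. The usual derivation is exactly your chain of $k$ neighboring datasets with the induction $\Pr[\M(S^{(0)})\in O]\le e^{m\varepsilon}\Pr[\M(S^{(m)})\in O]+\frac{e^{m\varepsilon}-1}{e^\varepsilon-1}\delta$: the accumulated $\delta$-terms pick up the factor $e^{m\varepsilon}$ and sum geometrically. Reading the case $\varepsilon=0$ as the limit $k\delta$ is the right convention.
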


We then introduce some basic mechanisms for achieving differential privacy. The first is the Laplace mechanism.

\begin{definition}[Sensitivity]
    A function $f:\X^n\to \R$ has sensitivity $\Delta$ if $\lvert f(S_0)-f(S_1)\rvert\le \Delta$ for any neighboring datasets $S_0$ and $S_1$.
\end{definition}

\begin{theorem}[Laplace Mechanism~\citep{dwork2006calibrating}]
\label{thm:laplace}
    Suppose $f$ has sensitivity $\Delta$. The mechanism that outputs $f(S) + \lap(\Delta/\varepsilon)$ is $\varepsilon$-differentially private. Moreover, we have $\Pr[\lvert\lap(\Delta/\varepsilon)\rvert\le \Delta\ln(1/\beta)/\varepsilon] \ge 1-\beta$.
\end{theorem}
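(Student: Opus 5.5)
The statement has two parts: the mechanism $f(S)+\lap(\Delta/\varepsilon)$ is $\varepsilon$-differentially private, and the Laplace noise obeys a tail bound. I will prove each directly from the Laplace density $p_b(z)=\frac{1}{2b}e^{-\lvert z\rvert/b}$, with scale $b=\Delta/\varepsilon$.

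\emph{Privacy.} Fix neighboring datasets $S_0,S_1$. The output $\M(S_i)=f(S_i)+\lap(b)$ has density $q_i(y)=\frac{1}{2b}\exp(-\lvert y-f(S_i)\rvert/b)$. For every $y\in\R$, the triangle inequality and the sensitivity assumption give
\begin{equation*}
\frac{q_0(y)}{q_1(y)}=\exp\!\Big(\frac{\lvert y-f(S_1)\rvert-\lvert y-f(S_0)\rvert}{b}\Big)\le \exp\!\Big(\frac{\lvert f(S_0)-f(S_1)\rvert}{b}\Big)\le e^{\Delta/b}=e^{\varepsilon}.
\end{equation*}
Integrating this pointwise bound over any measurable set $O$ gives $\Pr[\M(S_0)\in O]=\int_O q_0\le e^\varepsilon\int_O q_1=e^\varepsilon\Pr[\M(S_1)\in O]$. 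This is $\varepsilon$-DP with $\delta=0$.

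\emph{Tail bound.} For $Z\sim\lap(b)$ and $\tau\ge 0$, symmetry of the density gives
\begin{equation*}
\Pr[\lvert Z\rvert>\tau]=2\int_\tau^\infty \frac{1}{2b}e^{-z/b}\,dz=e^{-\tau/b}.
\end{equation*}
Take $\tau=b\ln(1/\beta)=\Delta\ln(1/\beta)/\varepsilon$. Then $\Pr[\lvert Z\rvert>\tau]=\beta$, so $\Pr[\lvert Z\rvert\le\tau]\ge 1-\beta$, as claimed.

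No step here is a real obstacle. The only points needing care are the direction of the triangle inequality in the density ratio and the factor of $2$ from the two tails. That factor cancels exactly against the $\frac{1}{2b}$ normalization, which is why the bound carries no extra constant inside the logarithm.
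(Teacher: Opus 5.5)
Your proof is correct: the pointwise density-ratio bound via the triangle inequality gives $\varepsilon$-DP, and the exact two-sided tail $\Pr[\lvert Z\rvert>\tau]=e^{-\tau/b}$ yields the stated bound, in fact with equality. The paper gives no proof of its own and simply cites this result from \citet{dwork2006calibrating}; your argument is the standard proof of it.
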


The $\mathsf{AboveThreshold}$ mechanism, as depicted in Algorithm~\ref{alg:abovethreshold}, allows for privately identifying the first query whose value exceeds a predefined threshold.

\begin{algorithm}[!ht]
  \caption{$\abovethreshold$}
  \label{alg:abovethreshold}
  \begin{algorithmic}
    \STATE {\bfseries Input:} privacy parameter $\varepsilon$, threshold $\tau$, data stream $(x_1,\dots,)$, query stream $(q_1,\dots)$.
    \STATE {\bfseries Output:} stream $(\sigma_1,\dots)$.
    \STATE $\hat{\tau} \gets \tau + \lap(2/\varepsilon)$.
    \FOR{$t =1,\dots, $}
        \STATE $\nu_t\gets \lap(4 / \varepsilon)$.
        \IF{$q_t(x_1,\dots,x_t) + \nu_t \ge \hat{\tau}$}
            \STATE $\sigma_t\gets \top$.
            \STATE \textbf{Halt}.
        \ELSE
            \STATE $\sigma_t\gets\bot$.
        \ENDIF
    \ENDFOR
  \end{algorithmic}
\end{algorithm}

\begin{theorem}[Properties of $\abovethreshold$~\citep{dwork2009complexity,dwork2014algorithmic}]
\label{thm:above_threshold}
    The $\mathsf{AboveThreshold}$ mechanism (Algorithm~\ref{alg:abovethreshold}) is an $\varepsilon$-differentially private algorithm that at round $t$ receives a data point $x_t\in\X$ and a sensitivity-$1$ query $q_t$ (may be chosen adaptively), and outputs $\sigma_t\in\{\top,\bot\}$. Moreover, suppose there are at most $k$ queries, then with probability $1-\beta$ the $\mathsf{AboveThreshold}$ mechanism is $\alpha$-accurate for
    \begin{equation*}
        \alpha = \frac{8(\ln k +\ln(2/\beta))}{\varepsilon}.
    \end{equation*}
    Namely, we have $q_t(x_1,\dots,x_t)\ge \tau - \alpha$ for all $t$ with $\sigma_t = \top$, and $q_t(x_1,\dots,x_t)\le \tau + \alpha$ for all $t$ with $\sigma_t= \bot$.
\end{theorem}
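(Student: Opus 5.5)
The privacy claim comes first. I will use the standard sparse-vector coupling argument. Fix neighboring streams (or, in the adaptive game, the two branches $b=0,1$). Each query has sensitivity $1$, so for every $t$ we have $|q_t(S_0)-q_t(S_1)|\le 1$. Condition on the noises $\nu_1,\dots,\nu_{t-1}$ and consider the event that the mechanism halts exactly at round $t$. Let $g(S)=\max_{i<t}(q_i(S)+\nu_i)$.

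The output $(\bot,\dots,\bot,\top)$ at round $t$ occurs on $S_0$ exactly when $\hat\tau>g(S_0)$ and $q_t(S_0)+\nu_t\ge\hat\tau$. I will couple the noises as $\hat\tau'=\hat\tau+1$ and $\nu_t'=\nu_t+2$. Under this shift, the event for $S_0$ with $(\hat\tau,\nu_t)$ is contained in the corresponding event for $S_1$ with $(\hat\tau',\nu_t')$. This uses $g(S_1)\le g(S_0)+1$ and $q_t(S_1)+\nu_t+2\ge q_t(S_0)+\nu_t+1\ge\hat\tau+1$. The density ratios of this shift are at most $e^{\varepsilon/2}$ for $\lap(2/\varepsilon)$ and $e^{\varepsilon/2}$ for $\lap(4/\varepsilon)$, giving a factor $e^\varepsilon$ in total. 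Integrating over the conditioned noises then gives pure $\varepsilon$-DP.

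The all-$\bot$ transcript is handled the same way, and it only needs the shift of $\hat\tau$. For adaptively chosen queries, and for the adaptive game, the argument is unchanged. Before the halt the transcript is deterministic, namely all $\bot$, so the adversary's choices of $q_t$ and of the differing round $\tilde t$ are fixed functions of its own randomness and of $t$. I therefore condition on the adversary's randomness and apply the argument above pointwise.

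For accuracy, I use a union bound over the noise variables. By Theorem~\ref{thm:laplace}, $|\hat\tau-\tau|\le 2\ln(2/\beta)/\varepsilon$ with probability $1-\beta/2$. Likewise, each of the at most $k$ values satisfies $|\nu_t|\le 4\ln(2k/\beta)/\varepsilon$, with total failure probability $\beta/2$. On this good event, take any $t$ with $\sigma_t=\top$. Then
\[
q_t\ge\hat\tau-\nu_t\ge\tau-\frac{2\ln(2/\beta)+4\ln(2k/\beta)}{\varepsilon}\ge\tau-\alpha .
\]
Symmetrically, any $t$ with $\sigma_t=\bot$ has $q_t<\hat\tau-\nu_t\le\tau+\alpha$. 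It remains to check that $2\ln(2/\beta)+4\ln(2k/\beta)\le 8(\ln k+\ln(2/\beta))$, which follows from $\ln(2k/\beta)=\ln k+\ln(2/\beta)$.

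The main obstacle I expect is making the privacy coupling airtight. This means checking the direction of each inequality under the shifts and handling the adaptive choice of queries. Conditioning on the prefix, which is always all $\bot$, is what resolves that issue.
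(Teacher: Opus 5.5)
Your proof is correct and is essentially the standard sparse-vector argument from the cited source (Dwork and Roth, Theorems 3.23--3.24); the paper gives no proof of its own and only cites it. The one cosmetic difference is in the privacy step: you couple with fixed shifts $\hat\tau\mapsto\hat\tau+1$, $\nu_t\mapsto\nu_t+2$ and use containment of events, whereas Dwork and Roth use a data-dependent change of variables that makes the two events coincide exactly. Both yield the same $e^{\varepsilon/2}\cdot e^{\varepsilon/2}$ factor. Your union-bound accuracy calculation matches theirs and even gives the slightly sharper $(4\ln k+6\ln(2/\beta))/\varepsilon$.
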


Let $H$ be a finite set and $\ell:\X^n\times H\to \R$ be a score function. We say $\ell$ has sensitivity $\Delta$ if $\ell(\cdot, h)$ has sensitivity $\Delta$ for any $h\in H$. Given a dataset $S$, the exponential mechanism outputs $h$ with probability
\begin{equation*}
    \frac{\exp(-\varepsilon\cdot \ell(S, h)/2\Delta)}{\sum_{f\in H}\exp(-\varepsilon\cdot \ell(S, f)/2\Delta)}.
\end{equation*}

As stated below, with high probability the exponential mechanism produces an $h\in H$ with low score.
\begin{theorem}[Exponential Mechanism~\citep{mcsherry2007mechanism}]
\label{thm:exponential}
    The exponential mechanism is $\varepsilon$-differentially private. Moreover, with probability $1-\beta$ it outputs an $h$ such that
    \begin{equation*}
        \ell(S, h)\le \min_{f\in H}\ell(S, f) + \frac{2\Delta}{\varepsilon}\ln(\lvert H\rvert / \beta).
    \end{equation*}
\end{theorem}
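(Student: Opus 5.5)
The statement to prove is the utility guarantee of the exponential mechanism. Privacy comes from bounding a ratio of probabilities, and the accuracy bound comes from a union bound over bad candidates.

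\textbf{Privacy.} Fix neighboring datasets $S_0,S_1$ and an output $h\in H$. Write
\begin{equation*}
Z(S)=\sum_{f\in H}\exp\bigl(-\varepsilon\,\ell(S,f)/2\Delta\bigr).
\end{equation*}
Then
\begin{equation*}
\frac{\Pr[\M(S_0)=h]}{\Pr[\M(S_1)=h]}=\exp\Bigl(\frac{\varepsilon(\ell(S_1,h)-\ell(S_0,h))}{2\Delta}\Bigr)\cdot\frac{Z(S_1)}{Z(S_0)} .
\end{equation*}
Since $\ell(\cdot,h)$ has sensitivity $\Delta$, the first factor is at most $e^{\varepsilon/2}$. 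The same sensitivity bound, applied termwise to every $f$, gives $Z(S_1)\le e^{\varepsilon/2}Z(S_0)$. Multiplying, the ratio is at most $e^{\varepsilon}$. Summing over $h\in O$ (the set $H$ is finite) gives $\Pr[\M(S_0)\in O]\le e^\varepsilon\Pr[\M(S_1)\in O]$, which is pure $\varepsilon$-DP.

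\textbf{Accuracy.} Let $\mathrm{OPT}=\min_{f}\ell(S,f)$, attained at some $f^*$. The normalizer is at least the single term $\exp(-\varepsilon\,\mathrm{OPT}/2\Delta)$ from $f^*$. Define the set of bad outputs
\begin{equation*}
B=\{h:\ell(S,h)>\mathrm{OPT}+c\},\qquad c=\tfrac{2\Delta}{\varepsilon}\ln(|H|/\beta).
\end{equation*}
Each $h\in B$ has probability at most
\begin{equation*}
\frac{\exp(-\varepsilon(\mathrm{OPT}+c)/2\Delta)}{\exp(-\varepsilon\,\mathrm{OPT}/2\Delta)}=e^{-\varepsilon c/2\Delta}=\frac{\beta}{|H|}.
\end{equation*}
A union bound over at most $|H|$ elements of $B$ gives $\Pr[\text{output}\in B]\le\beta$, which is the stated accuracy guarantee.

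\textbf{Main obstacle.} There is essentially none. The only care needed is the factor of $2$ in the exponent: each of the two factors in the privacy ratio contributes $\varepsilon/2$, so together they give $\varepsilon$. The accuracy part is a direct calculation.
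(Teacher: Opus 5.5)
Your proof is correct: bounding the privacy ratio by two factors of $e^{\varepsilon/2}$, then lower-bounding the normalizer by the optimal term and union-bounding over the bad outputs, is the standard textbook argument for the exponential mechanism. The paper gives no proof of its own here and simply cites McSherry and Talwar, so your argument supplies exactly the reasoning that citation stands for.
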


\subsection{Problem Statement}
\label{sec:problem}

Let $(x_1,\dots, x_T)$ be a stream of length $T$, where every $x_i\in[0,1]^d$. Given a function $f:\R^d\to\R$, we require an algorithm to accurately release $f(\sum_{s=1}^tx_s)$ at every round $t\in[T]$ while preserving $(\varepsilon,\delta)$-differential privacy. 

We consider both estimation and selection tasks. For estimation, we say an algorithm is $(\gamma,\alpha,\beta)$-accurate for estimating $f$ if with probability at least $1-\beta$, for every $t\in[T]$ the output $a_t$ satisfies
\begin{equation*}
    (1-\gamma)\cdot f(\sum_{s=1}^tx_s) - \alpha\le a_t\le (1+\gamma)\cdot f(\sum_{s=1}^tx_s) + \alpha.
\end{equation*}
In particular, we are interested in the cases that $f(x)=\max_{i\in[d]}x[i]$ and $f(x)=\min_{i\in[d]}x[i]$, where $x[i]$ denotes the value at the $i$-th coordinate of vector $x$. These two tasks are represented by $\maxsum$ and $\minsum$.

We also study their selection variants. We say an algorithm is $(\gamma,\alpha,\beta)$-accurate for $\argmaxsum$ if with probability at least $1-\beta$, for every $t\in[T]$ the output $a_t\in[d]$ satisfies
\begin{equation*}
    (1-\gamma)\cdot\max_{i\in[d]}\sum_{s=1}^tx_s[i] - \alpha\le \sum_{s=1}^tx_s[a_t].
\end{equation*}
Similarly, we say an algorithm is $(\gamma,\alpha,\beta)$-accurate for $\argminsum$ if with probability at least $1-\beta$, for every $t\in[T]$ the output $a_t\in[d]$ satisfies
\begin{equation*}
    \sum_{s=1}^tx_s[a_t]\le (1+\gamma)\cdot\min_{i\in[d]}\sum_{s=1}^tx_s[i] + \alpha.
\end{equation*}
\section{Algorithm for \texorpdfstring{$\argminsum$}{}}
\label{sec:alg_argmin}

In this section, we consider $\argminsum$ with non-adaptive input streams. At each round, the mechanism is required to output an attribute with a cumulative sum that is not too larger than the minimum. In the batch model, this problem can be solved by the exponential mechanism.

We briefly describe our approach. At the beginning, we initialize the output by sampling uniformly from $[d]$. At each round $t\in[T]$, we check if the output remains accurate using the $\abovethreshold$ mechanism. Once the current output becomes inaccurate, we launch an exponential mechanism to update it.

The final error bound depends on the privacy parameter of the exponential mechanism and the $\abovethreshold$ mechanism, which, by the composition property of DP, is determined by the number of times the output is updated. If the output is updated frequently, we should set a small privacy parameter for the exponential mechanism and the $\abovethreshold$ mechanism, leading to a large error. In contrast, if the output is updated rarely, we can use a larger privacy parameter, which results in a smaller error.

Therefore, the central challenge is to bound the number of updates. For other tasks (i.e., $\maxsum$, $\minsum$, and $\argmaxsum$), this can be easily done by leveraging the fact that the true answer is multiplied by $1+\Theta(\gamma)$ after each update. Since the sum is at most $T$, the number of updates cannot exceed $O(\log_{1+\Theta(\gamma)}T) = O(\log T / \gamma)$. However, this simple argument becomes invalid for $\argminsum$ since it is possible that we switch to a new coordinate with an even smaller sum after an update.

To fix this issue, we will prove that with high probability, the minimum sum still grows by a fraction of $\Theta(\gamma)$ after $\mathrm{polylog}(d,T)$ updates. As a result, the total number of updates is at most $O(\log T/\gamma)\cdot \mathrm{polylog}(d,T)$ and we obtain a $\mathrm{polylog}(d,T)/\gamma$ error bound. Due to some reasons that will be explained later, we should gradually decrease the privacy parameter as the minimum sum grows. This is ensured by decreasing the privacy parameter after every $K = \mathrm{polylog}(d,T)$ updates. We illustrate the detailed implementation in Algorithm~\ref{alg:argmin} and state the final results in the following theorem.

\begin{theorem}
\label{thm:upper_argmin}
    Let $d,T\in\N$, $\varepsilon,\gamma,\beta\in(0,1)$, and $\delta \in[0,1/2)$. In the non-adaptive continual release model, Algorithm~\ref{alg:argmin} is $(\varepsilon,\delta)$-differentially private and $(\gamma,\alpha, \beta)$-accurate for $\argminsum$, where
    \begin{itemize}
        \item $\alpha = O\left(\frac{\log(dT/\gamma\beta)}{\gamma}\cdot \frac{\log(dT/\beta)}{\varepsilon}\right)$ if $\delta = 0$.
        \item $\alpha = O\left(\sqrt{\frac{\log(dT/\gamma\beta)\log(1/\delta)}{\gamma}}\cdot \frac{\log(dT/\beta)}{\varepsilon}\right)$ if $\delta > 0$.
    \end{itemize}
\end{theorem}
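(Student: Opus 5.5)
Since Algorithm~\ref{alg:argmin} is not shown yet, I assume it runs in phases. Phase $j$ allows at most $K=\mathrm{polylog}(d,T)$ updates and uses a per-update privacy parameter $\varepsilon_j$. The parameter $\varepsilon_j$ shrinks as $j$ grows, because the threshold (the allowed error) is tied to the current estimate of the minimum sum. Each update consists of one $\abovethreshold$ run, which monitors the query ``sum of the current coordinate minus $(1+\gamma)\cdot$ (noisy estimate of the minimum)'', followed by one exponential mechanism call with score $\ell(S,i)=\sum_{s\le t}x_s[i]$. Both the query and the score have sensitivity $1$.

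\textbf{Privacy.} Every update uses a constant number of $\varepsilon_j$-DP primitives, namely $\abovethreshold$ (Theorem~\ref{thm:above_threshold}) and the exponential mechanism (Theorem~\ref{thm:exponential}). The total number of updates is capped by a hard bound $N=O(K\log T/\gamma)$, and the algorithm halts or freezes once the cap is reached. In the pure case I apply Theorem~\ref{thm:basic_comp} with $\varepsilon_j\approx \varepsilon/N$. In the approximate case I apply Theorem~\ref{thm:advanced_comp} with $\varepsilon_j\approx\varepsilon/\sqrt{N\log(1/\delta)}$. Because privacy comes from the hard cap, it holds unconditionally, and the utility argument below only needs to show that the cap is not reached with high probability.

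\textbf{Accuracy given few updates.} I take a union bound over at most $T$ queries and $N$ mechanism calls. With probability $1-\beta/2$, every $\abovethreshold$ run is accurate up to $O(\log(T/\beta)/\varepsilon_j)$ and every exponential-mechanism output is within $O(\log(dN/\beta)/\varepsilon_j)$ of the minimum. On this event, between updates the output satisfies the $\argminsum$ guarantee with $\alpha=O(\log(dT/\beta)/\varepsilon_j)$. Substituting $\varepsilon_j$ gives the two stated bounds, since $N\approx \log(dT/\gamma\beta)\log T/\gamma$ up to the logs absorbed. Each time an update fires, the current coordinate's sum exceeds $(1+\gamma)m_t+\alpha$, where $m_t$ is the current minimum. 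Meanwhile the new coordinate has sum at most $m_t+\alpha/2$.

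\textbf{Bounding the number of updates (main obstacle).} I must show that $m_t$ grows by a factor $1+\Theta(\gamma)$ (or by an additive $\Theta(\alpha)$ while it is small) within every $K$ consecutive updates, with high probability. I would adapt the potential argument of \citet{asi2023near}. Fix an epoch in which the minimum lies in $[m,(1+\gamma/4)m]$, and let $G_t$ be the set of coordinates whose sum is at most $m+\alpha/4$, so that the exponential mechanism's near-optimal choices are essentially drawn from it. Since the stream is non-adaptive, the future values of every coordinate are fixed in advance. An update occurs only after the chosen coordinate's sum rises by about $\gamma m+\alpha/2$, at which point that coordinate leaves the low-sum set. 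The exponential mechanism samples from a softmax over the current sums. With potential $\Phi_t=\sum_i\exp(-\varepsilon_j\ell(S_t,i)/2)$, I show that each update removes, in expectation, a constant fraction of the potential mass. The reason is that the selected coordinate is random with respect to the fixed future, so the adversary's ``bad'' coordinates are hit proportionally. A concentration argument then gives that after $K=O(\log(d)\cdot\log(T/\beta))$ updates the potential has dropped by a factor $d^{2}$, which forces every coordinate to have sum above $(1+\gamma/4)m$. This is where $\varepsilon_j$ must scale with $1/m$ (decreasing it every $K$ updates), so that the exponentiated sums stay in a range where $\log d$ halvings suffice. Summing over the $O(\log T/\gamma)$ epochs gives $N$ updates w.p.\ $1-\beta/2$, which completes the proof.
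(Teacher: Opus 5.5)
\textbf{Verdict: there is a genuine gap.} Your update-counting idea is viable. The gap is in the privacy and error accounting around it: as written, it neither gives the stated bound nor matches what the counting argument needs.

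\textbf{Problem 1: the cap costs a $\log T$ factor.} You cap the number of updates at $N=O(K\log T/\gamma)$ and spend $\varepsilon_j\approx\varepsilon/N$ (resp.\ $\varepsilon/\sqrt{N\log(1/\delta)}$) on each one. Substituting gives $\alpha\approx N\log(dT/\beta)/\varepsilon=\Theta\bigl(\frac{K\log T}{\gamma}\cdot\frac{\log(dT/\beta)}{\varepsilon}\bigr)$. That is an extra $\log T$ (resp.\ $\sqrt{\log T}$), which the theorem's bound has no room to absorb.

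\textbf{Problem 2: a uniform $\varepsilon_j$ breaks the update bound.} In an epoch where the minimum grows from $m$ to $(1+\Theta(\gamma))m$, the potential can fall by about $d\exp(\Theta(\varepsilon_j\gamma m))$. So you need $\varepsilon_j\gamma m=\Theta(\log(dT/\beta))$. The upper bound keeps the number of $e$-fold drops polylogarithmic. The lower bound is needed for the softmax-concentration step and for the exponential mechanism's error to fit inside $\gamma m$. With $\varepsilon_j$ fixed and $m$ up to $\Theta(T)$, the upper bound fails, and not only in the analysis. Once $\varepsilon_j\gamma m\gg\log(dT/\beta)$, the exponential mechanism is essentially a deterministic argmin. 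A fixed stream can then stagger the coordinates inside $[m,(1+\gamma/4)m]$ with gaps of order $\log(dT/\beta)/\varepsilon_j$ and lift the current lowest one past the threshold one at a time. This forces about $\min\{d,\varepsilon_j\gamma m/\log(dT/\beta)\}$ updates in one epoch, far beyond your cap for large $d$ and $T$. You do say at the end that $\varepsilon_j$ must scale like $1/m$. But then your privacy paragraph (every $\varepsilon_j\approx\varepsilon/N$) and your accuracy paragraph (a fixed additive $O(\log(dT/\beta)/\varepsilon_j)$, new coordinate within $m_t+\alpha/2$) describe a different algorithm from the one whose updates you count.

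\textbf{The missing piece: the paper's schedule and accounting.} Take $\varepsilon_j=\Theta(\log(dT/\beta)/\alpha)$ for the first $\lceil4/\gamma\rceil$ epochs, during which the minimum rises additively by $\alpha/4$ per epoch. Afterwards take $\varepsilon_j\propto(1+\gamma/3)^{-j}$. This schedule is indexed by the update count $\lceil i'/K\rceil$, so it is data-independent. Composition then gives $K\sum_j\varepsilon_j=O(K/\gamma)\,\varepsilon_1$, a convergent geometric series. No cap is needed and $N$ never enters the error, which is exactly where the $\log T$ disappears. The price is that the per-update errors $2\ln(4dT/\beta)/\varepsilon_j$ and the slack $\alpha_j$ grow with $j$. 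They must be charged to the relative term through the inductive invariant $\min_k\ell_{t_{(j-1)K}}(k)\ge(\alpha/\gamma)(1+\gamma/3)^{j-1-\lceil4/\gamma\rceil}$, which your accuracy paragraph never establishes.

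\textbf{Two smaller corrections.}
\begin{enumerate}
\item To push every coordinate past the next level, the potential must fall by $d\exp(\Theta(\varepsilon_j\gamma m))=\mathrm{poly}(dT/\beta)$, not $d^2$. A $d^2$ drop suffices only if $\varepsilon_j\gamma m\lesssim\log d$, whereas accuracy needs $\varepsilon_j\gamma m\gtrsim\log(dT/\beta)$.
\item You need $K=O(\log(dT/\gamma\beta))$, a sum of logarithms. Your $O(\log d\cdot\log(T/\beta))$ would cost another log factor in $\alpha$.
\end{enumerate}

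\textbf{Your counting mechanism is a legitimate alternative.} The paper splits each epoch into $r$ phases of $e$-fold potential drop, shows another update within a phase has probability $P<4/5$, and applies Chernoff. You instead bound the expected ratio $\Phi_{t'}/\Phi_t$ between consecutive updates. Let $p_j$ be the softmax weights at time $t$ and $\tau_j$ the time coordinate $j$ crosses the fixed threshold $(1+\gamma)m+\alpha-2\alpha_j$; these times are determined in advance by the non-adaptive stream, and the selected coordinate is fresh randomness. Hence this expectation is at most $\sum_{\tau_k>\tau_j}p_jp_k+e^{-2}\le\tfrac12+e^{-2}$. Markov's inequality on the product then gives the same $K$. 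That part of your plan works once the schedule is fixed.
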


\begin{algorithm}[tb]
  \caption{Privately Releasing $\argminsum$}
  \label{alg:argmin}
  \begin{algorithmic}
    \STATE {\bfseries Input:} time horizon $T\in\N$, privacy parameters $\varepsilon,\delta$, relative error parameter $\gamma$, failure probability $\beta$, data stream $(x_1,\dots, x_T)$, where $x_i\in[0,1]^d$.
    \STATE {\bfseries Output:} stream $(a_1,\dots,a_T)\in[d]^T$.
    \STATE $K \gets \lceil11 (\ln d + 8(\ln T + \ln(8dT/\beta))) + 40 \ln(20/\gamma\beta) + 40 \ln(4\ln T/(\ln(1+\gamma/3)\beta))\rceil$.
    \IF{$\delta = 0$}
        \STATE $\alpha\gets \frac{896K(\ln T+\ln(8dT/\beta))}{\varepsilon\gamma}$.
    \ELSE
        \STATE $\alpha \gets 8\sqrt{\frac{3K\ln(1/\delta)}{\gamma}}\cdot\frac{64(\ln T + \ln(8dT/\beta))}{\varepsilon}$.
    \ENDIF
    \STATE $\varepsilon_j \gets \frac{64(\ln T + \ln(8dT/\beta))}{\alpha}$ for $j \le \lceil4/\gamma\rceil$ and $\varepsilon_j\gets \frac{32(\ln T+\ln(8dT/\beta))}{\alpha\cdot(1+\gamma/3)^{j - 1 - \lceil4/\gamma\rceil}}$ for $j>\lceil4/\gamma\rceil$.
    \STATE $\alpha_j \gets \frac{8(\ln T + \ln(8T/\beta))}{\varepsilon_j}$ for all $j$. 
    \STATE $a_0\gets k$ where $k$ is sampled uniformly from $[d]$.
    \STATE $i\gets 1$ and $i' \gets 1$.
    \STATE Initiate an instance of $\abovethreshold$ with privacy parameter $\varepsilon_1$ and threshold $0$.
    \FOR{$t = 1,\dots, T$}
        \STATE Define query $q_t \equiv -(1+\gamma)\cdot \min_{j\in[d]}\sum_{s=1}^tx_s[j] - \alpha + \sum_{s=1}^tx_s[a_{t-1}] + \alpha_i$. 
        \STATE Feed $q_t$ to $\abovethreshold$ and obtain outcome $\sigma_t$.
        \IF{$\sigma_t=\top$}
            \STATE Define $\ell_t(j) \equiv \sum_{s=1}^tx_s[j]$ for $j\in[d]$.
            \STATE Set $a_t \gets k$ with probability $\frac{\exp(-\varepsilon_i\ell_t(k)/2)}{\sum_{j\in[d]}\exp(-\varepsilon_i\ell_t(j)/2)}$.
            \STATE $i'\gets i' + 1$ and $i \gets \lceil i'/K\rceil$. 
            \STATE Initiate an instance of $\abovethreshold$ with privacy parameter $\varepsilon_i$ and threshold $0$. 
        \ELSE
            \STATE $a_t\gets a_{t-1}$.
        \ENDIF
    \ENDFOR
  \end{algorithmic}
\end{algorithm}

Here, we only present the core idea of our analysis and defer the formal proof to Appendix~\ref{app:alg_argmin}. Based on the value of $i$ in Algorithm~\ref{alg:argmin}, we can divide the algorithm into several stages. Suppose that we enter the $i$-th stage at some round $t_1$, define a potential
\begin{equation*}
    \Phi_t = \sum_{j\in[d]}\exp(-\varepsilon_i\ell_t(j)/2)
\end{equation*}
for $t\ge t_1$, where $\ell_t$ is as in Algorithm~\ref{alg:argmin}. Since the stream is generated non-adaptively, the potential at every round is fixed until we move forward to the next stage.

Let $M = \min_{j\in[d]}\ell_{t_1}(j)$ and $t_2$ be a time-step such that $\min_{j\in[d]}\ell_{t_2}(j)\approx(1+\gamma/2)M+\alpha/2$. We will prove that the number of updates before $t_2$ is at most $K = \mathrm{polylog}(d,T)$ with high probability. Therefore, after every $K$ updates, the minimum sum is increased by a multiplicative factor of $1+\gamma/2$ (and an additive factor of $\alpha/2$). The total number of updates can then be bounded by $\mathrm{polylog}(d,T)$.

Suppose that the algorithm has made an update at round $t\in[t_1,t_2]$. Let $t'\in[t_1+1,t_2]$ be some round such that $\Phi_{t'}\approx \Phi_t/e$. The probability that it will make the next update before round $t'$ is at most (we assume that the $\abovethreshold$ mechanism has zero error for simplicity, which will be handled by adjusting the coefficients in the formal proof)
\begin{equation*}
    P \approx \sum_{j\in[d]}\frac{\exp(-\varepsilon_0\ell_t(j)/2)}{\Phi_t}\cdot \I[\ell_{t'}(j)>(1+\gamma)M+\alpha].
\end{equation*}

We can prove that $P$ is bounded by some constant that is strictly less than $1$. Intuitively, the output distribution of the exponential mechanism at round $t$ should concentrate on those $j$'s with $\ell_t(j)\le (1+\gamma/2)M+\alpha/2$. If $P$ is very close to $1$, we must have $\ell_{t'}(j)>(1+\gamma)M+\alpha$ for most of them. However, this will cause a dramatic drop in potential, contradicting that $\Phi_{t'}\approx \Phi_t/e$. As a consequence, the number of updates during $[t+1,t']$ can be bounded by a geometric random variable with mean \begin{equation*}
    P+P^2+\cdots = P/(1-P) = O(1).
\end{equation*}

We now consider the entire time interval $[t_1,t_2]$. It is not difficult to see from the definition that $\Phi_{t_1} \le d\exp(-\varepsilon_i M / 2)$ and $\Phi_{t_2}\ge \exp(-\varepsilon_i ((1+\gamma/2)M+\alpha/2) /2)$. As a consequence, we can split the time interval $[t_1+1,t_2]$ into $r$ phases such that the potential drops by $e$ during each phase, where $r = \ln(\Phi_{t_1}/\Phi_{t_2}) \le \ln d+\varepsilon_i(\gamma M+\alpha) / 4$.

In our formal proof, we show that our choice of $\varepsilon_i$ satisfies $\varepsilon_i(\gamma M+\alpha) = \mathrm{polylog}(d,T)$. Therefore, we have $r= \mathrm{polylog}(d,T)$. By standard concentration arguments, the number of updates before round $t_2$ is at most $K=\mathrm{polylog}(d,T)$. Since $M$ (the minimum sum among all attributes) grows gradually, we must decrease $\varepsilon_i$ accordingly to ensure that $\varepsilon_i(\gamma M+\alpha) = \mathrm{polylog}(d,T)$ always holds.

%As we discussed before, repeating the above analysis $O(\log T/\gamma)$ times yields the desired result.
\section{Adaptive Lower Bound for \texorpdfstring{$\argminsum$}{}}
\label{sec:lower_argmin}

In this section, we prove our lower bound for $\argminsum$ in the adaptive setting. Our approach is to show that we can solve the problem of privately determining the signs of consensus columns in a matrix by interacting with a private algorithm for $\argminsum$. This problem was also used by~\citet{talwar2015nearly} and~\citet{asi2023private} to prove lower bounds for other problems.

The following result says that any $(\varepsilon,\delta)$-differentially private algorithm solving this problem requires $\tilde{\Omega}(\sqrt{d}/\varepsilon)$ samples. Note that the version used by ~\citet{talwar2015nearly} and~\citet{asi2023private} is stated for $\varepsilon=1$. Here we extend it to other values of $\varepsilon$ and provide a proof in Appendix~\ref{app:consensus_approximate} for completeness. In this paper, when the input is a matrix, differential privacy is defined by viewing each row of the matrix as an input item.
\begin{lemma}
\label{lem:consensus_approximate}
    Let $d$ be sufficiently large and $\varepsilon\le 0.1$. For some $n = \Theta(\frac{\sqrt{d}}{\varepsilon\log d})$, suppose that $\A$ is an algorithm that takes as input a matrix $X\in\{-1,1\}^{n\times d}$ and outputs a $d$-dimensional vector. Denote by $W$ the set of indices of consensus columns in $X$. If
    \begin{equation*}
        \Pr_\A\left[\sum_{j\in W}\I[\A(X)[j] = X_{1,j}]\ge 3d/4\right]\ge 2/3
    \end{equation*}
    given that $\lvert W\rvert \ge 0.999d$. Then $\A$ is not $(\varepsilon,\delta)$-differentially private for $\delta = o(1/n^3)$.
\end{lemma}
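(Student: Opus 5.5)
We will derive the statement from the known $\varepsilon=1$ version of the consensus-column lower bound, as used by \citet{talwar2015nearly} and \citet{asi2023private}: for some $n_0=\Theta(\sqrt{d}/\log d)$, no $(1,o(1/n_0^{3}))$-differentially private algorithm on $X\in\{-1,1\}^{n_0\times d}$ can recover the signs of at least $3d/4$ consensus columns with probability $2/3$ whenever $|W|\ge 0.999d$. (If that statement should be unavailable with these exact constants, the fallback is to reprove it from fingerprinting codes, as in Bun--Ullman--Vadhan/Steinke--Ullman; the robust version tolerating a $1/4$ fraction of wrong coordinates is exactly what those arguments give.) The extension to general $\varepsilon$ is the standard ``secrecy of the sample''/group-privacy amplification trick in reverse: we blow up each row into $k\approx 1/\varepsilon$ copies.

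Concretely, suppose $\A$ is $(\varepsilon,\delta)$-DP on $n=k n_0$ rows with $k=\lfloor 1/\varepsilon\rfloor\ge 10$, and succeeds as in the statement. We define $\B$ on inputs $Y\in\{-1,1\}^{n_0\times d}$ as follows: form $X$ by repeating each row of $Y$ exactly $k$ times (placing the first row of $Y$ first, so $X_{1,j}=Y_{1,j}$), and output $\A(X)$. Duplication preserves consensus columns, so $W(X)=W(Y)$, and the accuracy hypothesis on $\A$ transfers verbatim to $\B$. Changing one row of $Y$ changes $k$ rows of $X$, so by the group privacy lemma $\B$ is $(k\varepsilon,\delta')$-DP with $k\varepsilon\le 1$ and
\[
\delta'=\frac{e^{k\varepsilon}-1}{e^{\varepsilon}-1}\,\delta\le \frac{e}{\varepsilon}\,\delta = O(k\delta).
\]
With $\delta=o(1/n^3)=o(1/(k^3n_0^3))$ this gives $\delta'=o(1/(k^2n_0^3))=o(1/n_0^3)$. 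Hence $\B$ contradicts the $\varepsilon=1$ lower bound, and so $n=kn_0=\Theta(\sqrt d/(\varepsilon\log d))$ works for the statement.

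The main obstacle is bookkeeping rather than ideas. First, we must match the exact form of the base lemma (rows versus columns, the $0.999d$ consensus promise, and the $3d/4$ accuracy threshold) so that the reduction preserves each quantity. Second, we must confirm that the $\delta$-scaling under group privacy stays within the $o(1/n_0^3)$ regime. The condition $\varepsilon\le 0.1$ is used only to make $k\ge 1$ integral with $k\varepsilon=\Theta(1)$. If $1/\varepsilon$ is not an integer, we take $k=\lfloor 1/\varepsilon\rfloor$, which changes $n$ only by a constant factor absorbed in the $\Theta$.
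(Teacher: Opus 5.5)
Your proposal is correct and matches the paper's proof: both duplicate the $n_0$-row instance $k=\lfloor 1/\varepsilon\rfloor$ times (the paper stacks whole copies of the matrix rather than repeating each row, which makes no difference), use group privacy to get a $(k\varepsilon, O(k\delta))$-DP algorithm with $k\varepsilon\le 1$, and contradict the $\varepsilon=1$ consensus-column lemma of \citet{talwar2015nearly,asi2023private}. The only differences are cosmetic: you state the base lemma with $o(1/n_0^3)$ and as a ``for all inputs with $|W|\ge 0.999d$'' promise, whereas the paper cites an $o(1/n_0^2)$, fixed-hard-matrix version, which implies yours.
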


For pure DP, we will use the following lemma, which can be proved by the classical packing argument~\citep{hardt2010geometry}. We enclose a proof in Appendix~\ref{app:consensus_pure} for completeness.
\begin{lemma}
\label{lem:consensus_pure}
Let $\A$ be an $\varepsilon$-differentially private algorithm that takes as input a matrix $X\in\{-1,1\}^{n\times d}$ and outputs a $d$-dimensional vector such that
\begin{equation*}
    \Pr_{\A}\left[\sum_{j\in[d]}\I[\A(X)[j]=X_{1,j}]\ge 4d/5\right]\ge 2/3
\end{equation*}
given that all columns in $X$ are consensus columns. Then $n = \Omega(d /\varepsilon)$.
\end{lemma}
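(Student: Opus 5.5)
The plan is the classical packing argument of \citet{hardt2010geometry}. We restrict to inputs in which every column is a consensus column, so such a matrix $X$ is determined by its first row $v\in\{-1,1\}^d$: write $X_v$ for the matrix whose $n$ rows all equal $v$. For each $v$ we define the good event $G_v=\{u\in\{-1,1\}^d : \sum_{j}\I[\mathrm{sign}(u[j])=v[j]]\ge 4d/5\}$. Here the output is rounded coordinatewise to a sign vector, which is harmless since agreement is only counted where $\A(X)[j]=v[j]$. By hypothesis, $\Pr[\A(X_v)\in G_v]\ge 2/3$ for every $v$.

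The first step is to pick a large packing. We take $V\subseteq\{-1,1\}^d$ with pairwise Hamming distance greater than $2d/5$ and $\lvert V\rvert\ge 2^{cd}$ for an absolute constant $c>0$. Such a set exists by the Gilbert--Varshamov bound: a Hamming ball of radius $2d/5$ has size at most $2^{H(2/5)d}$, and $H(2/5)<1$. With this spacing the sets $G_v$ for $v\in V$ are pairwise disjoint. Indeed, a vector agreeing with both $v$ and $v'$ on $4d/5$ coordinates would force $\dis(v,v')\le 2d/5$.

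The second step uses group privacy. Fix $v_0\in V$. Any $X_v$ and $X_{v_0}$ differ in at most $n$ rows, so with $\delta=0$, for every $v\in V$,
\[
\Pr[\A(X_{v_0})\in G_v]\ge e^{-n\varepsilon}\Pr[\A(X_v)\in G_v]\ge \tfrac23 e^{-n\varepsilon}.
\]
The events $G_v$ are disjoint, so summing over $v\in V$ gives
\[
1\ge \sum_{v\in V}\Pr[\A(X_{v_0})\in G_v]\ge \tfrac23 \,2^{cd}e^{-n\varepsilon}.
\]
Rearranging yields $n\varepsilon\ge cd\ln 2-\ln(3/2)$, that is, $n=\Omega(d/\varepsilon)$ for $d$ larger than a constant. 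Small $d$ is trivial, since $n\ge1$ must hold while $\varepsilon<1$ makes $d/\varepsilon$ a constant multiple of $n$ only up to adjusting constants.

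There is no real obstacle here. The only care needed is in the choice of packing radius relative to the accuracy threshold $4d/5$, so that the $G_v$ are disjoint. The distance must exceed $2\cdot(d/5)=2d/5$, and $2/5<1/2$ guarantees an exponentially large code.
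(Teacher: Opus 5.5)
Your proposal is correct and is the same packing argument the paper uses: an exponentially large code with pairwise distance above $2d/5$ (the paper takes $9d/20$), pairwise disjoint accuracy regions, group privacy across the $n$ differing rows, and summation of the probabilities. One small correction concerns your small-$d$ remark, which is wrong as stated, since for fixed $d$ and $\varepsilon\to 0$ the fact $n\ge 1$ does not give $n\ge c\,d/\varepsilon$; the fix is to run the same argument on the antipodal pair $\{v,-v\}$, which gives $n\varepsilon\ge\ln 2$ for every $d$ and hence covers all $d$ below a constant.
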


We illustrate the idea of our proof for sufficiently large $T$. Let $\A$ be an algorithm for $\argminsum$ against an adaptive adversary. We show that we can use $\A$ to solve the problem of identifying the signs of the columns.

Given an input matrix $X$, we create a new matrix by converting each of its columns into two columns. If an entry in $X$ is $-1$, we place a pair $(0,1)$ in the same row of the two corresponding columns. If it is $1$, we put a $(1, 0)$ instead. By doing so, every consensus column in $X$ is converted into two consensus columns, and the index of the all-zero column reflects the sign of the original column.

We then send the rows of the new matrix as data records to $\A$. Now each column becomes an attribute in the data stream. Let $\alpha$ denote the additive error of $\A$. Since there is an all-zero column, algorithm $\A$ will output the index of an attribute whose sum is at most $\alpha$. After that, we start to put $1$'s at this attribute. This increases the sum along this attribute and forces $\A$ to change its output to other attributes. 

If $\alpha$ is less than the number of rows of the matrix, then $\A$ will never select an attribute associated with an all-one column as long as there is an unselected all-zero attribute. Under the condition that most columns in $X$ are consensus columns, we can identify the signs of a large fraction of them by repeating the above procedure. As a consequence, a lower bound on the sample complexity of identifying the signs of columns yields a lower bound on $\alpha$.

Applying a case analysis that was also used in~\citep{jain2023price}, we can prove the following lower bound that covers all ranges of parameters. It holds for any $\gamma\ge 0$ and matches the additive-only upper bound from~\citep{jain2023price}.

\begin{theorem}
\label{thm:lower_argminsum_adaptive}
    Let $T, d\in\N$ be sufficiently large. Any $(\varepsilon,\delta)$-differentially private mechanism $\M$ solving $\argminsum$ in the adaptive continual release model with relative error $\gamma$ and success probability $2/3$ must incur an additive error of
    \begin{itemize}
        \item $\alpha = \Omega\left(\min\left\{d/\varepsilon,\sqrt{T/\varepsilon},T\right\}\right)$ if $\delta = 0$.
        \item $\alpha = \tilde{\Omega}\left(\min\left\{\frac{\sqrt{d}}{\varepsilon}, \frac{T^{1/3}}{\varepsilon^{2/3}}, T\right\}\right)$ if $\delta > 0$ and $\delta = o(\varepsilon/T^3)$.
    \end{itemize}
\end{theorem}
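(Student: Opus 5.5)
The core step is a reduction from the consensus-column problems of Lemma~\ref{lem:consensus_approximate} (approximate DP) and Lemma~\ref{lem:consensus_pure} (pure DP) to $\argminsum$ with an adaptive stream. Once that reduction is in place, a case analysis on $T$ versus $d$, following~\citet{jain2023price}, covers the full parameter range.

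\textbf{Reduction.} Let $\M$ be an $(\varepsilon,\delta)$-DP mechanism for $\argminsum$ on $2d$ attributes with additive error $\alpha<n$ and relative error $\gamma$. Given $X\in\{-1,1\}^{n\times d}$, I would build $\A$ as follows.
\begin{enumerate}
\item Map each entry to a pair of attributes: $-1\mapsto(0,1)$ and $1\mapsto(1,0)$. The first $n$ stream records are the resulting rows. A consensus column of sign $s$ then yields one attribute with sum $0$ and a partner attribute with sum $n$.
\item Afterward, run $m$ further phases, where $m=\Theta(d)$. In each phase, feed $\M$ records containing $1$ on every attribute that has already been output. I would feed $n$ such records per phase, so each selected attribute's sum rises above $\alpha$ while untouched zero attributes stay at $0$; the phase length needs tuning against $\alpha$.
\item Read off a sign for column $j$ whenever $\M$ outputs one of its two attributes while that attribute still has sum $0$.
\end{enumerate}

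\textbf{Correctness of the reduction.} While some consensus zero attribute remains unselected, the minimum sum is $0$. The relative-error term therefore vanishes, which is why the bound holds for every $\gamma$. Hence $\M$ must output an attribute with sum at most $\alpha<n$. Such an attribute is either an unselected zero attribute of a consensus column, whose sign is then decoded correctly, or one of the few attributes from the at most $0.001d$ non-consensus columns. Every selected attribute then receives enough $1$'s to exceed $\alpha$, so $\M$ must move on. After $O(d)$ phases it has selected all but a small fraction of the consensus zero attributes, which gives $3d/4$ correct signs.

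\textbf{Privacy and length.} The data records are rows of $X$, so neighboring matrices give neighboring streams. The later records depend only on $\M$'s outputs and not on $X$, so they are a legitimate adaptive adversary, and post-processing makes $\A$ $(\varepsilon,\delta)$-DP. The stream length is $T\approx n+O(d)\cdot n=O(dn)$; a padding trick (adding $1$ to many attributes at once per record) could shorten it.

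Lemma~\ref{lem:consensus_approximate} with $n=\Theta(\sqrt d/(\varepsilon\log d))$ then gives $\alpha=\tilde\Omega(\sqrt d/\varepsilon)$. Lemma~\ref{lem:consensus_pure} gives $\alpha=\Omega(d/\varepsilon)$ for pure DP, where all columns are consensus.

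\textbf{Case analysis on $T$.} If $T$ is large enough to fit the reduction with the full $d$, the bounds above apply. Otherwise I would embed a smaller instance using only $d'\le d$ attributes, padding the unused attributes with $1$'s so they are never minimal. I would choose $d'$ to balance $T\approx d'\cdot n(d')$.
\begin{itemize}
\item Pure DP: $n=d'/\varepsilon$ and $T\approx d'^2/\varepsilon$ give $d'=\sqrt{T\varepsilon}$, so $\alpha\gtrsim\sqrt{T/\varepsilon}$.
\item Approximate DP: $n=\sqrt{d'}/\varepsilon$ and $T\approx d'^{3/2}/\varepsilon$ give $\alpha\gtrsim T^{1/3}/\varepsilon^{2/3}$.
\end{itemize}
In the approximate case, $T\approx d'^{3/2}/\varepsilon$ gives $n\approx T/d'$, so $\delta=o(\varepsilon/T^3)$ matches the requirement $\delta=o(1/n^3)$ of Lemma~\ref{lem:consensus_approximate} up to these factors. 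When $\alpha\ge T$ would be needed, the $T$ term in the minimum is trivial, which fixes the remaining regime.

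\textbf{Main obstacle.} I expect the hardest step to be accounting for the stream length in the repeated-selection phase. The question is how many $1$-records each selection needs, and whether one record can block many selected attributes at once. A second difficulty is handling failures with probability $1/3$ jointly across all rounds; I would rely on the accuracy guarantee holding simultaneously at every $t$.
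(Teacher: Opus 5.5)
Your reduction matches the paper's: the pair encoding of $X$, adaptively loading $n$ ones onto each selected attribute so that $\alpha<n$ forces distinct selections that avoid the all-one partners, and privacy by post-processing. Your monotonicity-in-$d$ case analysis is also the paper's. The genuine gap is the branch of the minimum equal to $T$.

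When $\varepsilon T\le 1$ one has $\sqrt{T/\varepsilon}\ge T$, $T^{1/3}/\varepsilon^{2/3}\ge T$, and $d/\varepsilon,\sqrt{d}/\varepsilon\ge T$. So in that regime the theorem asserts $\alpha=\Omega(T)$: no $(\varepsilon,\delta)$-DP mechanism achieves, say, $\alpha\le T/9$. Your remark that ``the $T$ term in the minimum is trivial'' confuses this with the trivial \emph{upper} bound $\alpha\le T$.

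Your reduction cannot supply this bound. Lemma~\ref{lem:consensus_approximate} needs $d'$ sufficiently large and a stream of length $\Theta(n(d')\,d')=\tilde\Theta(d'^{3/2}/\varepsilon)\gtrsim d'^{3/2}T/\log d'$, which exceeds $T$ once $\varepsilon T=O(1)$. In the pure case you would have to push the reduction down to a single column, which amounts to the argument below.

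The paper handles this regime separately, with two attributes and group privacy. Let $S$ be $T/4$ copies of $(0,1)$ followed by zero vectors, and $S'$ be $T/4$ copies of $(1,0)$ followed by zero vectors. If $\alpha\le T/9$, accuracy forces $\Pr[a_T\neq 1]\le 1/3$ on $S$ and $\Pr[a'_T=2]\ge 2/3$ on $S'$. But group privacy over the $T/4$ differing records, with $\varepsilon T/4\le 1/2$, gives $\Pr[a'_T\neq 1]\le \sqrt{e}/3+2\delta/\varepsilon<2/3$, a contradiction. Monotonicity in $d$ then lifts this to all $d$. This is also where the factor $\varepsilon$ in $\delta=o(\varepsilon/T^3)$ is used. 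In the reduction branch, $\delta=o(1/n^3)$ follows simply from $n\le T$ and $\varepsilon<1$, so no balancing ``up to these factors'' is needed.

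A smaller point: as written, your decoding rule (``whenever $\M$ outputs one of its two attributes while that attribute still has sum $0$'') is not a post-processing of $\M$'s outputs. Whether an attribute had sum $0$ after the first $n$ rows depends on $X$. You should decode only from the outputs at the phase boundaries $n,2n,\dots$, assigning to column $j$ the sign whose zero side was output. You should also ignore rounds $t<n$, where the all-one partner of a consensus column has sum $t$, which may be at most $\alpha$.

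The paper does exactly this with $\A(X)[j]=2\I[2j\in\{a_n,\dots,a_{(2d/5)n}\}]-1$, loading $n$ ones only on the latest selection $a_{rn}$. Your variant of loading every previously selected attribute works equally well.
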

\begin{proof}
    To begin with, note that the lower bound must be a non-decreasing function of $d$. This is because any algorithm for $\argminsum$ with dimension $d^*>d$ can be used to solve $\argminsum$ with dimension $d$ by padding each input record with $d^* - d$ ones. If the output falls in the extra $d^*-d$ coordinates, we can simply output one of the $d$ coordinates, as this can only make the sum smaller.
    
    We then prove that $\alpha = \Omega(T)$ when $\varepsilon T\le 2$. Let $S$ be a sequence with $T/4$ $(0,1)$'s followed by $3T/4$ zero vectors and $S'$ be a sequence with $T/4$ $(1,0)$'s followed by $3T/4$ zero vectors. Let $a_T$ and $a_T'$ be the outputs of $\M$ at round $T$ when running on $S$ and $S'$, respectively. Suppose $\alpha \le T/9$, we have $\Pr[a_T = 1]\ge 2/3$, which implies $\Pr[a_T' \neq 1]\le \sqrt{e}\cdot \Pr[a_T\neq 1] + 2\delta/\varepsilon < 2/3$. However, we also have $\Pr[a_T' \neq 1] = \Pr[a_T' = 2]\ge 2/3$, a contradiction. Thus, we can conclude that $\alpha > T/9 = \Omega(T)$.

    From now on, we assume $\varepsilon > 2/T$. We first focus on the case that $\delta >0$. Consider the matrix $X\in\{-1,1\}^{n\times d/2}$ in Lemma~\ref{lem:consensus_approximate}, where $n = \Theta\left(\frac{\sqrt{d}}{\varepsilon\log d}\right)$. Suppose $d \le O((\varepsilon T)^{2/3})$, we have $T\ge 2nd/5$. For any $t\in[n]$, create $x_t\in\{0,1\}^{d}$ as follows:
    \begin{itemize}
        \item $x_t[2j-1] = \frac{X_{t,j} + 1}{2}$ for all $j\in[d/2]$.
        \item $x_t[2j] = \frac{-X_{t,j} + 1}{2}$ for all $j\in[d/2]$.
    \end{itemize}
    We construct an algorithm $\A$ that takes $X$ as input and privately estimates the signs of consensus columns in $X$ by interacting with $\M$. It first feeds $(x_1,\dots,x_n)$ to $\M$. Then, for every $r\in[2d/5 - 1]$ it streams $(x_{rn+1},\dots,x_{rn + n}) = (e_{a_{rn}},\dots,e_{a_{rn}})$ to $\M$, where $a_{rn}$ is the output of $\M$ at round $rn$ and $e_{a_{rn}}$ is the $a_{rn}$-th standard basis vector in $\R^d$ (i.e., $e_{a_{rn}}$ is a $d$-dimensional vector with $1$ in the $a_{rn}$-th coordinate and $0$ elsewhere). Finally, it outputs a vector such that
    \begin{equation*}
        \A(X)[j] = 2\I[2j\in\{a_n,\dots,a_{(2d/5)n}\}] - 1
    \end{equation*}
    for every $j\in[d/2]$.
    
    We show that $\A$ is $(\varepsilon,\delta)$-differentially private. Let $X_0$ and $X_1$ be two matrices that differ in one row. Consider an adversary $\adv$ that generates two sequences $(x_1^{(0)},\dots,x_n^{(0)})$ and $(x_1^{(1)},\dots,x_n^{(1)})$ from $X_0$ and $X_1$ using the aforementioned construction, respectively. At round $t\in[n]$, the vector $x_t^{(b)}$ is sent to $\M$, where $b\in\{0,1\}$ is unknown to $\adv$. Then, for every $r\in[2d/5 - 1]$ and $i\in[n]$ the adversary sends $e_{a_{rn}}$ to $\M$ at round $t=rn + i$, where $a_{rn}$ is the output of $\M$ at round $rn$. Observe that $\A(X_b)$ can be seen as a post-processing of $\mathsf{View}_{\M,\adv}(b)$. The fact that $\M$ is $(\varepsilon,\delta)$-differentially private for adaptive streams implies that $\A$ is $(\varepsilon,\delta)$-differentially private.
    
    Let $W$ be the set of consensus columns in $X$ and assume that $\lvert W\rvert\ge 0.999d/2$. Note that for each consensus column $j\in W$, either $x_1[2j-1]=\cdots=x_n[2j-1] = 0$ or $x_1[2j]=\cdots=x_n[2j] = 0$. Since $0.999d/2 > 2d/5$, by our construction, there exists $k^*\in[d]$ such that $x_1[k^*] = \cdots =x_{T'}[k^*] = 0$, where $T' = 2nd/5 \le T$. As a consequence, with probability $2/3$ the mechanism $\M$ selects $a_t$ such that
    \begin{equation*}
        \sum_{s=1}^tx_s[a_t] \le (1+\gamma)\cdot 0 + \alpha = \alpha
    \end{equation*}
    for every $t\in[T']$. 

    We will prove by contradiction that $\alpha \ge n$. Indeed, if $\alpha < n$, by our construction we know that $a_n,a_{2n},\dots,a_{(2d/5)n}$ are distinct since
    \begin{equation*}
        \sum_{t=1}^{r'n}x_t[a_{rn}] \ge n > \alpha
    \end{equation*}
    for any $r < r'$. Let $S_b = \{k\in[d]:x_1[k]=\cdots = x_n[k] = b\}$ for $b\in\{0, 1\}$, we have $0.999d/2\le \lvert S_b\rvert\le d/2$. Note that $\{a_n,a_{2n},\dots,a_{(2d/5)n}\}\cap S_1 = \emptyset$. Therefore, 
    \begin{align*}
        {}&\lvert \{a_n,a_{2n},\dots,a_{(2d/5)n}\}\cap S_0\rvert \\=& 2d/5 - \lvert \{a_n,a_{2n},\dots,a_{(2d/5)n}\}\cap S_1\rvert  \\&- \lvert \{a_n,a_{2n},\dots,a_{(2d/5)n}\}\cap ([d]\setminus(S_0\cup S_1))\rvert \\
        \ge& 2d/5 - 0.001d \\
        =& 0.399d.
    \end{align*}

    Pick any $j\in W$. If $X_{1,j} = 1$, we have $2j\in S_0$ and \begin{equation*}
        \I[\A(X)[j]\neq X_{1,j}] = \I[2j\notin \{a_n,a_{2n},\dots,a_{(2d/5)n}\}].
    \end{equation*}
    If $X_{1,j} = -1$, we have $2j-1\in S_0$. This implies $2j\in S_1$. Since $\{a_n,a_{2n},\dots,a_{(2d/5)n}\}\cap S_1 = \emptyset$, it must be the case that $2j\notin \{a_n,a_{2n},\dots,a_{(2d/5)n}\}$. Therefore, we always have $\A(X)[j] = -1=X_{1,j}$. The error of $\A$ can then be bounded by
    \begin{align*}
        {}&\sum_{j\in W}\I[\A(X)[j]\neq X_{1,j}] \\
        =&\sum_{j\in W}\I[2j\in S_0\setminus\{a_n,a_{2n},\dots,a_{(2d/5)n}\}] \\
        \le&\left\lvert S_0\setminus\{a_n,a_{2n},\dots,a_{(2d/5)n}\}\right\rvert\\
        =& \left\lvert S_0\right\rvert - \left\lvert S_0\cap \{a_n,a_{2n},\dots,a_{(2d/5)n}\}\right\rvert \\
        \le& d/2 - 0.399d \\
        =& 0.101d.
    \end{align*}
    As a consequence, we have
    \begin{align*}
        {}&\sum_{j\in W}\I[\A(X)[j]= X_{1,j}] \\
        =&\lvert W\rvert -\sum_{j\in W}\I[\A(X)[j] \neq X_{1,j}]\\
        \ge& 0.999d/2 - 0.101d\\
        =& 0.3985d.
    \end{align*}
    This contradicts Lemma~\ref{lem:consensus_approximate}, which suggests that the above cannot exceed $3/4\cdot (d/2) = 0.375d$. Hence, we have $\alpha \ge n = \Omega\left(\frac{\sqrt{d}}{\varepsilon\log d}\right)$.

    When $d\ge \Omega((\varepsilon T)^{2/3})$, by the fact that the lower bound is non-decreasing, we have
    \begin{equation*}
        \alpha = \Omega\left(\frac{\sqrt{(\varepsilon T)^{2/3}}}{\varepsilon\log ((\varepsilon T)^{2/3})}\right) = \tilde{\Omega}\left(\frac{T^{1/3}}{\varepsilon^{2/3}}\right).
    \end{equation*}

    Now, we move on to the case that $\delta = 0$. Similarly, consider a matrix $X\in\{-1,1\}^{n\times d/2}$ whose columns are all consensus columns, where $n = \lceil\alpha + 1\rceil$. Suppose $d\le \sqrt{\varepsilon T}$. If $T < 2nd/5$, we have $\alpha =\Omega(T/d) = \Omega(\sqrt{T/\varepsilon})$. Otherwise, we can construct an algorithm $\A$ in the same way and use the same argument to show that
    \begin{align*}
        \sum_{j\in[d/2]}\I[\A(X)[j] = X_{1,j}]&\ge d/2 - (d/2-2d/5) \\ &=2d/5 \\
        &\ge 4/5\cdot (d/2).
    \end{align*}
    By Lemma~\ref{lem:consensus_pure}, we have $\alpha = \Omega(d/\varepsilon)$. The $\Omega(\sqrt{T/\varepsilon})$ bound for $d > \sqrt{\varepsilon T}$ follows by the fact that the lower bound is non-decreasing in $d$ and setting $d = \lfloor\sqrt{\varepsilon T}\rfloor$.
\end{proof}

\section{Conclusion}
\label{sec:conclusion}
This paper investigates four fundamental tasks ($\maxsum$, $\minsum$, $\argmaxsum$, and $\argminsum$) in the continual release model under differential privacy. In contrast to the purely additive error guarantees considered by previous research, we study algorithms with both relative and additive error. 

We develop algorithms with poly-logarithmic additive error for $\maxsum$, $\minsum$, and $\argmaxsum$ with adaptive streams, and for $\argminsum$ with non-adaptive streams, significantly improving existing purely additive error bounds. In addition, we provide nearly matching lower bounds, showing the optimality of our proposed algorithms. For $\argminsum$ with adaptive streams, we prove that incorporating a relative error cannot reduce the additive error. This demonstrates a notable separation between non-adaptive and adaptive streams in private continual release.

There remain poly-logarithmic gaps between our upper and lower bounds. Closing these gaps is a direct future direction. Another research direction is to explore if similar improvements can be made by involving a relative error for other tasks in differentially private continual release.

% Acknowledgements should only appear in the accepted version.
\section*{Acknowledgements}
The research was supported in part by an NSFC grant 62432008, RGC RIF grant R6021-20, an RGC TRS grant T43-513/23N-2, RGC CRF grants C7004-22G, C1029-22G and C6015-23G, NSFC/RGC grant CRS\_HKUST601/24 and RGC GRF grants 16207922, 16207423, 16203824 and 16211123.

\section*{Impact Statement}

This paper presents work whose goal is to advance the field of machine learning. There are many potential societal consequences of our work, none of which we feel must be specifically highlighted here.

\bibliography{references}
\bibliographystyle{icml2026}

%%%%%%%%%%%%%%%%%%%%%%%%%%%%%%%%%%%%%%%%%%%%%%%%%%%%%%%%%%%%%%%%%%%%%%%%%%%%%%%
%%%%%%%%%%%%%%%%%%%%%%%%%%%%%%%%%%%%%%%%%%%%%%%%%%%%%%%%%%%%%%%%%%%%%%%%%%%%%%%
% APPENDIX
%%%%%%%%%%%%%%%%%%%%%%%%%%%%%%%%%%%%%%%%%%%%%%%%%%%%%%%%%%%%%%%%%%%%%%%%%%%%%%%
%%%%%%%%%%%%%%%%%%%%%%%%%%%%%%%%%%%%%%%%%%%%%%%%%%%%%%%%%%%%%%%%%%%%%%%%%%%%%%%
\newpage
\appendix
\onecolumn
\section{Algorithms}
\subsection{Algorithm for \texorpdfstring{$\maxsum$}{} and \texorpdfstring{$\minsum$}{}}
We provide a general algorithm that privately and continually estimates any function $f$ of the cumulative sums that is non-negative, monotone, and has bounded sensitivity. As a direct consequence, this solves $\maxsum$ and $\minsum$.

The algorithm keeps the output unchanged until its error is larger than our desired bound. This is achieved by the $\abovethreshold$ mechanism. Once the output becomes invalid, we update it by adding Laplace noise to the correct answer. In addition, we decrease the privacy parameter ($\varepsilon_i$'s) used in the Laplace mechanism and the $\abovethreshold$ mechanism after each update. This helps us save a factor of $\log T$ in the final error. Note that the error remains poly-logarithmic even if we use a consistent privacy parameter throughout the execution. This is in contrast to Algorithm~\ref{alg:argmin}, in which a decreasing privacy parameter is necessary for achieving a poly-logarithmic error. 

We describe the details in Algorithm~\ref{alg:general_f}. The final error bound is stated in Theorem~\ref{thm:general_f}.

\begin{algorithm}[!ht]
  \caption{Privately Releasing $f$}
  \label{alg:general_f}
  \begin{algorithmic}
    \STATE {\bfseries Input:} time horizon $T\in\N$, privacy parameters $\varepsilon,\delta$, relative error parameter $\gamma$, failure probability $\beta$, function $f:\R_{\ge 0}^d\to\R_{\ge 0}$, data stream $(x_1,\dots, x_T)$, where $x_i\in[0,1]^d$.
    \STATE {\bfseries Output:} stream $(a_1,\dots,a_T)\in\R_{\ge 0}^T$.
    \IF{$\delta=0$}
        \STATE $\alpha \gets \frac{40}{\varepsilon\gamma}\cdot 8(\ln T + \ln(4T/\beta))$.
    \ELSE
        \STATE $\alpha \gets \frac{64(\ln T + \ln(4T/\beta))}{\epsilon }\cdot \sqrt{\frac{42\ln(1/\delta)}{\gamma}}$.
    \ENDIF
    \STATE $\varepsilon_j\gets \frac{6}{\alpha}\cdot8(\ln T+\ln(4T/\beta))$ for $j\le \lceil1/\gamma\rceil$ and $\varepsilon_j\gets \frac{2}{\alpha}\cdot \frac{8(\ln T + \ln(4T/\beta))}{(1+\gamma/3)^{j - \lceil1/\gamma\rceil - 1}}$ for $j > \lceil1/\gamma\rceil$.
    \STATE $\alpha_j \gets \frac{8(\ln T+\ln(4T/\beta))}{\varepsilon_j}$ for all $j$.
    \STATE $a_0\gets (1+\gamma)\cdot f(\vec{0}) + \alpha$.
    \STATE $i \gets 1$.
    \STATE Initiate an instance of $\abovethreshold$ with privacy parameter $\varepsilon_1$ and threshold $0$.
    \FOR{$t = 1,\dots, T$}
        \STATE Define query $q_t \equiv (1-\gamma)\cdot f(\sum_{s=1}^tx_s) -\alpha - a_{t-1} + \alpha_i$.
        \STATE Feed $q_t$ to $\abovethreshold$ and obtain outcome $\sigma_t$.
        \IF{$\sigma_t=\top$}
            \STATE $a_t \gets (1+2\gamma/3)\cdot f(\sum_{s=1}^tx_s) +2\alpha/3 + \lap(2/\varepsilon_i)$.
            \STATE $i \gets i + 1$.
            \STATE Initiate an instance of $\abovethreshold$ with privacy parameter $\varepsilon_i$ and threshold $0$.
        \ELSE
            \STATE $a_t\gets a_{t-1}$.
        \ENDIF
    \ENDFOR
  \end{algorithmic}
\end{algorithm}

\begin{theorem}
\label{thm:general_f}
    Let $d,T\in\N$, $\varepsilon,\gamma,\beta\in(0,1)$, $\delta\in[0,1/2)$, and $f:\R_{\ge 0}^d\to\R_{\ge 0}$ be a function such that $0\le f(x+y) -f(x)\le 1$ for any $x\in\R_{\ge 0}^d$ and $y\in[0,1]^d$. In the adaptive continual release model, Algorithm~\ref{alg:general_f} is $(\varepsilon,\delta)$-differentially private and $(\gamma,\alpha,\beta)$-accurate for estimating $f$, where
    \begin{itemize}
        \item $\alpha = O\left(\frac{\log(T/\beta)}{\varepsilon\gamma}\right)$ if $\delta = 0$.
        \item $\alpha = O\left(\frac{\log(T/\beta)}{\varepsilon}\cdot\sqrt{\frac{\log(1/\delta)}{\gamma}}\right)$ if $\delta > 0$.
    \end{itemize}
\end{theorem}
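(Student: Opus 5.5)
Privacy and accuracy are handled separately; both hinge on bounding the number of updates (the number of $\top$ outcomes), call it $m$.

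\textbf{Privacy.} The mechanism is a composition of $\abovethreshold$ instances and Laplace releases. The $i$-th pair uses parameter $\varepsilon_i$. The query $q_t$ has sensitivity $1$ with respect to one record, since $f$ has increments in $[0,1]$ and $a_{t-1}$ is fixed by past outputs. The released $f(\sum_s x_s)$ has sensitivity $1$, so $\lap(2/\varepsilon_i)$ costs $\varepsilon_i/2$. The $i$-th stage therefore costs at most $2\varepsilon_i$, and the total is $\sum_i 2\varepsilon_i$, since the adaptive-model composition theorems apply.

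Let $c=8(\ln T+\ln(4T/\beta))$. The first $\lceil 1/\gamma\rceil$ stages contribute about $(1/\gamma)\cdot 12c/\alpha$. The remaining stages contribute a geometric series
\[
\sum_{j\ge 0}\frac{4c/\alpha}{(1+\gamma/3)^{j}} = O\!\left(\frac{c}{\alpha\gamma}\right),
\]
so pure DP gives $\varepsilon$ once $\alpha\ge 40c/(\varepsilon\gamma)$. For $\delta>0$ we use advanced composition (Theorem~\ref{thm:advanced_comp}). The quantity $\sum_i\varepsilon_i^2$ is $O(c^2/(\alpha^2\gamma))$, again a geometric series with ratio $(1+\gamma/3)^{-2}$, plus $\lceil 1/\gamma\rceil$ equal terms. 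This gives $\varepsilon$ with $\alpha\asymp (c/\varepsilon)\sqrt{\log(1/\delta)/\gamma}$, which matches the choice in Algorithm~\ref{alg:general_f}.

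The privacy bound must hold regardless of $m$, which is why the series is summed to infinity. Infinitely many stages cannot occur since $m\le T$.

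\textbf{Accuracy.} Take a union bound over at most $T$ stages. Each $\abovethreshold$ instance sees at most $T$ queries, so by Theorem~\ref{thm:above_threshold} with failure probability $\beta/(2T)$ it is $\alpha_i$-accurate, with $\alpha_i=c/\varepsilon_i$. Each Laplace noise is at most $\alpha_i/4$ by Theorem~\ref{thm:laplace}, within the same budget. On this event, a $\bot$ at round $t$ means $q_t\le\alpha_i$, i.e.\ $a_{t-1}\ge(1-\gamma)f_t-\alpha$.

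For the upper bound, note that $a_t=a_{t-1}$ was set at the last update time $t_0$, and $f$ is monotone. Hence
\[
a_t\le(1+2\gamma/3)f_{t_0}+2\alpha/3+\alpha_i/4\le(1+\gamma)f_t+\alpha,
\]
where the last step needs $\alpha_i\le 4\alpha/3$. This holds since $\alpha_i\le\alpha/2$ as long as stage $i$ has a large enough $\varepsilon_i$; controlling $\alpha_i$ at later stages is part of the key step below.

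At an update round, $a_t$ is again within the valid band. The one-step increment of $f$ is at most $1$, and this is absorbed.

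\textbf{Key step (main obstacle).} We must show $\alpha_i$ stays $\lesssim\gamma f+\alpha/2$ even as $\varepsilon_i$ shrinks. At an update at stage $i$, $\top$ gives $q_t\ge-\alpha_i$, i.e.
\[
(1-\gamma)f_t\ge a_{t-1}+\alpha-2\alpha_i .
\]
Combining this with $a_{t-1}\approx(1+2\gamma/3)f_{t'}+2\alpha/3$ from the previous update at $t'$ yields $f_t\ge(1+\gamma/3)f_{t'}+\Omega(\alpha)$ when $\alpha_i$ is small relative to $\alpha+\gamma f$. Thus after the first $\lceil 1/\gamma\rceil$ updates, each adding $\Omega(\alpha)$, we have $f\gtrsim\alpha/\gamma$. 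Afterwards $f$ grows by a factor $(1+\gamma/3)$ per update, so at stage $j$ the value $f$ is at least $(\alpha/\gamma)(1+\gamma/3)^{j-\lceil1/\gamma\rceil-1}$. Meanwhile
\[
\alpha_j=\frac{\alpha}{2}(1+\gamma/3)^{j-\lceil1/\gamma\rceil-1}\le \frac{\gamma f}{2},
\]
which closes the induction. I would carry this out as a joint induction on $j$, establishing the growth of $f$ and the bound on $\alpha_j$ together. The delicate part is tracking the constants: the thresholds $2\alpha/3$, $2\gamma/3$, and the $1/3$ factor. This also shows $m=O(\log T/\gamma)$, though privacy did not need that.
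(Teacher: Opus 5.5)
Your proposal is correct and follows the paper's proof essentially step for step. It uses the same basic/advanced composition with a geometric series in $\varepsilon_j$, the same union bound making every $\abovethreshold$ instance $\alpha_i$-accurate and every Laplace noise small, and the same joint induction. That induction gives $f_{t_i}\ge i\alpha$ for $i\le\lceil 1/\gamma\rceil$ and $f_{t_i}\ge\frac{\alpha}{\gamma}(1+\gamma/3)^{i-\lceil1/\gamma\rceil}$ afterwards, hence $2\alpha_{i+1}\le\gamma f_{t_i}$.

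When you fill in the constants, note that $2\alpha_{i+1}$ can be as large as $\gamma f_{t_i}$, which exceeds the $2\gamma/3$ margin. So in the second phase you cannot simply bound $f_{t}\ge(1-\gamma)f_{t}$. Keep the $(1-\gamma)$ factor instead, as the paper does by writing $f_{t_{i+1}}\ge(1-\gamma)f_{t_{i+1}}+\gamma f_{t_i}$ and combining this with $a_{t_i}\ge(1+\gamma/3)f_{t_i}$.
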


\begin{proof}
    We start by proving the privacy of Algorithm~\ref{alg:general_f}. The entire algorithm is composed of a series of $\abovethreshold$ algorithms and Laplace mechanisms, where the $i$-th execution of each is $\varepsilon_i$-differentially private. By basic and advanced composition theorems, it suffices to verify $\sum_{i=1}^T\varepsilon_i\le \varepsilon/2$ for $\delta = 0$ and $
        \frac{1}{2}\sum_{i=1}^T\varepsilon_i^2 + \sqrt{2\ln(1/\delta)\sum_{i=1}^T\varepsilon_i^2}\le \varepsilon/2
    $ for $\delta > 0$. For the former, we have
    \begin{align*}
        \sum_{i=1}^T\varepsilon_i &\le \sum_{i=1}^{\lceil1/ \gamma\rceil}\varepsilon_i + \sum_{i > \lceil1/\gamma\rceil}\varepsilon_i \\
        &\le \frac{6}{\alpha}\cdot8(\ln T + \ln(4T/\beta))\cdot \lceil1/\gamma\rceil + \frac{2}{\alpha}\cdot 8(\ln T + \ln(4T/\beta))\sum_{j=0}^{\infty}\frac{1}{(1+\gamma/3)^j} \\
        &= \frac{6}{\alpha}\cdot8(\ln T + \ln(4T/\beta))\cdot \lceil1/\gamma\rceil + \frac{2}{\alpha}\cdot 8(\ln T + \ln(4T/\beta))\cdot \frac{3+\gamma}{\gamma} \\
        &\le \frac{6}{\alpha}\cdot8(\ln T + \ln(4T/\beta))\cdot \frac{2}{\gamma} + \frac{2}{\alpha}\cdot 8(\ln T + \ln(4T/\beta))\cdot \frac{4}{\gamma} \\
        &\le \varepsilon/2.
    \end{align*}
    For the latter, we have
    \begin{align*}
        \sum_{i=1}^T\varepsilon_i^2 &\le \sum_{i=1}^{\lceil1/ \gamma\rceil}\varepsilon_i^2 + \sum_{i > \lceil1/\gamma\rceil}\varepsilon_i^2 \\
        &\le \left(\frac{6}{\alpha}\cdot8(\ln T + \ln(4T/\beta))\right)^2\cdot \lceil1/\gamma\rceil + \left(\frac{2}{\alpha}\cdot 8(\ln T + \ln(4T/\beta))\right)^2\sum_{j=0}^{\infty}\frac{1}{(1+\gamma/3)^{2j}} \\
        &\le \left(\frac{8(\ln T+\ln(4T/\beta))}{\alpha}\right)^2\cdot(72/\gamma + 12/\gamma) \\
        &\le \frac{\varepsilon^2}{32\ln(1/\delta)}.
    \end{align*}
    The above is less than $1$, hence
    \begin{align*}
        \frac{1}{2}\sum_{i=1}^T\varepsilon_i^2 + \sqrt{2\ln(1/\delta)\sum_{i=1}^T\varepsilon_i^2} &\le \frac{1}{2}\sqrt{\sum_{i=1}^T\varepsilon_i^2} + \sqrt{2\ln(1/\delta)\sum_{i=1}^T\varepsilon_i^2}\\
        &\le 2\sqrt{2\ln(1/\delta)\sum_{i=1}^T\varepsilon_i^2} \\
        &\le \varepsilon/2.
    \end{align*}

    We then prove the utility. By Theorem~\ref{thm:above_threshold}, with probability $1 - \beta / 2T$ the $i$-th execution of $\abovethreshold$ is $\alpha_i$-accurate. By Theorem~\ref{thm:laplace}, with probability $1 - \beta/2T$ the $i$-th Laplace mechanism has error no more than $2\ln(2T/\beta)/\varepsilon_i$. From now on, we condition on the event that all the above happen. By the union bound, this holds with probability $1-\beta$.

    Let $t_1<t_2<\cdots$ be the time-steps at which $\abovethreshold$ responses $\top$. For $i\le \lceil1/\gamma\rceil$, we have $2\ln(2T/\beta)/\varepsilon_i\le \alpha / 3$. Therefore, 
    \begin{align*}
        a_{t_i} &\le (1+2\gamma/3)\cdot f(\sum_{s=1}^{t_i}x_s) +2\alpha/3 + 2\ln(2T/\beta)/\varepsilon_i \\&\le (1+\gamma)\cdot f(\sum_{s=1}^{t_i}x_s) +\alpha
    \end{align*}
    and 
    \begin{align*}
        a_{t_i} &\ge (1+2\gamma/3)\cdot f(\sum_{s=1}^{t_i}x_s) +2\alpha/3 - 2\ln(2T/\beta)/\varepsilon_i \\&\ge (1+2\gamma/3)\cdot f(\sum_{s=1}^{t_i}x_s) +\alpha/3.
    \end{align*}
    Moreover, $\sigma_{t_i}=\top$ implies $q_{t_i} \ge -\alpha_i$, which yields
    \begin{align*}
        f(\sum_{s=1}^{t_i}x_s) &\ge (1 - \gamma)\cdot f(\sum_{s=1}^{t_i}x_s) \\
        &\ge a_{t_i - 1} - 2\alpha_i+\alpha \\
        &\ge a_{t_i - 1} + 2\alpha/3
    \end{align*}
    since $2\alpha_i\le \alpha / 3$. Using the above results and $a_0 \ge \alpha$, it can be shown by induction that $f(\sum_{s=1}^{t_i}x_s)\ge i\cdot \alpha$ for $i\le \lceil1/\gamma\rceil$.

    We then again prove by induction that 
    \begin{equation*}
        f(\sum_{s=1}^{t_i}x_s)\ge \frac{\alpha}{\gamma}\cdot (1+\gamma/3)^{i - \lceil1/\gamma\rceil}
    \end{equation*}
    for $i \ge \lceil1/\gamma\rceil$. Note that this already holds for $i = \lceil1/\gamma\rceil$. Suppose that it holds for some $i\ge \lceil1/\gamma\rceil$, we have
    \begin{align*}
        a_{t_i} &\ge (1+2\gamma/3)\cdot f(\sum_{s=1}^{t_i}x_s) +2\alpha/3 - 2\ln(2T/\beta)/\varepsilon_i \\
        &\ge (1+\gamma/3)\cdot f(\sum_{s=1}^{t_i}x_s)
    \end{align*}
    since
    \begin{align*}
        2\ln(2T/\beta)/\varepsilon_i &\le \frac{\alpha}{3}\cdot (1+\gamma/3)^{i-\lceil1/\gamma\rceil}\\
        &\le \frac{\gamma}{3}\cdot f(\sum_{s=1}^{t_i}x_s).
    \end{align*}
    Then, $q_{t_{i+1}}\ge -\alpha_{i+1}$ implies
    \begin{align*}
        f(\sum_{s=1}^{t_{i+1}}x_s) &\ge (1-\gamma)\cdot f(\sum_{s=1}^{t_{i+1}}x_s) + \gamma\cdot f(\sum_{s=1}^{t_i}x_s)\\
        &\ge \alpha + a_{t_i} - 2\alpha_{i+1} + \gamma\cdot f(\sum_{s=1}^{t_i}x_s) \\
        &\ge (1+\gamma/3)\cdot f(\sum_{s=1}^{t_i}x_s)
    \end{align*}
    since
    \begin{align*}
        2\alpha_{i+1} &\le \alpha\cdot (1+\gamma/3)^{i - \lceil1/\gamma\rceil} \\
        &\le \gamma \cdot f(\sum_{s=1}^{t_i}x_s).
    \end{align*}
    As a consequence, for any $i > \lceil1/\gamma\rceil$, we also have
    \begin{align*}
        a_{t_i} &\le (1+2\gamma/3)\cdot f(\sum_{s=1}^{t_i}x_s) +2\alpha/3 + 2\ln(2T/\beta)/\varepsilon_i \\&\le (1+\gamma)\cdot f(\sum_{s=1}^{t_i}x_s) +\alpha.
    \end{align*}
    By the monotonicity of $f$, this directly extends to $a_t\le (1+\gamma)\cdot f(\sum_{s=1}^tx_s)+\alpha$ for all $t\in[T]$.

    Finally, we prove a lower bound on $a_t$. We have already shown
    \begin{align*}
        a_{t_i}&\ge (1+2\gamma/3)\cdot f(\sum_{s=1}^{t_i}x_s)+\alpha/3\\
        &\ge(1-\gamma)\cdot f(\sum_{s=1}^{t_i}x_s)-\alpha
    \end{align*}
    for $i\le \lceil1/\gamma\rceil$ and
    \begin{align*}
        a_{t_i}&\ge (1+\gamma/3)\cdot f(\sum_{s=1}^{t_i}x_s)\\
        &\ge(1-\gamma)\cdot f(\sum_{s=1}^{t_i}x_s)-\alpha
    \end{align*}
    for $i > \lceil1/\gamma\rceil$. For $t\notin\{t_1,t_2,\dots\}$, let $i=\min\{j:t<t_j\}$. Since $\sigma_t=\bot$, we have $q_t \le \alpha_i$, which directly leads to
    \begin{align*}
        a_t &= a_{t-1} \\
        &\ge (1-\gamma)\cdot f(\sum_{s=1}^{t_i}x_s)-\alpha + \alpha_i - \alpha_i \\
        &= (1-\gamma)\cdot f(\sum_{s=1}^{t_i}x_s)-\alpha.
    \end{align*}
\end{proof}

\subsection{Algorithm for \texorpdfstring{$\argmaxsum$}{}}

Our algorithm for $\argmaxsum$ is similar to Algorithm~\ref{alg:general_f}, with the Laplace mechanism replaced by the exponential mechanism. We illustrate the details in Algorithm~\ref{alg:argmax} and the results in Theorem~\ref{thm:upper_argmax}.

\begin{algorithm}[!ht]
  \caption{Privately Releasing $\argmaxsum$}
  \label{alg:argmax}
  \begin{algorithmic}
    \STATE {\bfseries Input:} time horizon $T\in\N$, privacy parameters $\varepsilon,\delta$, relative error parameter $\gamma$, failure probability $\beta$, data stream $(x_1,\dots, x_T)$, where $x_i\in[0,1]^d$.
    \STATE {\bfseries Output:} stream $(a_1,\dots,a_T)\in[d]^T$.
    \IF{$\delta = 0$}
        \STATE $\alpha \gets \frac{768(\ln T+ \ln(4dT/\beta))}{\varepsilon\gamma}$.
    \ELSE
        \STATE $\alpha \gets \sqrt{\frac{224\ln(1/\delta)}{\gamma}}\cdot\frac{48(\ln T + \ln(4dT/\beta))}{\varepsilon}$.
    \ENDIF
    \STATE $\varepsilon_j \gets \frac{48(\ln T+\ln(4dT/\beta))}{\alpha}$ for $j\le \lceil3/\gamma\rceil$ and $\varepsilon_j\gets \frac{48(\ln T+ \ln(4dT/\beta))}{\alpha \cdot (1+\gamma/3)^{j - \lceil3/\gamma\rceil - 1}}$ for $j>\lceil 3/ \gamma\rceil$.
    \STATE $\alpha_j\gets \frac{8(\ln T+ \ln(4T/\beta))}{\varepsilon_j}$ for all $j$. 
    \STATE $a_0\gets 1$.
    \STATE $i\gets 1$. 
    \STATE Initiate an instance of $\abovethreshold$ with privacy parameter $\varepsilon_1$ and threshold $0$. \;
    \FOR{$t = 1,\dots, T$}
        \STATE Define query $q_t \equiv (1-\gamma)\cdot \max_{j\in[d]}\sum_{s=1}^tx_s[j] -\alpha - \sum_{s=1}^tx_s[a_{t-1}] + \alpha_i$.
        \STATE Feed $q_t$ to $\abovethreshold$ and obtain outcome $\sigma_t$.
        \IF{$\sigma_t=\top$}
            \STATE Define $\ell_t(j) \equiv -\sum_{s=1}^tx_s[j]$ for $j\in[d]$.
            \STATE Set $a_t \gets k$ with probability $\frac{\exp(-\varepsilon_i\ell_t(k)/2)}{\sum_{j\in[d]}\exp(-\varepsilon_i\ell_t(j)/2)}$.
            \STATE $i\gets i + 1$. 
            \STATE Initiate an instance of $\abovethreshold$ with privacy parameter $\varepsilon_i$ and threshold $0$.
        \ELSE
            \STATE $a_t\gets a_{t-1}$.
        \ENDIF
    \ENDFOR
  \end{algorithmic}
\end{algorithm}

\begin{theorem}
\label{thm:upper_argmax}
    Let $d, T\in\N$, $\varepsilon,\gamma,\beta\in(0, 1)$, and $\delta\in[0, 1/2)$. In the adaptive continual release model, Algorithm~\ref{alg:argmax} is $(\varepsilon,\delta)$-differentially private and $(\gamma,\alpha,\beta)$-accurate for $\argmaxsum$, where 
    \begin{itemize}
        \item $\alpha = O\left(\frac{\log(dT/\beta)}{\varepsilon\gamma}\right)$ if $\delta = 0$.
        \item $\alpha = O\left(\frac{\log(dT/\beta)}{\varepsilon}\cdot\sqrt{\frac{\log(1/\delta)}{\gamma}}\right)$ if $\delta >0$.
    \end{itemize}
\end{theorem}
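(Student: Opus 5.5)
The argument runs parallel to the proof of Theorem~\ref{thm:general_f}: a privacy part by composition, then a utility part conditioned on a good event and organized around the update times $t_1<t_2<\cdots$ at which $\abovethreshold$ outputs $\top$.

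\textbf{Privacy.} Algorithm~\ref{alg:argmax} is a sequence of pairs: the $i$-th $\abovethreshold$ instance and the $i$-th exponential mechanism, each $\varepsilon_i$-DP. The score $\ell_t(j)=-\sum_{s\le t}x_s[j]$ has sensitivity $1$, and each query $q_t$ has sensitivity at most $2$, since both the max term and the selected coordinate move by at most $1$. I would absorb this factor into the constants, running $\abovethreshold$ on $q_t/2$ if needed. With basic composition (Theorem~\ref{thm:basic_comp}) for $\delta=0$ and advanced composition (Theorem~\ref{thm:advanced_comp}) for $\delta>0$, it then suffices to check $\sum_i\varepsilon_i\le\varepsilon/2$ and $\sum_i\varepsilon_i^2\le \varepsilon^2/(32\ln(1/\delta))$. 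Both follow exactly as before:
\begin{itemize}
\item the first $\lceil 3/\gamma\rceil$ terms contribute $O(\log(dT/\beta)/(\alpha\gamma))$;
\item the geometric tail with ratio $(1+\gamma/3)^{-1}$ contributes another $O(1/\gamma)$ factor.
\end{itemize}
The chosen values of $\alpha$ make these sums small enough. Since the mechanisms are composed adaptively, the adaptive-model composition theorems stated in the preliminaries apply directly.

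\textbf{Good event.} I condition on two things, each with failure probability $\beta/(2T)$ per instance, and take a union bound over at most $T$ updates:
\begin{itemize}
\item every $\abovethreshold$ instance is $\alpha_i$-accurate (Theorem~\ref{thm:above_threshold} with at most $T$ queries);
\item every exponential mechanism (Theorem~\ref{thm:exponential}) returns $a_{t_i}$ with $\sum_{s\le t_i}x_s[a_{t_i}]\ge M_{t_i}-2\ln(2dT/\beta)/\varepsilon_i$, where $M_t=\max_j\sum_{s\le t}x_s[j]$.
\end{itemize}

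\textbf{Correctness between updates.} At non-update rounds, $\sigma_t=\bot$ gives $q_t\le\alpha_i$. This is exactly $(1-\gamma)M_t-\alpha\le\sum_{s\le t}x_s[a_{t-1}]$, and $a_t=a_{t-1}$, so the output is correct. At an update round, the exponential-mechanism guarantee gives correctness as long as $2\ln(2dT/\beta)/\varepsilon_i\le \gamma M_{t_i}+\alpha$. So everything reduces to showing that $M$ grows fast enough relative to the decay of $\varepsilon_i$.

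\textbf{Growth of $M$ — the main step.} At update $i$, $\sigma=\top$ gives
\[
\sum_{s\le t_i}x_s[a_{t_i-1}]\le(1-\gamma)M_{t_i}-\alpha+2\alpha_i .
\]
By monotonicity of coordinate sums, the left side is at least the value the previous selection had at time $t_{i-1}$, which is at least $M_{t_{i-1}}-\mathrm{err}_{i-1}$. Combining,
\[
(1-\gamma)M_{t_i}\ge M_{t_{i-1}}-\mathrm{err}_{i-1}+\alpha-2\alpha_i .
\]
I then run the same two-phase induction as in Theorem~\ref{thm:general_f}:
\begin{itemize}
\item For $i\le\lceil3/\gamma\rceil$, the errors $\mathrm{err}_{i-1}$ and $2\alpha_i$ are at most $\alpha/3$ each, so $M$ increases by at least $\alpha/3$ per update. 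Hence $M_{t_i}\ge i\alpha/3$, and by $i=\lceil3/\gamma\rceil$ we have $M\ge\alpha/\gamma$.
\item Afterwards, I show inductively that $M_{t_i}\ge(\alpha/\gamma)(1+\gamma/3)^{i-\lceil3/\gamma\rceil}$. Because $\varepsilon_i$ decays at the same geometric rate, both the error terms and $\alpha_i$ are at most a small constant times $\gamma M_{t_{i-1}}$. Dividing by $1-\gamma$ then gives $M_{t_i}\ge(1+\gamma/3)M_{t_{i-1}}$.
\end{itemize}
This simultaneously certifies the exponential-mechanism accuracy condition at every update, which completes the utility proof.

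The main obstacle is bookkeeping the constants (48, 768, 224) so that every error term fits inside the $\gamma/3$ slack. In particular, the selection error at the previous update must be carried into the next trigger inequality. That is the one point where the argument differs from the Laplace-based proof of Theorem~\ref{thm:general_f}.
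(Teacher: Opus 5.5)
Your proposal is correct and follows the paper's proof essentially step for step. It uses the same composition check on the $\varepsilon_i$, the same good event, correctness at $\bot$ rounds read directly off $q_t\le\alpha_i$, and the same two-phase induction (growth of $M_t$ by $\alpha/3$ per update for the first $\lceil 3/\gamma\rceil$ updates, then by a factor $1+\gamma/3$), with the previous selection's exponential-mechanism error carried into the next trigger inequality exactly as in the paper's chain $(1-2\gamma/3)M_{t_{i+1}}\ge\sum_{s\le t_{i+1}}x_s[a_{t_i}]\ge(1-\gamma/3)M_{t_i}$.

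Your observation that $q_t$ has sensitivity up to $2-\gamma$ (the max term and the selected coordinate can move in opposite directions) is correct, and the paper passes over it by treating each $\abovethreshold$ instance as $\varepsilon_i$-DP. The simplest fix is to charge $2\varepsilon_i$ per instance and enlarge the leading constant in $\alpha$; this leaves every utility inequality intact because $\varepsilon_i$ and $\alpha_i$ are defined relative to $\alpha$. Rescaling to $q_t/2$ instead would also require doubling the $+\alpha_i$ offset and re-tuning the phase constants.
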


\begin{proof}
    The proof strategy is similar to the proof of Theorem~\ref{thm:general_f}. For privacy, by basic and advanced composition theorems, it suffices to verify $\sum_{i=1}^T\varepsilon_i\le \varepsilon/2$ for $\delta = 0$ or $\frac{1}{2}\sum_{i=1}^T\varepsilon_i^2 + \sqrt{2\ln(1/\delta)\sum_{i=1}^T\varepsilon_i^2}\le \varepsilon/2$ for $\delta > 0$. For the former, we have
    \begin{align*}
        \sum_{i=1}^T\varepsilon_i &\le \sum_{i=1}^{\lceil3/ \gamma\rceil}\varepsilon_i + \sum_{i > \lceil3/\gamma\rceil}\varepsilon_i \\
        &\le \frac{48(\ln T + \ln(4dT/\beta))}{\alpha}\cdot \lceil3/\gamma\rceil + \frac{48(\ln T + \ln(4dT/\beta))}{\alpha}\sum_{j=0}^{\infty}\frac{1}{(1+\gamma/3)^j} \\
        %&= \frac{48(\ln T + \ln(4dT/\beta))}{\alpha}\cdot \lceil3/\gamma\rceil + \frac{48(\ln T + \ln(4dT/\beta))}{\alpha}\cdot \frac{3+\gamma}{\gamma} \\
        &\le \frac{48(\ln T + \ln(4dT/\beta))}{\alpha}\cdot \frac{4}{\gamma} + \frac{48(\ln T + \ln(4dT/\beta))}{\alpha}\cdot \frac{4}{\gamma} \\
        &\le \varepsilon/2.
    \end{align*}
    For the latter, we have
    \begin{align*}
        \sum_{i=1}^T\varepsilon_i^2 &\le \sum_{i=1}^{\lceil3/ \gamma\rceil}\varepsilon_i^2 + \sum_{i > \lceil3/\gamma\rceil}\varepsilon_i^2 \\
        &\le \left(\frac{48(\ln T + \ln(4dT/\beta))}{\alpha}\right)^2\cdot \lceil3/\gamma\rceil + \left(\frac{48(\ln T + \ln(4dT/\beta))}{\alpha}\right)^2\sum_{j=0}^{\infty}\frac{1}{(1+\gamma/3)^{2j}} \\
        &\le \left(\frac{48(\ln T+\ln(4dT/\beta))}{\alpha}\right)^2\cdot(4/\gamma + 3/\gamma) \\
        &\le \frac{\varepsilon^2}{32\ln(1/\delta)}.
    \end{align*}
    The above is less than $1$, hence
    \begin{align*}
        \frac{1}{2}\sum_{i=1}^T\varepsilon_i^2 + \sqrt{2\ln(1/\delta)\sum_{i=1}^T\varepsilon_i^2} &\le \frac{1}{2}\sqrt{\sum_{i=1}^T\varepsilon_i^2} + \sqrt{2\ln(1/\delta)\sum_{i=1}^T\varepsilon_i^2}\\
        &\le 2\sqrt{2\ln(1/\delta)\sum_{i=1}^T\varepsilon_i^2} \\
        &\le \varepsilon/2.
    \end{align*}

    We next prove the utility guarantee. By Theorem~\ref{thm:above_threshold}, with probability $1-\beta/2T$ the $i$-th execution of $\abovethreshold$ is $\alpha_i$-accurate. By Theorem~\ref{thm:exponential}, with probability $1-\beta/2T$ the $i$-th exponential mechanism has error no more than $2\ln(2dT/\beta)$. From now on, we condition on the event that all the above happen. By the union bound, this holds with probability $1-\beta$.

    Let $t_1<t_2<\cdots$ be the time-steps in which $\abovethreshold$ responses $\top$. For $i\le \lceil3/\gamma\rceil$, we have $2\ln(2dT/\beta) / \varepsilon_i \le \alpha/3$. Therefore,
    \begin{equation*}
        \sum_{s=1}^{t_i}x_s[a_{t_i}] \ge \max_{j\in[d]}\sum_{s=1}^{t_i}x_s[j] - \alpha / 3.
    \end{equation*}
    Moreover, $\sigma_{t_i}=\top$ implies $q_{t_i}\ge -\alpha_i$, which yields
    \begin{align*}
        \max_{j\in[d]}\sum_{s=1}^{t_i}x_s[j] &\ge (1-\gamma)\cdot\max_{j\in[d]}\sum_{s=1}^{t_i}x_s[j] \\
        &\ge \sum_{s=1}^{t_i}x_s[a_{t_{i-1}}] + \alpha - 2\alpha_i \\
        &\ge \sum_{s=1}^{t_i}x_s[a_{t_{i-1}}] + 2\alpha / 3
    \end{align*}
    since $2\alpha_i\le \alpha / 3$. Using the above results, it can be shown by induction that
    \begin{equation*}
        \max_{j\in[d]}\sum_{s=1}^{t_i}x_s[j]\ge\frac{i+1}{3}\cdot \alpha
    \end{equation*}
    for $i\le \lceil3/\gamma\rceil$.

    We then again prove by induction that
    \begin{equation*}
        \max_{j\in[d]}\sum_{s=1}^{t_i}x_s[j]\ge\frac{\alpha}{\gamma}\cdot(1+\gamma/3)^{i - \lceil3/\gamma\rceil}
    \end{equation*}
    for $i\ge \lceil3/\gamma\rceil$. Note that this already holds for $i= \lceil3/\gamma\rceil$. Suppose that it holds for some $i\ge \lceil3/\gamma\rceil$, we have
    \begin{align*}
        \sum_{s=1}^{t_i}x_s[a_{t_i}] &\ge \max_{j\in[d]}\sum_{s=1}^{t_i}x_s[j] - 2\ln(2dT/\beta)/\varepsilon_i \\
        &\ge (1-\gamma/3)\cdot \max_{j\in[d]}\sum_{s=1}^{t_i}x_s[j]
    \end{align*}
    since
    \begin{align*}
        2\ln(2dT/\beta)/\varepsilon_i &\le \frac{\alpha}{3}\cdot(1+\gamma/3)^{i - \lceil3/\gamma\rceil} \\
        &\le \frac{\gamma}{3}\cdot \max_{j\in[d]}\sum_{s=1}^{t_i}x_s[j].
    \end{align*}
    Then, $q_{t_{i+1}}\ge -\alpha_{i+1}$ implies
    \begin{align*}
        (1-2\gamma/3)\cdot \max_{j\in[d]}\sum_{s=1}^{t_{i+1}}x_s[j] &= (1-\gamma)\cdot \max_{j\in[d]}\sum_{s=1}^{t_{i+1}}x_s[j] + \gamma/3\cdot \max_{j\in[d]}\sum_{s=1}^{t_{i+1}}x_s[j] \\
        &\ge \sum_{s=1}^{t_{i+1}}x_s[a_{t_i}] + \alpha - 2\alpha_{i+1} + \gamma/3\cdot \max_{j\in[d]}\sum_{s=1}^{t_{i+1}}x_s[j] \\
        &\ge \sum_{s=1}^{t_{i+1}}x_s[a_{t_i}]
    \end{align*}
    since
    \begin{align*}
        2\alpha_{i+1} &\le \alpha/3\cdot(1+\gamma/3)^{i - \lceil3/\gamma\rceil}\\
        &\le \gamma/3\cdot \max_{j\in[d]}\sum_{s=1}^{t_{i}}x_s[j] \\
        &\le \gamma/3\cdot \max_{j\in[d]}\sum_{s=1}^{t_{i+1}}x_s[j].
    \end{align*}
    Thus, we have
    \begin{align*}
        \max_{j\in[d]}\sum_{s=1}^{t_{i+1}}x_s[j] &\ge \frac{1}{1-2\gamma/3}\sum_{s=1}^{t_{i+1}}x_s[a_{t_i}] \\
        &\ge \frac{1}{1-2\gamma/3}\sum_{s=1}^{t_{i}}x_s[a_{t_i}] \\
        &\ge \frac{1-\gamma/3}{1-2\gamma/3}\max_{j\in[d]}\sum_{s=1}^{t_{i}}x_s[j] \\
        &\ge \frac{\alpha}{\gamma}\cdot(1+\gamma/3)^{i + 1 -\lceil3/\gamma\rceil},
    \end{align*}
    where the last inequality is due to the inductive hypothesis and $\frac{1-\gamma/3}{1-2\gamma/3}\ge (1+\gamma/3)$.

    We now prove the desired result. For $i\le\lceil3/\gamma\rceil$, we already have
    \begin{align*}
        \sum_{s=1}^{t_i}x_s[a_{t_i}] &\ge \max_{j\in[d]}\sum_{s=1}^{t_i}x_s[j] - \alpha / 3  \\
        &\ge (1-\gamma)\cdot \max_{j\in[d]}\sum_{s=1}^{t_i}x_s[j] - \alpha.
    \end{align*}
    For $i > \lceil3/\gamma\rceil$, we also have
    \begin{align*}
        \sum_{s=1}^{t_i}x_s[a_{t_i}]&\ge (1-\gamma/3)\cdot \max_{j\in[d]}\sum_{s=1}^{t_i}x_s[j] \\
        &\ge (1-\gamma)\cdot \max_{j\in[d]}\sum_{s=1}^{t_i}x_s[j] - \alpha.
    \end{align*}
    Now consider $t\notin\{t_1,t_2,\dots\}$ and let $i = \min\{j:t<t_j\}$. Since $\sigma_t=\bot$, we have $q_t\le \alpha_i$, which directly leads to
    \begin{align*}
        \sum_{s=1}^{t}x_s[a_{t}] &=\sum_{s=1}^{t}x_s[a_{t-1}] \\
        &\ge (1-\gamma)\cdot \max_{j\in[d]}\sum_{s=1}^{t_i}x_s[j] - \alpha +\alpha_i - \alpha_i \\
        &= (1-\gamma)\cdot \max_{j\in[d]}\sum_{s=1}^{t_i}x_s[j] - \alpha.
    \end{align*}
\end{proof}

\section{Non-Adaptive Lower Bounds}

\subsection{Non-Adaptive Lower Bound for \texorpdfstring{$\maxsum$}{} and \texorpdfstring{$\minsum$}{}}

Following the proof strategy of~\citet{jain2023price}, we show that an algorithm for $\maxsum$ or $\minsum$ can be used to release one-way marginals. We will make use of the following hardness result.
\begin{lemma}[\citep{hardt2010geometry,lyu2025fingerprinting}]
\label{lem:marginal}
    Let $d$ be sufficiently large and $\X=\{0,1\}^d$. Suppose $\A$ is an algorithm that takes as input a dataset $S=(x_1,\dots,x_n)\in\X^n$ and outputs a $d$-dimensional vector such that
    \begin{equation*}
        \Pr_{\A}\left[\left\lvert\A(S)[j] - \sum_{i=1}^nx_i[j]\right\rvert\le n/100~\text{for all }j\in[d]\right] \ge 2/3.
    \end{equation*}
    Then: 
    \begin{itemize}
        \item For some $n=\Theta(d/\varepsilon)$, algorithm $\A$ cannot be $\varepsilon$-differentially private.
        \item For some $n = \Theta(\sqrt{d}/\varepsilon)$, algorithm $\A$ cannot be $(\varepsilon,o(1/n))$-differentially private.
    \end{itemize}
\end{lemma}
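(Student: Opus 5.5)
The statement is Lemma~\ref{lem:marginal}, the classical hardness of privately releasing one-way marginals with error $n/100$. The plan is to derive each item from a hardness result already available in the paper. Item~1 follows from Lemma~\ref{lem:consensus_pure}, and item~2 from a fingerprinting lower bound in the style of Lemma~\ref{lem:consensus_approximate}. In both cases the marginal-release algorithm $\A$ is converted into a sign-recovery algorithm for consensus columns.

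\textbf{Pure DP.} Let $X\in\{-1,1\}^{n\times d}$ have only consensus columns. Map each row $X_i$ to $x_i=(X_i+\mathbf{1})/2\in\{0,1\}^d$; this row-wise map preserves neighboring datasets. Run $\A$ on the resulting dataset. For each $j$, output the sign $+1$ if $\A(S)[j]\ge n/2$ and $-1$ otherwise. Every column sum is either $0$ or $n$, so an error of at most $n/100$ recovers every sign correctly with probability $2/3$. This exceeds the $4d/5$ accuracy required by Lemma~\ref{lem:consensus_pure}, which then gives $n=\Omega(d/\varepsilon)$. Hence for $n=c\,d/\varepsilon$ with a small constant $c$, $\A$ cannot be $\varepsilon$-DP. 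Alternatively, one can run a direct packing argument: take $2^{\Omega(d)}$ datasets whose rows are copies of codewords of a code with relative distance $1/4$, and apply group privacy across $n$ rows, which gives $e^{-\varepsilon n}\cdot 2^{\Omega(d)}\le 1$.

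\textbf{Approximate DP.} The same reduction feeds into Lemma~\ref{lem:consensus_approximate}, which needs $|W|\ge 0.999d$ and $3d/4$ correct signs, and gives $n=\Theta(\sqrt{d}/(\varepsilon\log d))$. This is off by a $\log d$ factor, and its $\delta$ requirement $o(1/n^3)$ is stronger than the stated $o(1/n)$. To match the statement exactly, I would instead use a standard fingerprinting argument with a tight $\sqrt{d}$ dependence, such as the version of~\citet{lyu2025fingerprinting} that removes the log factor and allows $\delta=o(1/n)$.

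Concretely, I would draw biases $p_j\sim\mathrm{Unif}[0,1]$ and rows $x_i[j]\sim\mathrm{Bern}(p_j)$. The accuracy guarantee $n/100$ then forces the correlation score $\sum_i\sum_j(\A(S)[j]/n-p_j)(x_i[j]-p_j)$ to be $\Omega(d)$ in expectation, by the fingerprinting lemma. On the other hand, DP bounds each row's contribution by $O(\varepsilon\sqrt{d}+\delta d)$. Comparing the two gives $n\varepsilon\sqrt{d}\gtrsim d$, so $n=\Omega(\sqrt{d}/\varepsilon)$. The case $\varepsilon<1$ reduces to $\varepsilon=1$ by the standard secrecy-of-the-sample or duplication trick: copy each row $\lceil 1/\varepsilon\rceil$ times, then apply group privacy.

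\textbf{Main obstacle.} The hardest step is this last one: getting the exact $\Theta(\sqrt{d}/\varepsilon)$ rate with $\delta=o(1/n)$ rather than a polylog-lossy rate. That precision rests on the refined fingerprinting analysis, which I would cite rather than reprove.
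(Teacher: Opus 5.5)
Your proposal is correct and follows the same route the paper relies on. The paper gives no proof of this lemma; it cites the packing bound of \citet{hardt2010geometry} for pure DP, which your thresholding reduction to Lemma~\ref{lem:consensus_pure} (itself a packing argument) reproduces, and a fingerprinting bound from \citet{lyu2025fingerprinting} for approximate DP, where your per-row estimate $O(\varepsilon\sqrt{d}+\delta d)$ indeed gives $n=\Omega(\sqrt{d}/\varepsilon)$ with $\delta=o(1/n)$.

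Two small remarks on your sketch. First, the usual fingerprinting lemma is stated in terms of expected error, so the $2/3$ success guarantee is most easily handled by taking coordinate-wise medians of $O(1)$ independent runs of $\A$, at a constant-factor cost in $\varepsilon$ and $\delta$. Second, no reduction to $\varepsilon=1$ is needed, since $\varepsilon$ already enters your per-row bound; if you did perform one, it would have to be duplication plus group privacy, because secrecy of the sample transfers upper bounds rather than lower bounds.
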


\begin{theorem}
\label{thm:lower_maxminsum}
    Let $d,T\in\N$ be sufficiently large. Any $(\varepsilon,\delta)$-differentially private mechanism $\M$ solving $\maxsum$ or $\minsum$ in the non-adaptive continual release model with relative error $\gamma$ and success probability $2/3$ must incur an additive error of
    \begin{itemize}
        \item $\alpha = \Omega\left(\min\left\{\frac{1}{\varepsilon\gamma},\frac{d}{\varepsilon},\sqrt{\frac{T}{\varepsilon}}, T\right\}\right)$ if $\delta = 0$.
        \item $\alpha =\Omega\left(\min\left\{\frac{\sqrt{1/\gamma}}{\varepsilon}, \frac{\sqrt{d}}{\varepsilon}, \frac{T^{1/3}}{\varepsilon^{2/3}}, T\right\}\right)$ if $\delta >0$ and $\delta = o(\varepsilon/T)$.
    \end{itemize}
\end{theorem}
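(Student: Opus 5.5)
Following \citet{jain2023price}, we reduce from releasing one-way marginals (Lemma~\ref{lem:marginal}) and arrange the stream so that every cumulative sum, and hence the relative error term, stays small. The reduction is as follows. Given $S=(x_1,\dots,x_n)\in\{0,1\}^n$ with $d$ columns, we build a stream of length roughly $n d'$ with $d'=O(d)$ attributes. The first $n$ records encode $S$. Then, for each column $j$ in turn, we append a block of $m\approx n$ records that adds $1$'s to all coordinates except a designated one (for $\maxsum$, the negation-style encoding $x\mapsto$ complement, with $1$'s at column $j$ only, as in the sequential embedding). From the change in the released value at the block boundaries we read off $\sum_i x_i[j]$ up to the error.

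Because every coordinate sum in the relevant window is at most $O(n)$, the relative error contributes at most $\gamma\cdot O(n)$. If $\gamma n\le n/400$ and $\alpha\le n/400$, the total error per marginal is at most $n/100$, so Lemma~\ref{lem:marginal} applies. Group privacy is not needed, since each row of $S$ affects one record and the remaining records are data-independent, so the whole stream is non-adaptive.

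For the $\min$ version, and to handle sums that would otherwise accumulate across blocks, we use the trick of padding. Alternatively, for $\minsum$ we encode $1-x$ so that the queried coordinate is the minimum. To keep the relative term small over all $d$ blocks, we ensure that the function value queried at each boundary is $O(n)$: during block $j$ only the coordinates not yet processed receive increments, and processed coordinates are saturated with value $\ge 2n$ so that they stop being the max (resp. min). Making this bookkeeping work for both $\max$ and $\min$ with values bounded by $O(n)$ is the main obstacle. I would first try the exact Jain et al.\ construction and verify that its max is always $\le 2n$.

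Parameters and case analysis. The reduction gives $\alpha=\Omega(n)$ with $n=\Theta(d/\varepsilon)$ (pure DP) or $\Theta(\sqrt d/\varepsilon)$ (approximate DP), provided that $T\gtrsim nd$ and $\gamma\lesssim 1/n$.
\begin{itemize}
\item If $\gamma$ is larger, we instead take $d$ such that $n\approx1/\gamma$, i.e.\ $d\approx 1/\gamma$ (pure) or $d\approx1/\gamma^2$ (approximate). This gives $1/(\varepsilon\gamma)$ or $\sqrt{1/\gamma}/\varepsilon$, using monotonicity in $d$ via padding with zero coordinates (for $\max$) or with coordinates carrying a large value (for $\min$).
\item If $T<nd$, we set $d\approx\sqrt{\varepsilon T}$ (pure) or $d\approx(\varepsilon T)^{2/3}$ (approximate) to get $\sqrt{T/\varepsilon}$ and $T^{1/3}/\varepsilon^{2/3}$.
\item The $\Omega(T)$ term for $\varepsilon T\le 2$ follows from the two-stream argument used in Theorem~\ref{thm:lower_argminsum_adaptive}.
\end{itemize}
The $\delta=o(\varepsilon/T)$ condition ensures $\delta=o(1/n)$ in every case.
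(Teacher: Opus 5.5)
There is a genuine gap exactly where $\gamma$ enters the bound. Your plan rests on keeping the queried value at every block boundary $O(n)$, so that the relative term is only $\gamma\cdot O(n)$. This cannot be done.

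Inputs are nonnegative, so coordinate sums are monotone. For the max (resp.\ min) at the $j$-th boundary to be attained at coordinate $j$ for every dataset, the known offset of coordinate $j$ must exceed that of the previously queried coordinate by about $n$. Hence the queried value at boundary $j$ is $\Omega(jn)$. (For $\maxsum$, saturating processed coordinates at $\ge 2n$ is also backwards: it makes them the maximum.)

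A sanity check confirms the premise is false. An $O(n)$ construction would give $\alpha=\Omega(d/\varepsilon)$ for any small constant $\gamma$, independently of $\gamma$. That contradicts the upper bound $O(\log(T/\beta)/(\varepsilon\gamma))$ of Theorem~\ref{thm:general_f}. In particular, the check you propose on the Jain et al.\ construction fails: at round $t=n+2n(i-1)+n$ the maximum is $\sum_k x_k[i]+in$, not $\le 2n$.

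The missing idea is to accept this growth and pay for it in $d$. In the sequential embedding, $a_t-in$ estimates $\sum_k x_k[i]$ with error at most $\gamma(i+1)n+\alpha\le\gamma(d+1)n+\alpha$. This is at most $n/500+\alpha$ when $\gamma\le 1/1000$ and $d\le\lfloor 1/(1000\gamma)\rfloor$. So the binding constraint is $d\lesssim 1/\gamma$, not $\gamma\lesssim 1/n$.

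For larger $d$ one caps at $d^*=\lfloor 1/(1000\gamma)\rfloor$ using monotonicity in $d$. This gives $d^*/\varepsilon=\Theta(1/(\varepsilon\gamma))$ under pure DP and $\sqrt{d^*}/\varepsilon=\Theta(1/(\varepsilon\sqrt{\gamma}))$ under approximate DP, each combined with the $T$-dependent caps.

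Your parameter step does not produce the claimed terms. With $n=\Theta(d/\varepsilon)$, setting $n\approx 1/\gamma$ means $d\approx\varepsilon/\gamma$ and yields only $\Omega(1/\gamma)$. With $n=\Theta(\sqrt{d}/\varepsilon)$, taking $d\approx 1/\gamma^2$ would give $\sqrt{d}/\varepsilon=1/(\varepsilon\gamma)$, not $\sqrt{1/\gamma}/\varepsilon$. The correct cap is $d\approx 1/\gamma$ in both regimes.

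The remaining pieces are fine:
\begin{itemize}
\item the reduction from Lemma~\ref{lem:marginal} without group privacy;
\item the $d=1$ two-stream argument for $\varepsilon T\le 2$;
\item padding with zeros for $\maxsum$ and with ones for $\minsum$;
\item $\delta=o(\varepsilon/T)$ implying $\delta=o(1/n)$.
\end{itemize}
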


\begin{proof}
    Assume $\gamma \le 1/1000$. Similar to the argument used in~\citep{jain2023price}, we can prove that $\alpha = \Omega(T)$ when $\varepsilon T\le 2$. Note that the lower bound should be a non-decreasing function of $d$, since an algorithm for $d$-dimensional data can be used to solve the problem for $d^*$-dimensional data ($d^* < d$) by padding each input vector with $d - d^*$ zeros. Therefore, it suffices to consider $d = 1$. Let $S$ be a sequence with $T$ $0$'s and $S'$ be another sequence with $3T/4$ $0$'s followed by $T/4$ $1$'s. Let $a_T$ and $a_T'$ be the output of $\M$ at round $T$ on $S$ and $S'$, respectively. Suppose $\alpha \le T / 9$, we have $\Pr[a_T\le T/9]\ge 2/3$. By group privacy, we have $\Pr[a_T'> T/9] \le \sqrt{e}\cdot \Pr[a_T > T/9] + 2\delta/\varepsilon < 2/3 $ for sufficiently large $T$. But by the utility of $\M$, we also have $\Pr[a_T' > T/9] > \Pr[a_T' > (1-\gamma)T/4 - \alpha] \ge 2/3$, a contradiction. Thus, we have $\alpha > T / 9 = \Omega(T)$.
    
    From now on, we assume that $\varepsilon > 2/T$. We first consider $\maxsum$. Let $S = (x_1,\dots, x_n)$ be a dataset, where $x_i\in\{0,1\}^d$. Let $e_i$ be the $i$-th standard basis vector in $\R^d$ and $\overline{e_i} = 1-e_i$. We construct a stream $(y_1,\dots,y_{T'})$ for $T' = 2nd$ as follows:
    \begin{itemize}
        \item $y_t = x_t$ for $t\in[n]$.
        \item $y_{n + 2n(i-1) + 1} = \cdots = y_{n + 2n(i-1) + n} = e_i$ for all $i\in[d]$.
        \item $y_{n + 2n(i-1) + n + 1} = \cdots = y_{n + 2ni} = \overline{e_i}$ for all $i\in[d - 1]$.
    \end{itemize}
    Note that at round $t = n+2n(i-1) + n$ ($i\in[d]$), we have
    \begin{equation*}
        \sum_{s=1}^{t}y_s[i] = \sum_{k=1}^nx_k[i] + in
    \end{equation*}
    and
    \begin{equation*}
        \sum_{s=1}^{t}y_s[j] = \sum_{k=1}^nx_k[j] + (i-1)n
    \end{equation*}
    for all $j\neq i$. Thus,
    \begin{equation*}
        \sum_{s=1}^{t}y_s[i] \ge \sum_{s=1}^{t}y_s[j]
    \end{equation*}
    for all $j\neq i$. If $T\ge T'$ and we run $\M$ on $(y_1,\dots,y_{T'})$, then with probability $2/3$, the output $a_t$ should satisfy
    \begin{align*}
        a_t &\le (1+\gamma)\sum_{s=1}^{t}y_s[i] + \alpha \\
        &= (1+\gamma)\sum_{k=1}^nx_k[i] + (1+\gamma)in + \alpha \\
        &\le \sum_{k=1}^nx_k[i] + \gamma n + (1+\gamma)in + \alpha
    \end{align*}
    and
    \begin{align*}
        a_t &\ge (1-\gamma)\sum_{s=1}^{t}y_s[i] - \alpha \\
        &= (1-\gamma)\sum_{k=1}^nx_k[i] + (1-\gamma)in - \alpha \\
        &\ge \sum_{k=1}^nx_k[i] - \gamma n + (1-\gamma)in - \alpha.
    \end{align*}
    Hence, $a_t - i n$ is an estimate of $\sum_{k=1}^nx_k[i]$ with absolute error 
    \begin{equation*}
        \gamma (i+1)n + \alpha \le \gamma(d+1)n+\alpha.
    \end{equation*}
    
    Suppose $d \le \lfloor1/1000\gamma\rfloor$. The above error is at most $\gamma(1/1000\gamma + 1)n + \alpha \le n/500 + \alpha$.

    For the case that $\delta = 0$, we take $n = \Theta(d/\varepsilon)$ as in Lemma~\ref{lem:marginal}. When $d \le O(\sqrt{\varepsilon T})$, we have $T \ge 2nd = T'$. Hence, we can use $\M$ to construct an $(\varepsilon,\delta)$-differentially private algorithm to release the one-way marginals with error $n / 500 + \alpha$. This implies $\alpha \ge n/100 - n/500 = \Omega(d/\varepsilon)$ because otherwise it will contradict Lemma~\ref{lem:marginal}. When $d \ge \Omega(\sqrt{\varepsilon T})$, taking $d^* = \Theta(\sqrt{\varepsilon T})$ gives $\alpha = \Omega(d^*/\varepsilon) = \Omega(\sqrt{T/\varepsilon})$ since the lower bound is non-decreasing in $d$.

    The case that $\delta > 0$ is similar. We set $n = \Theta(\sqrt{d}/\varepsilon)$. When $d \le O((\varepsilon T)^{2/3})$, we have $T \ge 2nd =T'$. By the same argument we have $\alpha = \Omega(\sqrt{d}/\varepsilon)$. When $d \ge \Omega((\varepsilon T)^{2/3})$, taking $d^* = \Theta((\varepsilon T)^{2/3})$ yields $\alpha = \Omega(\sqrt{d^*}/\varepsilon) = \Omega(T^{1/3}/\varepsilon^{2/3})$.
    
    It remains to prove the lower bound for $d > \lfloor1/1000\gamma\rfloor$. We can again take $d^* = \lfloor 1/1000\gamma\rfloor$, which gives
    \begin{equation*}
        \alpha = \Omega\left(\min\left\{d^*/\varepsilon,\sqrt{T/\varepsilon}\right\}\right) = \Omega\left(\min\left\{1/\varepsilon\gamma,\sqrt{T/\varepsilon}\right\}\right)
    \end{equation*}
    for $\delta = 0$ and 
    \begin{equation*}
        \alpha = \Omega\left(\min\left\{\sqrt{d^*}/\varepsilon,T^{1/3}/\varepsilon^{2/3}\right\}\right) = \Omega\left(\min\left\{1/\varepsilon\sqrt{\gamma},T^{1/3}/\varepsilon^{2/3}\right\}\right)
    \end{equation*}
    for $\delta > 0$. Summarizing all cases yield our desired lower bound.
    
    The proof for $\minsum$ is analogous. We just flip the construction of the input stream, i.e., we set
    \begin{itemize}
        \item $y_t = x_t$ for $t\in[n]$.
        \item $y_{n + 2n(i-1) + 1} = \cdots = y_{n + 2n(i-1) + n} = \overline{e_i}$ for all $i\in[d]$.
        \item $y_{n + 2n(i-1) + n + 1} = \cdots = y_{n + 2ni} = e_i$ for all $i\in[d - 1]$.
    \end{itemize}
    We can similarly show that $a_t - (i-1) n$ is an estimate of $\sum_{k=1}^nx_k[i]$ with error $\gamma dn + \alpha$ for $t = n+2n(i-1)+n$. The result then follows by an identical argument.
\end{proof}

\subsection{Non-Adaptive Lower Bound for \texorpdfstring{$\argmaxsum$}{} and \texorpdfstring{$\argminsum$}{}}

Our proof is based on the construction of~\citet{jain2023price}. The idea is to show that an algorithm for $\argmaxsum$ or $\argminsum$ in the continual release model can be used to simultaneously solve multiple instances of the same problem in the batch model. Therefore, the desired lower bound can be proved by leveraging the following lemma.
\begin{lemma}[\citep{jain2023price}]
\label{lem:kselectd}
    Let $\X = \{0, 1\}^d$, where $d = d' k $. Suppose $\M$ is an $(\varepsilon,\delta)$-differentially private algorithm that takes as input a dataset $S=\{x_1,\dots,x_n\}\in\X^n$ and outputs a $k$-dimensional vector such that
    \begin{equation*}
        \M(S)[i]\in\{(i-1)d' + 1,\dots,id'\}\text{ for all }i\in[k]
    \end{equation*}
    and 
    \begin{equation*}
        \Pr_\M\left[\sum_{l=1}^nx_l[\M(S)[i]]\ge \sum_{l=1}^nx_l[(i-1)d'+j] - \alpha \text{ for all }i,j\in[k]\times[d']\right] \ge 2/3
    \end{equation*}
    for some $\alpha \le n / 20$. Then, we have the following:
    \begin{itemize}
        \item $\alpha = \Omega\left(\frac{k\log d'}{\varepsilon}\right)$ if $\delta = 0$.
        \item $\alpha = \Omega\left(\frac{\sqrt{k\log(1/\delta)}\log d'}{\varepsilon}\right)$ if $\delta > 0$ and $\delta = o(1/n^2)$.
    \end{itemize}
\end{lemma}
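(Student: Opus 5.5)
The two parts need different tools: a packing argument for $\delta=0$, and a reduction from sign recovery of consensus columns for $\delta>0$. In both I would use that accuracy with $\alpha<m$ forces the output to identify a coordinate that is planted with margin $m$ in each block. Planting in every block at once lets a single run of $\M$ extract information from all $k$ blocks simultaneously.

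\emph{Pure DP.} Fix $m=\lceil 2\alpha\rceil+1\le n/10$, which is allowed since $\alpha\le n/20$. For every tuple $J=(j_1,\dots,j_k)\in[d']^k$, let $S_J$ have the following form. Its first $m$ rows carry a $1$ at coordinate $(i-1)d'+j_i$ for every $i\in[k]$ and $0$ elsewhere. All its remaining rows are zero. In block $i$ of $S_J$, coordinate $j_i$ has sum $m$ and every other coordinate has sum $0$. Since $\alpha<m$, an accurate output must equal $J$ exactly, so the events $\{\M(S_J)=J\}$ are disjoint across $J$.

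Any two such datasets differ in at most $m$ rows. Group privacy applied to a fixed reference dataset $S_{J_0}$ then gives $\Pr[\M(S_{J_0})=J]\ge e^{-\varepsilon m}\cdot 2/3$ for every $J$. Summing over the $d'^k$ disjoint events gives $d'^k e^{-\varepsilon m}\cdot 2/3\le 1$. Hence $m\ge (k\ln d'-O(1))/\varepsilon$, and so $\alpha=\Omega(k\log d'/\varepsilon)$.

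\emph{Approximate DP.} Plain packing now loses too much, since group privacy with $\delta$ only tolerates $m\approx\log(1/\delta)/\varepsilon$ changed rows. I would aim for a two-level reduction.

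The \emph{inner level} handles the $\log d'$ factor inside each block. Index the $d'$ coordinates of a block by strings $u\in\{0,1\}^{\log d'}$. A row encodes a hidden string $v$ by putting $1$ at coordinate $u$ exactly when $u=v$, built out of $\log d'$ sign-bits, one per position. Finding the approximate argmax of the block then reveals $v$, i.e.\ $\log d'$ bits. Padding with $m$ copies of this encoding makes a wrong answer cost margin $\Omega(m)$, so accuracy $\alpha$ again forces exact recovery when $m>\alpha$.

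The \emph{outer level} treats the $k$ blocks as $k$ independent recovery instances. I would then invoke the consensus-column hardness of Lemma~\ref{lem:consensus_approximate}, with dimension $k$ and each row of the hard matrix replicated $\Theta(\log d'\cdot\sqrt{\log(1/\delta)/k}\,)$-ish times. Group privacy turns the replication into a privacy-parameter blowup, which should yield the $\sqrt{k\log(1/\delta)}\,\log d'/\varepsilon$ bound. An alternative for the outer level is to argue via the $k$-fold composition lower bound for $(\varepsilon,\delta)$-DP.

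\emph{Main obstacle.} The delicate step is getting the $\sqrt{\log(1/\delta)}$ and the $\log d'$ factors to multiply, rather than merely add. Doing so requires choosing the replication and the group size so that the group-privacy $\delta$-blowup, $\frac{e^{k\varepsilon}-1}{e^\varepsilon-1}\delta$, stays $o(1)$; this is where the hypothesis $\delta=o(1/n^2)$ is consumed. If the inner encoding fails, I would instead follow \citet{jain2023price} directly. Their route uses fingerprinting codes for $k$ blocks, tiled with the Steinke--Ullman selection lower bound inside each block.
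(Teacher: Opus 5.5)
\textbf{Verdict.} Your pure-DP half is correct, but the approximate-DP half has a genuine gap, and the target it aims at is false as literally stated.

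\textbf{Pure DP.} This half is a correct, self-contained packing argument. The planted tuples $J\in[d']^k$ with margin $m>\alpha$ force exact recovery, group privacy over the $m$ modified rows applies, and $(d')^k e^{-\varepsilon m}\cdot 2/3\le 1$ follows. The paper itself gives no proof of this lemma; it simply imports it from \citet{jain2023price}.

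\textbf{Approximate DP.} This half is not a proof, and the reduction you sketch cannot produce the stated factors.
\emph{Replication is bound-neutral.} If $\M$ is $(\varepsilon,\delta)$-DP on $n=\ell n_0$ rows, then ``replicate each row $\ell$ times and run $\M$'' is $(\ell\varepsilon,\frac{e^{\ell\varepsilon}-1}{e^{\varepsilon}-1}\delta)$-DP on $n_0$ rows. Lemma~\ref{lem:consensus_approximate} at privacy $\ell\varepsilon$ then gives $n_0=\Omega(\sqrt{p}/(\ell\varepsilon\log p))$, i.e.\ $n=\Omega(\sqrt{p}/(\varepsilon\log p))$, with $\ell$ cancelling. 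This is exactly how the paper obtains the $1/\varepsilon$ in that lemma, and it is all replication can do. It cannot manufacture a $\log d'$ or a $\sqrt{\log(1/\delta)}$, and that lemma has no $\delta$-dependence to begin with.
\emph{Encoding bits gives only a square root.} With $p=k$ you get $\tilde{\Omega}(\sqrt{k}/\varepsilon)$. Routing the $\log d'$ encoded bits per block through the same lemma gives $p=k\log d'$ and at most $\tilde{\Omega}(\sqrt{k\log d'}/\varepsilon)$. Sign-recovery hardness grows like the square root of the number of bits, whereas you need $\log d'$ multiplicatively.
\emph{Planted winners are easy.} Your inner level is built so that every block has a planted winner with margin $m>\alpha$, and such instances are easy under approximate DP. In each block, release the $\arg\max$ only if a noisy top-two gap clears $O(\log(1/\delta')/\varepsilon')$, and combine the $k$ blocks by advanced composition. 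This succeeds once the margin is $\tilde{O}(\sqrt{k\log(1/\delta)}\,\log(k/\delta)/\varepsilon)$, independently of $d'$. So a reduction that only uses $\M$'s accuracy on planted-winner instances cannot certify the $\log d'$ factor once $\log d'$ is large compared with $\sqrt{\log(1/\delta)}\log(k/\delta)$, which the hypotheses allow. The factor has to come from instances whose maximum is unstable, i.e.\ from a fingerprinting-type selection lower bound such as Steinke and Ullman's. Your fallback sentence names this but does not carry it out.
\emph{Composition lower bound.} The ``$k$-fold composition lower bound'' alternative does not fill the gap either. Tightness of composition for particular mechanisms says nothing about an arbitrary algorithm solving the $k$ instances jointly.

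\textbf{The target is false as stated.} Your ``main obstacle'' paragraph chases a target that no choice of replication or group size can reach. The second bullet only assumes $\delta=o(1/n^2)$, with no relation between $\delta$ and $k$, and $n$ is free subject to $\alpha\le n/20$. For $k=1$ the exponential mechanism is pure $\varepsilon$-DP, hence $(\varepsilon,\delta)$-DP for every $\delta$, and it is accurate with $\alpha=2\ln(3d')/\varepsilon$. Fix $d',\varepsilon$, take $n\ge 20\alpha$ large and $\delta=n^{-3}$. The claimed bound $\Omega(\sqrt{\log(1/\delta)}\log d'/\varepsilon)=\Omega(\sqrt{\log n}\,\log d'/\varepsilon)$ then eventually exceeds $\alpha$. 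More generally, $k$ exponential mechanisms with basic composition achieve $O(k\log(kd')/\varepsilon)$, so a $\sqrt{\log(1/\delta)}$ factor could only hold in a range like $\log(1/\delta)\lesssim k$.

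\textbf{What to aim for.} The form the paper actually needs is $\delta$-free: Theorem~\ref{thm:lower_maxminarg_nonadap} applies the lemma as $\alpha=\Omega(\sqrt{k}\log d'/\varepsilon)$. Aim for that bound, and expect it to require the fingerprinting machinery rather than a reduction from Lemma~\ref{lem:consensus_approximate}.
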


\begin{theorem}
\label{thm:lower_maxminarg_nonadap}
    Let $d,T\in\N$ be sufficiently large. Any $(\varepsilon,\delta)$-differentially private algorithm $\M$ solving $\argmaxsum$ or $\argminsum$ in the non-adaptive continual release model with relative error $\gamma$ and success probability $2/3$ must incur an additive error of
    \begin{itemize}
        \item $\alpha = \tilde{\Omega}\left(\min\left\{\frac{1}{\varepsilon\gamma}, \frac{d}{\varepsilon}, \sqrt{\frac{T\log (1/\gamma)}{\varepsilon}}, \sqrt{\frac{T\log d}{\varepsilon}}, T\right\}\right)$ if $\delta = 0$.
        \item $\alpha = \tilde{\Omega}\left(\min\left\{\frac{\sqrt{1/\gamma}}{\varepsilon}, \frac{\sqrt{d}}{\varepsilon}, \frac{T^{1/3}\log^{2/3}(1/\gamma)}{\varepsilon^{2/3}}, \frac{T^{1/3}\log^{2/3}d}{\varepsilon^{2/3}}, T\right\}\right)$ if $\delta > 0$ and $\delta = o(\varepsilon/T^2)$.
    \end{itemize}
\end{theorem}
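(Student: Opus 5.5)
We reduce from Lemma~\ref{lem:kselectd} using the sequential embedding of~\citet{jain2023price}, arranged so that every cumulative sum at every query time is $O(n)$. Then the relative term $\gamma\cdot(\text{true answer})$ is $O(\gamma k n)$, and when $k\lesssim 1/\gamma$ it is absorbed into the $n/20$ slack of the lemma. We first dispose of two easy regimes. If $\varepsilon T\le 2$, we get $\alpha=\Omega(T)$ by the same two-stream group-privacy argument used for Theorem~\ref{thm:lower_maxminsum} and Theorem~\ref{thm:lower_argminsum_adaptive}. We also note that the lower bound is non-decreasing in $d$, because we can pad with zeros for $\argmaxsum$ and with ones for $\argminsum$. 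After this we work with $d=d'k$ and choose $k$ and $d'$ at the end.

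\textbf{Construction for $\argmaxsum$.} Take a dataset $S=(x_1,\dots,x_n)\in(\{0,1\}^{d})^n$ as in Lemma~\ref{lem:kselectd}, split into $k$ blocks of $d'$ coordinates, and embed it into a stream of length about $2kn$. For block $i=1,\dots,k$, we feed $n$ records whose coordinates in block $i$ equal the $x_l$'s block-$i$ entries and are $0$ elsewhere, and we also add a ``lift'' making block $i$ dominant. Concretely, the block-$i$ phase consists of $n$ records equal to $x_l$ restricted to block $i$ plus $\mathbf{1}_{\text{block } i}$, but capped in $[0,1]$. To stay in $[0,1]$, I would instead use $2n$ rounds per block: $n$ rounds of the data restricted to block $i$, then $n$ rounds of $\mathbf{1}_{\text{block }i}$. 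At the end of block $i$'s phase, block $i$ has sums $\sum_l x_l[j]+n$, all earlier blocks have sums at most $2n$, and later blocks have sums $0$. This does not yet make block $i$ strictly dominant over earlier blocks, so we also append $\mathbf{1}$ on all of block $i$ for another $n$ rounds, or equivalently add $n$ to the lift.

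The cleaner choice is to follow~\citet{jain2023price} exactly. Their construction lifts block $i$ by $in$ in total, which gives sums $O(kn)$. Block $i$ then strictly leads by at least $n$, so any output with additive error $\alpha+\gamma\cdot O(kn)<n$ lies in block $i$. Moreover, within block $i$ the common lift cancels, so the output is a $(\alpha+O(\gamma kn))$-approximate argmax of $\sum_l x_l[\cdot]$ restricted to the block. Reading the outputs at the $k$ block-ending times gives a $k$-dimensional vector. Since the stream is a fixed function of $S$ (non-adaptive), each $x_l$ affects $k$ records. Handling this correctly is the main obstacle: the lemma's privacy is per-row of $S$, so the lift records cannot depend on $S$, and each row $x_l$ must be fed as a single record (all blocks at once) at the start. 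Then only $n$ data records exist, the neighboring relation is preserved, and privacy follows by post-processing. I would therefore feed $x_1,\dots,x_n$ first, then for $i=1..k$ feed $n\cdot i$-style lifts on block $i$ (the schedule above, with $\mathbf{1}_{\text{block }i}$ phases and a complementary phase to restore balance, as in the $\maxsum$ proof). This keeps all sums at most $(k+1)n$ and $T'=\Theta(kn)$.

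\textbf{Parameters and cases.} Set $k\le \lfloor 1/(c\gamma)\rfloor$ so that $\gamma(k+1)n\le n/40$. If $\alpha<n/40$, we obtain an algorithm contradicting Lemma~\ref{lem:kselectd} whenever $\alpha$ is below its bound. Hence $\alpha\ge\min\{n/40,\ k\log d'/\varepsilon\}$ for pure DP. We choose $n\approx k\log d'/\varepsilon$, subject to $T\ge T'\approx kn$. This gives $\alpha=\tilde\Omega(\min\{k/\varepsilon,\sqrt{T\log d'/\varepsilon}\})$, and we balance $k$ against $1/\gamma$, $d$ and $\sqrt{\varepsilon T}$; taking $d'$ either $d/k$ or $1/\gamma$ yields the $\log d$ and $\log(1/\gamma)$ variants. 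For approximate DP we use the $\sqrt{k\log(1/\delta)}\log d'/\varepsilon$ bound in the same way, which yields the $T^{1/3}$ terms; the requirement $\delta=o(1/n^2)$ follows from $\delta=o(\varepsilon/T^2)$. For $\argminsum$ we flip the construction: data records are $1-x_l$, and we lift all blocks except block $i$, so block $i$ is the unique minimal block with sums $O(kn)$. The same accounting then goes through.
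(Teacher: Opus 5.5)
Your proposal follows the paper's proof. You feed $x_1,\dots,x_n$ once at the start, so that neighboring datasets give neighboring streams. You then append data-independent block lifts with complementary phases, read the outputs at the $k$ block-ending times, and apply Lemma~\ref{lem:kselectd}. For $\argminsum$ you use $1-x_l$ and lift every block except block $i$.

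There is one concrete slip. You take the lift to be $n$ per block (``as in the $\maxsum$ proof''), so block $i$ is lifted by $in$ and every other coordinate by $(i-1)n$. You then claim block $i$ strictly leads by $n$, but it does not.
\begin{itemize}
\item The data sums $\sum_l x_l[j]$ range over $[0,n]$, so block $i$ lies in $[in,in+n]$ and the rest in $[(i-1)n,in]$.
\item If the data vanish on block $i$ and equal $n$ on some outside coordinate, that coordinate is an exact argmax. Your claim that every output with error below $n$ lies in block $i$ therefore fails.
\end{itemize}
For $\maxsum$ a tie is harmless because only the value is read off, but for selection you need the margin. The paper uses $2n$ copies of $v_i$ followed by $2n$ copies of $\overline{v_i}$. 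This puts block $i$ in $[2ni,2ni+n]$ and everything else in $[2n(i-1),2n(i-1)+n]$. The sums grow to $(2k+1)n$ instead of $(k+1)n$, which only changes constants.

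Alternatively, keep the lift $n$ and post-process. An output outside block $i$ with error $e$ forces $\max_{j\in\text{block } i}\sum_l x_l[j]\le e$, so any index of block $i$ is then $e$-accurate. Some post-processing of this kind is needed anyway, because the lemma requires $\M(S)[i]$ to lie in block $i$ with probability one, while the continual-release guarantee only places it there on the success event.

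The rest of your bookkeeping differs from the paper only harmlessly.
\begin{itemize}
\item You constrain only $k\lesssim 1/\gamma$, since the relative term is $\gamma\cdot O(kn)$. The paper imposes the cruder $d\le 1/(100\gamma)$ and then uses monotonicity in $d$. Your version can only improve the logarithmic factors, which $\tilde\Omega$ hides anyway.
\item You fix $n$ at the lemma's sample-complexity threshold and choose $k$ so that $T\ge T'$. The paper instead sets $n=\lceil 50\alpha\rceil$ and reads off $\alpha=\Omega(T/k)$ when $T'>T$. Both work.
\end{itemize}
Be aware that the lemma lower-bounds the batch error $\alpha+\gamma\cdot O(kn)$, not $\alpha$ itself. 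Your line $\alpha\ge\min\{n/40,k\log d'/\varepsilon\}$ is justified only once $n$ is chosen with $n/20$ below the lemma's bound. Then $\alpha<n/40$ yields a contradiction, which gives $\alpha\ge n/40=\Omega(k\log d'/\varepsilon)$.
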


\begin{proof}
    Similar to the argument made in the proof of Theorem~\ref{thm:lower_argminsum_adaptive}, we can show that the lower bound should be a non-decreasing function of $d$ and $\alpha = \Omega(T)$ when $\varepsilon T\le 2$ for both $\argmaxsum$ and $\argminsum$. Thus, we assume $\varepsilon> 2/T$ in the remainder of the proof.
    
    Assume $\gamma \le 1/100$. We consider $\argmaxsum$ first. Let $S = (x_1,\dots, x_n)$, where $x_i \in\{0, 1\}^d$, $d = d'k$ for some $k$, and $n = \lceil50\alpha\rceil$. For $i\in[k]$, define $v_i$ as the vector with $v_i[j] = 1$ for $j\in[(i-1)d'+1,id']$ and $v_i[j]=0$ for $j\notin[(i-1)d'+1,id']$. Let $\overline{v_i} = 1 - v_i$. Construct a stream $(y_1,\dots, y_{T'})$ for $T' = n + 4n(k - 1) + 2n$ as follows:
    \begin{itemize}
        \item $y_t = x_t$ for $t\in[n]$.
        \item $y_{n + 4n(i-1) + 1} = \cdots = y_{n+4n(i-1) + 2n} = v_i$ for $i\in[k]$.
        \item $y_{n + 4n(i-1) + 2n} = \cdots = y_{n + 4n i} = \overline{v_i}$ for $i\in[k - 1]$.
    \end{itemize}
    Note that at round $t = n + 4n( i -1) + 2n$ ($i\in[k]$), we have 
    \begin{equation*}
        \sum_{s=1}^ty_s[j] = \sum_{l=1}^nx_l[j] + 2ni \in[2ni, 2ni + n]
    \end{equation*}
    for $j\in[(i-1)d' + 1,id']$ and
    \begin{equation*}
        \sum_{s=1}^ty_s[j] = \sum_{l = 1}^nx_l[j] + 2n(i-1)\in[2n(i-1),2n(i-1)+n]
    \end{equation*}
    for $j\notin [(i-1)d' + 1,id']$. If $T\ge T'$ and we run $\M$ on $(y_1,\dots,y_{T'})$. Then with probability $2/3$, the output $a_t$ should satisfy
    \begin{align*}
        \sum_{s=1}^ty_s[a_t]&\ge (1-\gamma)\max_{j\in[d]}\sum_{s=1}^ty_s[j]-\alpha \\
        &=(1-\gamma)\max_{j\in[(i-1)d'+1,id']}\sum_{s=1}^ty_s[j] - \alpha
    \end{align*}

    Suppose $d\le 1/100\gamma$, the above is at least
    \begin{align*}
        \max_{j\in[(i-1)d'+1,id']}\sum_{s=1}^ty_s[j] -\gamma n(2k+1) - \alpha &\ge
        \max_{j\in[(i-1)d'+1,id']}\sum_{s=1}^ty_s[j] -\gamma n(2d+1) - \alpha\\ &\ge \max_{j\in[(i-1)d'+1,id']}\sum_{s=1}^ty_s[j] - n / 20.
    \end{align*}
    Note that this is greater than $2ni- n \ge \sum_{s=1}^ty_s[j]$ for any $j\notin [(i-1)d' + 1,id']$. Therefore, we have $a_t\in[(i-1)d'+1,id']$ and hence
    \begin{equation*}
        \sum_{l = 1}^nx_l[a_t] \ge \max_{j\in[(i-1)d'+1,id']}\sum_{l=1}^nx_l[j] - n / 20.
    \end{equation*}

    If $T' > T$, we have $\alpha = \Omega(T/ k )$. Otherwise, by Lemma~\ref{lem:kselectd}, we have $\alpha = \Omega\left(\frac{k\log d'}{\varepsilon}\right)$ for $\delta = 0$ and $\alpha = \Omega\left(\frac{\sqrt{k}\log d'}{\varepsilon}\right)$ for $\delta > 0$. We have to select a $k$ to maximize $\Omega\left(\min\left\{T/k,k\log d' / \varepsilon\right\}\right)$ for $\delta = 0$ and $\Omega\left(\min\left\{T/k,\sqrt{k}\log d' / \varepsilon\right\}\right)$ for $\delta > 0$. This optimization problem was discussed in~\citep{jain2023price}, resulting in
    \begin{equation*}
        \alpha = \tilde{\Omega}\left(\min\left\{\frac{d}{\varepsilon}, \sqrt{\frac{T\log d}{\varepsilon}}, T\right\}\right)
    \end{equation*}
    for $\delta = 0$ and
    \begin{equation*}
        \alpha = \tilde{\Omega}\left(\min\left\{\frac{\sqrt{d}}{\varepsilon}, \frac{T^{1/3}\log^{2/3}d}{\varepsilon^{2/3}}, T\right\}\right)
    \end{equation*}
    for $\delta > 0$. When $d > 1/100\gamma$, the lower bound can be obtained by setting $d = 1/100\gamma$ in the above results since it is non-decreasing in $d$. Summarizing all cases yields our final lower bound. 

    For the proof for $\argminsum$, we again flip the construction by setting
    \begin{itemize}
        \item $y_t = 1 - x_t$ for $t\in[n]$.
        \item $y_{n+4n(i-1)+1}=\cdots=y_{n+4n(i-1)+2n}=\overline{v_i}$ for $i\in[k]$.
        \item $y_{n+4n(i-1)+2n}=\cdots = y_{n+4ni} = v_i$ for $i\in[k -1]$.
    \end{itemize}
    We can similarly show that $a_t\in[(i-1)d'+1,id']$ and 
    \begin{equation*}
        \sum_{l = 1}^n(1-x_l[a_t]) \le \min_{j\in[(i-1)d'+1,id']}\sum_{l=1}^n(1-x_l[j]) + n / 25
    \end{equation*}
    for $t = n + 4n(i-1)+2n$. Note that the above is equivalent to
    \begin{equation*}
        \sum_{l = 1}^n x_l[a_t] \ge \max_{j\in[(i-1)d'+1,id']}\sum_{l=1}^n x_l[j] - n / 25.
    \end{equation*}
    The desired bound then follows by an identical argument.
\end{proof}
\section{Omitted Proofs}

\subsection{Proof of Theorem~\ref{thm:upper_argmin}}
\label{app:alg_argmin}

\begin{proof}[Proof of Theorem~\ref{thm:upper_argmin}]
    We prove the privacy guarantee first. When $\delta = 0$, by basic composition, it suffices to verify $\sum_{i'=1}^T\varepsilon_{\lceil i' / K\rceil}\le \varepsilon/2$. In fact, we have
    \begin{align*}
        \sum_{i'=1}^{T}\varepsilon_{\lceil i' / K\rceil} &\le \sum_{i=1}^{\infty}K\varepsilon_i \\
        &= K\sum_{i=1}^{\lceil4/\gamma\rceil}\varepsilon_i + K\sum_{i=\lceil4/\gamma\rceil + 1}^{\infty}\varepsilon_i \\
        &\le K\lceil4/\gamma\rceil\cdot\frac{64(\ln T+\ln(8dT/\beta))}{\alpha} + K\sum_{i=0}^{\infty}\frac{32(\ln T+\ln(8dT/\beta))}{\alpha\cdot(1+\gamma/3)^i} \\
        &\le K\cdot 5/\gamma\cdot\frac{64(\ln T+\ln(8dT/\beta))}{\alpha} + K\frac{32(\ln T+\ln(8dT/\beta))}{\alpha}\cdot\frac{4}{\gamma} \\
        &=\frac{448K(\ln T+\ln(8dT/\beta))}{\alpha\gamma}\\
        &=\varepsilon/2.
    \end{align*}
    When $\delta > 0$, by advanced composition, it suffices to verify $\frac{1}{2}\sum_{i'=1}^T\varepsilon_{\lceil i' / K\rceil}^2 + \sqrt{2\ln(1/\delta)\sum_{i'=1}^T\varepsilon_{\lceil i'/K\rceil}^2} \le \varepsilon/2$. Since
    \begin{align*}
        \sum_{i'=1}^T\varepsilon_{\lceil i'/K\rceil}^2 &\le \sum_{i=1}^\infty K\varepsilon_i^2 \\
        &= K\sum_{i=1}^{\lceil4/\gamma\rceil}\varepsilon_i^2 + K\sum_{i=\lceil4/\gamma\rceil + 1}^{\infty}\varepsilon_i^2 \\
        &\le K\lceil4/\gamma\rceil\cdot\left(\frac{64(\ln T+\ln(8dT/\beta))}{\alpha}\right)^2 + K\sum_{i=0}^{\infty}\left(\frac{32(\ln T+\ln(8dT/\beta))}{\alpha\cdot(1+\gamma/3)^i} \right)^2 \\
        &\le K\cdot 5/\gamma\cdot \left(\frac{64(\ln T+\ln(8dT/\beta))}{\alpha}\right)^2 + K \left(\frac{32(\ln T+\ln(8dT/\beta))}{\alpha }\right)^2\cdot 3/\gamma \\
        &\le \frac{\varepsilon^2}{32\ln(1/\delta)},
    \end{align*}
    we have
    \begin{align*}
        \frac{1}{2}\sum_{i'=1}^T\varepsilon_{\lceil i'/K\rceil}^2 + \sqrt{2\ln(1/\delta)\sum_{i'=1}^T\varepsilon_{\lceil i'/K\rceil}^2} &\le \frac{1}{2}\sqrt{\sum_{i'=1}^T\varepsilon_{\lceil i'/K\rceil}^2} + \sqrt{2\ln(1/\delta)\sum_{i'=1}^T\varepsilon_{\lceil i'/K\rceil}^2}\\
        &\le 2\sqrt{2\ln(1/\delta)\sum_{i'=1}^T\varepsilon_{\lceil i'/K\rceil}^2} \\
        &\le \varepsilon/2.
    \end{align*}
    
    We then prove the utility guarantee. By Theorem~\ref{thm:above_threshold}, with probability $1-\beta / 4T$ the $i'$-th execution of $\abovethreshold$ is $\alpha_i=\alpha_{\lceil i'/K\rceil}$-accurate. By Theorem~\ref{thm:exponential}, with probability $1-\beta / 4T$ the $i'$-th execution of the exponential mechanism has error no more than $2\ln(4dT /\beta)/\varepsilon_i = 2\ln(4dT/\beta)/\varepsilon_{\lceil i'/K\rceil}$. In what follows we condition on the event that all the above happen. By the union bound, this holds with probability $1-\beta / 2$.

    Let $t_1<t_2<\cdots$ be the time-steps at which $\abovethreshold$ returns $\top$. We will prove that with probability $1-\beta / 4$, it holds that 
    \begin{equation*}
        \min_{j\in[d]}\sum_{s = 1}^{t_{iK}}x_s[j] \ge i \alpha / 4
    \end{equation*}
    for all $i\le \lceil4/\gamma\rceil$ if $t_{iK}$ exists. 

    Note that the above holds for $i=0$ (define $t_0=0$). Now, suppose it holds for some $i <  \lceil 4/\gamma\rceil$. Let $\tilde{t}$ be the last time-step at which
    \begin{equation*}
        \min_{j\in[d]}\sum_{s = 1}^{\tilde{t}}x_s[j] < (i+1) \alpha / 4.
    \end{equation*}
    Our goal is to show $t_{(i+1)K} > \tilde{t}$ if $t_{(i+1)K}$ exists. This clearly holds if $\tilde{t}\le t_{iK}$. Now suppose $\tilde{t}>t_{iK}$. We will show that with probability $1-\beta / (4\cdot \lceil4/\gamma\rceil)$, in the time interval $[t_{iK}+1,\tilde{t}]$ the number of times $\abovethreshold$ responses $\top$ is less than $K$. As a consequence, we have $t_{(i+1)K} > \tilde{t}$ if $t_{(i+1)K}$ exists.
    
    Define the potential function at round $t$ as $\Phi_t = \sum_{j\in[d]}\exp(-\varepsilon_{i+1}\ell_t(j)/2)$, we have
    \begin{equation*}
        \Phi_{t_{iK}}\le d \exp(-\varepsilon_{i+1} (i \alpha / 4) / 2)
        = d\exp(-\varepsilon_{i+1} \cdot i \alpha / 8)
    \end{equation*}
    and
    \begin{equation*}
        \Phi_{\tilde{t}} > \exp(-\varepsilon_{i+1} ((i+1) \alpha / 4) / 2) = \exp(-\varepsilon_{i+1} (i+1) \alpha / 8).
    \end{equation*}
    As a consequence, we have $\Phi_{t_{iK}} / \Phi_{\tilde{t}}< d\exp(\varepsilon_{i+1} \alpha/8)$. Let $s_k\in[t_{iK} + 1,\tilde{t}]$ be the last time-step such that $\Phi_{s_k}\ge \Phi_{t_{iK}} / e^k$ for $k=1,\dots,r$, where $r\le \ln d + \varepsilon_{i+1} \alpha/8 \le \ln d + 8(\ln T+\ln(8dT/\beta))$. Assume at round $t\in[s_{k-1} + 1,s_k]$ (define $s_0 = t_{iK}$ for simplicity) $\abovethreshold$ returns $\sigma_t=\top$, the probability that $\sigma_{t'}=\top$ for some $t'\in[t+1,s_k]$ is at most the probability that we choose some $j\in[d]$ at round $t$ such that
    \begin{align*}
        \ell_{s_k}(j) &\ge \ell_{t'}(j) \\
        &\ge (1+\gamma)\cdot\min_{j'\in[d]}\sum_{s = 1}^{t'}x_s[j'] + \alpha - 2\alpha_{i+1} \\
        &\ge (1+\gamma)\cdot\min_{j'\in[d]}\sum_{s=1}^{t_{iK}}x_s[j'] +\alpha - 2\alpha_{i+1}.
    \end{align*}
    because $\sigma_{t'}=\top$ implies $q_{t'}\ge -\alpha_{i+1}$. Therefore, this probability can be bounded by
    \begin{align*}
        {}&\sum_{j\in[d]}\frac{\exp(-\varepsilon_{i+1}\ell_t(j) / 2)}{\Phi_t} \cdot \I[\ell_{s_k}(j)\ge (1+\gamma)\cdot\min_{j'\in[d]}\sum_{s = 1}^{t_{iK}}x_s[j'] + \alpha - 2\alpha_{i+1}] \\
        \le&\sum_{j\in[d]}\frac{\exp(-\varepsilon_{i+1}\ell_t(j) / 2)}{\Phi_t} \cdot \I[\ell_{s_k}(j)\ge i \alpha/4 + \alpha - 2\alpha_{i+1}]\\
        \le&\sum_{j\in[d]}\frac{\exp(-\varepsilon_{i+1}\ell_t(j) / 2)}{\Phi_t} \cdot \I[\ell_{s_k}(j)\ge (i+3) \alpha/4]
    \end{align*}
    since $2\alpha_{i+1}\le \alpha / 4$. Denote the above by $P$ and let $S=\{j:\ell_{s_k}(j)\ge (i+3)\alpha/4\}$, we have
    \begin{align*}
        \Phi_{s_k} &= \sum_{j\in[d]}\exp(-\varepsilon_{i+1}\ell_{s_k}(j)/2) \\
        &\le\sum_{j\in S}\exp(-\varepsilon_{i+1} ((i+3)\alpha/4)/2) + \sum_{j\in [d]\setminus S}\exp(-\varepsilon_{i+1}\ell_t(j) / 2)\\
        &\le d\exp(-\varepsilon_{i+1} (i+3) \alpha / 8) + (1- P)\Phi_t.
    \end{align*}
    Rearranging the above inequality yields
    \begin{align*}
        P&\le 1 + \frac{d\exp(-\varepsilon_{i+1}(i+3) \alpha/8)}{\Phi_t} - \frac{\Phi_{s_k}}{\Phi_t} \\
        &\le 1 + 1/e^2- 1/e \\
        &< 4/5.
    \end{align*}
    because 
    \begin{equation*}
        \Phi_{s_k}/\Phi_t \ge\frac{\Phi_{t_{iK}/e^k}}{\Phi_{s_{k-1}+1}} 
        \ge \frac{\Phi_{t_{iK}/e^k}}{\Phi_{t_{iK}/e^{k-1}}} 
        \ge 1/e
    \end{equation*}
    and 
    \begin{align*}
        \frac{d\exp(-\varepsilon_{i+1}(i+3) \alpha/8)}{\Phi_t} &\le \frac{d\exp(-\varepsilon_{i+1}(i+3) \alpha/8)}{\exp(-\varepsilon_{i+1}(i+1) \alpha/8)} \\
        &=d\exp(-\varepsilon_{i+1} \alpha/4) \\
        &\le 1/e^2.
    \end{align*}
    Therefore, the probability that $\abovethreshold$ returns at least $K$ $\top$'s during $[t_{iK}+1,\tilde{t}]$ can be bounded by the probability that the sum of $K-1$ Bernoulli random variables with mean $1/5$ is less than $r$. By the Chernoff bound~\citep{chernoff1952measure}, setting $K - 1\ge 10r + 40\ln(4\cdot\lceil4/\gamma\rceil/\beta)$ ensures that this probability is at most $\beta / (4\cdot\lceil4/\gamma\rceil)$. Our desired conclusion follows by induction and the union bound.

    Further condition on the previous result, we then use a similar argument to prove that
    \begin{equation*}
        \min_{j\in[d]}\sum_{s=1}^{t_{iK}}x_s[j]\ge \alpha/\gamma\cdot (1+\gamma/3)^{i - \lceil4 / \gamma\rceil}
    \end{equation*}
    for $i\ge \lceil4/\gamma\rceil$ (if $t_{iK}$ exists) with probability $1-\beta / 4$. We have shown that this holds for $i = \lceil4/\gamma\rceil$. Now, assume it holds for some $i\ge\lceil4/\gamma\rceil$ and let $\tilde{t}$ be the last time-step at which 
    \begin{equation*}
        \min_{j\in[d]}\sum_{s=1}^{\tilde{t}}x_s[j] < \alpha/\gamma\cdot (1+\gamma/3)^{i +1 - \lceil4 / \gamma\rceil}.
    \end{equation*}
    We have
    \begin{equation*}
        \Phi_{t_{iK}}\le d\exp(-\varepsilon_{i+1} \alpha/\gamma\cdot (1+\gamma/3)^{i - \lceil4 / \gamma\rceil} / 2)
    \end{equation*}
    and
    \begin{equation*}
        \Phi_{\tilde{t}} > \exp(-\varepsilon_{i+1} \alpha/\gamma \cdot (1+\gamma/3)^{i +1 - \lceil4 / \gamma\rceil} / 2).
    \end{equation*}
    Therefore,
    \begin{align*}
        \Phi_{t_{iK}} / \Phi_{\tilde{t}} &< d\exp(\varepsilon_{i+1}\alpha/\gamma\cdot(1+\gamma/3-1)(1+\gamma/3)^{i-\lceil4/\gamma\rceil} / 2)\\
        &= d \exp(\varepsilon_{i+1}\alpha(1+\gamma/3)^{i-\lceil4/\gamma\rceil} / 6).
    \end{align*}
    Again let $s_k\in[t_{iK}+1,\tilde{t}]$ be the last time-step such that $\Phi_{s_k}\ge \Phi_{t_{iK}}/ e^k$ for $k=1,\dots, r$, where $r \le \ln d + \varepsilon_{i+1}\alpha(1+\gamma/3)^{i-\lceil4/\gamma\rceil} / 6 \le \ln d + 8(\ln T+\ln(8dT/\beta))$. Assume at round $t\in[s_{k-1}+1,s_k]$ $\abovethreshold$ returns $\sigma_t = \top$, the probability that $\sigma_{t'}=\top$ for some $t'\in[t+1,s_k]$ is at most
    \begin{align*}
        {}&\sum_{j\in[d]}\frac{\exp(-\varepsilon_{i+1}\ell_t(j)/2)}{\Phi_t}\cdot\I[\ell_{s_k}(j) \ge (1+\gamma)\cdot\min_{j'\in[d]}\sum_{s=1}^{t_{iK}}x_s[j']+\alpha-2\alpha_{i+1}] \\
        \le& \sum_{j\in[d]}\frac{\exp(-\varepsilon_{i+1}\ell_t(j)/2)}{\Phi_t}\cdot\I[\ell_{s_k}(j) \ge (1+\gamma)\alpha/\gamma\cdot(1+\gamma/3)^{i - \lceil4/\gamma\rceil}+\alpha-2\alpha_{i+1}] \\
        \le& \sum_{j\in[d]}\frac{\exp(-\varepsilon_{i+1}\ell_t(j)/2)}{\Phi_t}\cdot\I[\ell_{s_k}(j) \ge (1+\gamma/2)\alpha/\gamma\cdot(1+\gamma/3)^{i - \lceil4/\gamma\rceil}]
    \end{align*}
    since $2\alpha_{i+1} \le (\gamma/2)\cdot\alpha/\gamma\cdot(1+\gamma/3)^{i - \lceil4/\gamma\rceil}$. Denote the above quantity by $P$ and let $S=\{j:\ell_{s_k}(j)\ge (1+\gamma/2)\alpha/\gamma\cdot(1+\gamma/3)^{i - \lceil4/\gamma\rceil}\}$, we have
    \begin{align*}
        \Phi_{s_k} &= \sum_{j\in[d]}\exp(-\varepsilon_{i+1}\ell_{s_k}(j)/2) \\
        &\le \sum_{j\in S}\exp(-\varepsilon_{i+1}(1+\gamma/2)\alpha/\gamma\cdot(1+\gamma/3)^{i - \lceil4/\gamma\rceil}/2) + \sum_{j\in[d]\setminus S}\exp(-\varepsilon_{i+1}\ell_t(j)/2) \\
        &\le d\exp(-\varepsilon_{i+1}(1+\gamma/2) \alpha/\gamma\cdot(1+\gamma/3)^{i - \lceil4/\gamma\rceil}/2) + (1-P)\Phi_t.
    \end{align*}
    Rearranging the above gives
    \begin{align*}
        P &\le 1 + \frac{d \exp(-\varepsilon_{i+1}(1+\gamma/2)\alpha/\gamma\cdot(1+\gamma/3)^{i - \lceil4/\gamma\rceil}/2)}{\Phi_t} - \frac{\Phi_{s_k}}{\Phi_t} \\
        &\le 1 + 1/e^2 - 1/e \\
        & < 4/5
    \end{align*}
    because $\Phi_{s_k}/\Phi_t\ge 1/e$ and 
    \begin{align*}
        {}&\frac{d \exp(-\varepsilon_{i+1}(1+\gamma/2)\alpha/\gamma\cdot(1+\gamma/3)^{i - \lceil4/\gamma\rceil}/2)}{\Phi_t} \\
        \le&  \frac{d \exp(-\varepsilon_{i+1}(1+\gamma/2)\alpha/\gamma\cdot(1+\gamma/3)^{i - \lceil4/\gamma\rceil}/2)}{\exp(-\varepsilon_{i+1} \alpha/\gamma\cdot(1+\gamma/3)^{i +1 - \lceil4/\gamma\rceil}/2)} \\
        =&d\exp(-\varepsilon_{i+1} \gamma/6\cdot\alpha/\gamma\cdot(1+\gamma/3)^{i - \lceil4/\gamma\rceil}/2) \\
        \le& 1/e^2.
    \end{align*}
    Thus, the probability that $\abovethreshold$ returns at least $K$ $\top$'s during $[t_{iK}+1,\tilde{t}]$ is bounded by $\beta / (4\cdot \ln T / \ln(1+\gamma/3))$ given that $K - 1 \ge 10r + 40\ln(4\cdot (\ln T / \ln(1+\gamma/3) /\beta)$. The result follows by an induction argument, the union bound, and the fact that
    \begin{equation*}
        i - \lceil4/\gamma\rceil\le \log_{1+\gamma/3} T = \frac{\ln T}{\ln(1+\gamma/3)}.
    \end{equation*}

    We now prove the final conclusion. For round $t$ such that $\sigma_t = \bot$, by we have $q_t \le \alpha_i$. Thus,
    \begin{align*}
        \sum_{s=1}^tx_s[a_t] &= \sum_{s=1}^tx_s[a_{t-1}] \\
        &\le (1+\gamma)\cdot\min_{j\in[d]}\sum_{s=1}^tx_s[j]+\alpha.
    \end{align*}
    For $t$ such that $\sigma_t = \top$, we know that $t = t_{i'}$ for some $i'$ and let $i = \lceil i' / K\rceil$. If $i\le \lceil4/\gamma\rceil$, we have
    \begin{align*}
        \sum_{s=1}^tx_s[a_t] &\le \min_{j\in[d]}\sum_{s=1}^tx_s[j] + 2\ln(4dT/\beta)/\varepsilon_i \\
        &\le (1+\gamma)\cdot\min_{j\in[d]}\sum_{s=1}^tx_s[j] +\alpha
    \end{align*}
    as $2\ln(4dT/\beta)/\varepsilon_i\le \alpha$. Otherwise if $i > \lceil4/\gamma \rceil$, we have
    \begin{align*}
        \sum_{s=1}^tx_s[a_t] &\le \min_{j\in[d]}\sum_{s=1}^tx_s[j] + 2\ln(4dT/\beta)/\varepsilon_i \\
        &\le (1+\gamma)\cdot \min_{j\in[d]}\sum_{s=1}^tx_s[j] + \alpha
    \end{align*}
    since
    \begin{align*}
        \min_{j\in[d]}\sum_{s=1}^tx_s[j] &\ge \min_{j\in[d]}\sum_{s=1}^{t_{(i-1)K}}x_s[j] \\
        &\ge \alpha / \gamma\cdot(1+\gamma/3)^{i -1 - \lceil4/\gamma\rceil}\\
        &\ge 1/\gamma\cdot 2\ln(4dT/\beta)/\varepsilon_i.
    \end{align*}
\end{proof}

\subsection{Proof of Lemma~\ref{lem:consensus_approximate}}
\label{app:consensus_approximate}
Given a matrix $X$, denote by $X_{(i)}$ the matrix obtained by removing the $i$-th row of $X$. A column of $X$ is called a consensus column if all entries in this column are the same. When we say an algorithm that takes as input a matrix is differentially private, we view each row of the matrix as an item. The following lemma provides a lower bound on the sample complexity of identifying the signs of a large portion of columns under approximate DP.

\begin{lemma}[\citep{talwar2015nearly,asi2023private}]
\label{lem:consensus_approximate_original}
    Let $p = 1000m^2$ and $n = m / \log m$ for sufficiently large $m$. 
    There exists a matrix $X\in\{-1,1\}^{(n+1)\times p}$ such that:
    \begin{itemize}
        \item Let $W_i$ be the set of indices of consensus columns in $X_{(i)}$. Then $\lvert W_i\rvert \ge 0.999p$.
        \item Any algorithm $\A:\{-1,1\}^{n\times p}\to \{-1,1\}^p$ such that 
        \begin{equation*}
            \Pr_{\A}\left[\sum_{j\in W_i}\I[\A(X_{(i)})[j]=X_{1,j}]\ge 3p/4\right]\ge 2/3
        \end{equation*} 
        for all $i\in[n+1]$ is not $(1,o(1/n^2))$-differentially private.
    \end{itemize}
\end{lemma}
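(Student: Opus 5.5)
The plan is to derive the lemma from \emph{robust fingerprinting codes} (Bun, Ullman and Vadhan, building on Tardos). We construct a random matrix $X$ such that, with positive probability, every leave-one-out matrix $X_{(i)}$ has at least $0.999p$ consensus columns. Then we show that any algorithm $\A$ meeting the accuracy requirement, composed with the tracing algorithm, violates $(1,o(1/n^2))$-DP. Fixing a good outcome of the randomness then gives the deterministic $X$ in the statement.

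\textbf{Step 1: the code.} Start from a Tardos-style code for $n+1$ users of length $\ell = \Theta(n^2\log n)$. For each code column, draw a bias $q_j$ from the truncated arcsine distribution and fill the column with i.i.d.\ $\pm1$ entries of bias $q_j$. Then pad with $p-\ell$ additional columns, each a consensus column with a uniformly random sign. Finally apply a uniformly random permutation to all columns. With $p = 1000m^2$ and $n = m/\log m$ we have $\ell \ll 10^{-3}p$. So even if no code column is a consensus column, each $X_{(i)}$ has $\lvert W_i\rvert \ge p-\ell \ge 0.999p$. This settles the first bullet.

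\textbf{Step 2: accuracy forces marking of the hidden columns.} Suppose $\A$ agrees with the consensus sign on at least $3p/4$ of the columns in $W_i$. Because of the random permutation, $\A$, which sees only $X_{(i)}$, cannot distinguish padding consensus columns from code columns that happen to be consensus in $X_{(i)}$. A symmetrization argument then shows that, in expectation, $\A$ also agrees with the consensus on a constant fraction (bounded away from $1/2$-type failure) of the consensus \emph{code} columns. Here the count is $3/4$ against the $0.999$ fraction. This is exactly the weakened marking condition tolerated by the Bun--Ullman--Vadhan robust tracer, which handles a small constant fraction of violated consensus positions.

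\textbf{Step 3: tracing contradicts privacy.} The robust tracing guarantee gives two properties. With probability $\Omega(1)$ it accuses some row $i^*\in[n+1]$ that was actually in the input. Any specific row $i$ that was \emph{not} given to the adversary is accused only with probability $o(1/n^2)$. Run $\A$ on the full rows except one, and compare with running on $X_{(i)}$ for each $i$. By the pigeonhole principle some $i$ is accused with probability $\Omega(1/n)$ when present. By $(1,\delta)$-DP applied to the neighboring pair (replace row $i$ by a dummy row), it is then accused with probability at least $e^{-1}\Omega(1/n)-\delta$ when absent. This contradicts the soundness bound once $\delta = o(1/n^2)$, so $\A$ cannot be $(1,o(1/n^2))$-DP. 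The guarantees hold over the joint randomness of code and algorithm, so averaging shows some fixed $X$ (together with its fixed permutation and biases) works.

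\textbf{Main obstacle.} I expect Step 2 to be the crux: translating the coarse $3p/4$-agreement on $W_i$ into the robust marking condition on the hidden code columns. The naive count only gives agreement on roughly $3/4 - 0.001$ of the columns, and the padding dominates. The plan here is to use the exchangeability induced by the random permutation. Conditioned on the view $X_{(i)}$, each consensus column is equally likely to be any of the consensus positions, so the fraction of errors on code columns equals, in expectation, the overall error fraction $\le 1/4$. Markov's inequality then gives a constant-probability event on which the error fraction on code consensus columns is at most the robustness parameter; this is what forces the choice of constants. If the robustness parameter of the existing code is smaller than what $1/4$ allows, I would first amplify accuracy by repeating columns, or invoke the stronger robust-code variant used by Talwar et al.
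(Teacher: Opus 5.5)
The paper gives no proof of this lemma; it imports it from Talwar et al.\ and Asi et al. Your route (a padded, randomly permuted Tardos code plus tracing) is the natural one, but two steps do not go through as written. The first and more basic one is the final derandomization, ``fixing a good outcome of the randomness gives the deterministic $X$.'' No argument can deliver this, because the statement with one fixed $X$ working against all algorithms is false as literally quantified. Once $X$ is fixed before $\A$ is chosen, take the algorithm that ignores its input and always outputs the first row of $X$. It agrees with $X_{1,j}$ on every $j\in W_i$ for every $i\in[n+1]$, so it meets the accuracy requirement with probability $1$, and it is $(0,0)$-differentially private. Your own Step~2 shows why randomness is essential: the exchangeability is over a permutation and padding that $\A$ does not know, which is meaningless if $\A$ may depend on $X$. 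Averaging over the code for a \emph{fixed} $\A$ only yields an $X$ (and an $i$) depending on $\A$. What you can prove is therefore: for every $(1,o(1/n^2))$-DP $\A$, some $X$ in the support of the code distribution makes accuracy fail for some $i$. This is also all the paper uses. In the proof of Lemma~\ref{lem:consensus_approximate} the algorithm is assumed accurate on every input with at least $0.999d$ consensus columns, so $X$ may be chosen after $\B$.

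Second, Step~2 is not closed by the tools you name. Accuracy allows $\lvert W_i\rvert-3p/4\approx p/4$ wrong consensus columns. Since $\A$ sees each consensus column's sign, it can put all of these errors on the $+1$ columns. That gives an error rate near $1/2$ within that sign class, and hence, by your exchangeability, on the hidden $+1$ code columns. A tracer certified only against a \emph{small} constant fraction of errors then gives nothing. Markov needs the threshold to exceed the conditional mean, which is about $1/4$. Repeating columns with majority decoding cannot amplify a per-class error rate of $1/2$.

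What does work is a direct argument via the fingerprinting identity. Take a hidden code column whose arcsine bias is truncated to $[\tau,1-\tau]$, and let $G(q)$ be the expected $\pm1$ output on it at bias $q$. Its expected contribution to the coalition's total score is proportional to $G(1-\tau)-G(\tau)$. At the endpoints the column is a consensus column with probability close to $1$ and is indistinguishable from padding. Writing $\beta_\pm$ for the error rates in the two sign classes, this gives $G(1-\tau)-G(\tau)\approx 2(1-\beta_+-\beta_-)$. Each class has about $\lvert W_i\rvert/2$ columns, so $\beta_++\beta_-\lesssim 2(\lvert W_i\rvert-3p/4)/\lvert W_i\rvert\le 1/2+o(1)$. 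The per-column correlation therefore stays bounded away from $0$, and the standard completeness and soundness concentration with $\ell=O(n^2\log n)$ finishes the argument. Alternatively, a robust code certified against any error fraction $\beta<1/2$, rather than a small constant, would let your Markov step go through.
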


We use the above lemma to prove Lemma~\ref{lem:consensus_approximate}.

\begin{proof}[Proof of Lemma~\ref{lem:consensus_approximate}]
    Let $X'\in\{-1,+1\}^{(n'+1)\times p}$ be the matrix in Lemma~\ref{lem:consensus_approximate_original} with $p = d$. We have $n' = \Theta(\sqrt{d}/ \log d)$. Let $k = \lfloor 1 / \varepsilon\rfloor$. For every $i\in[n'+1]$, we construct a matrix $X_{(i)}\in\{-1,+1\}^{n\times d}$ by concatenating $k$ copies of $X_{(i)}'$. Consider an algorithm $\B:\{-1,+1\}^{n'\times d}\to \{-1,1\}^d$ that first concatenates $k$ copies of the input matrix and then runs $\A$ on the resulting matrix. Note that $X_{(i)}'$ and $X_{(i)}$ have the same set of consensus columns $W_i$. We have 
    \begin{equation*}
        \sum_{j\in W_i}\I[\B(X_{(i)}')[j]=X_{1,j}'] = \sum_{j\in W_i}\I[\A(X_{(i)})[j] = X_{1,j}].
    \end{equation*}
    By Lemma~\ref{lem:consensus_approximate_original}, $\B$ is not $(1,o(1/n^2))$-differentially private. Now, suppose $\A$ is $(\varepsilon,\delta)$-differentially private for $\delta=o(1/n^3)$. Group privacy implies that $\B$ is $(k\varepsilon,\frac{e^{k\varepsilon} - 1}{e^\varepsilon - 1}\delta)$-differentially private. However, we have $k\varepsilon\le 1$ and $\frac{e^{k\varepsilon} - 1}{e^\varepsilon - 1}\delta = O(k\delta) = o(1/n^2)$, a contradiction.
\end{proof}

\subsection{Proof of Lemma~\ref{lem:consensus_pure}}
\label{app:consensus_pure}

\begin{proof}[Proof of Lemma~\ref{lem:consensus_pure}]
    By standard constructions of error correcting codes, there exists $m=2^{\Theta(d)}$ vectors $v_1,\dots,v_m$ in $\{-1,1\}^d$ such that
    \begin{equation*}
        \dis(v_i,v_j) = \sum_{k\in[d]}\I[v_i[k]\neq v_j[k]] \ge 9d/20.
    \end{equation*}
    For every $i\in[m]$, construct a matrix $X_i\in\{-1,1\}^{n\times d}$ by stacking $n$ copies of $v_i$. Thus, all columns in $X_i$ are consensus columns. By the utility guarantee of $\A$, it holds with probability at least $2/3$ that $\dis(\A(X_i), v_i)\le d/5$, which further implies that
    \begin{equation*}
        \dis(\A(X_i),v_j) \ge \dis(v_i,v_j) - \dis(A(X_i),v_i)\ge 9d/20 - d/5 = d/4
    \end{equation*}
    by the triangle inequality. Define $O_i = \{v\in\{-1,1\}^d:\dis(v,v_i)\le d/5\}$. We have $\Pr[\A(X_i)\in O_i]\ge 2/3$.

    Note that $O_1,\dots,O_m$ are disjoint, we have
    \begin{align*}
        1 &\ge \sum_{i\in[m]}\Pr[\A(X_1)\in O_i] \\
        &\ge e^{-\varepsilon n}\sum_{i\in[m]} \Pr[\A(X_i)\in O_i] \\
        &\ge e^{-\varepsilon n}\cdot m\cdot 2/3,
    \end{align*}
    where the second inequality uses the group privacy property of pure DP. Rearranging the above gives $n\ge \ln(2m/3)/\varepsilon = \Omega(d/\varepsilon)$.
\end{proof}
\section{Discussion on Negative Inputs}
\label{app:negative}

If the inputs can take values in $[-1,1]$, we can show that the additive error lower bound proved by~\citet{jain2023price} continues to hold even when relative error is permitted. In their construction, the following operation is performed repeatedly:
\begin{itemize}
    \item Append multiple copies of some vector $v_i\in[0,1]^d$.
    \item Append the same number of copies of $\overline{v_i} = 1-v_i$.
\end{itemize}
This operation increases the cumulative sum along every attribute. However, when inputs can be negative, we can instead use $\overline{v_i}=-v_i$. This modification ensures that the quantity of interest remains small over time. Hence, the relative error term is always negligible and their additive error bound still applies.

%%%%%%%%%%%%%%%%%%%%%%%%%%%%%%%%%%%%%%%%%%%%%%%%%%%%%%%%%%%%%%%%%%%%%%%%%%%%%%%
%%%%%%%%%%%%%%%%%%%%%%%%%%%%%%%%%%%%%%%%%%%%%%%%%%%%%%%%%%%%%%%%%%%%%%%%%%%%%%%

\end{document}